\newtheorem{prtl}{Protocol}
\newtheorem{fact}{Fact}
\DeclareMathOperator{\tr}{tr}
\newcommand{\cR}{{\mathcal{R}}}
\newcommand{\cC}{{\mathcal{C}}}
\newcommand{\cD}{{\mathcal{D}}}
\newcommand{\cE}{{\mathcal{E}}}
\newcommand{\cF}{{\mathcal{F}}}
\newcommand{\cM}{{\mathcal{M}}}
\newcommand{\cH}{{\mathcal{H}}}
\newcommand{\cK}{{\mathcal{K}}}
\newcommand{\cS}{{\mathcal{S}}}
\newcommand{\cU}{{\mathcal{U}}}
\newcommand{\sA}{{\mathscr{A}}}
\newcommand{\mi}{{\mathrm{i}}}
\DeclareMathOperator{\bE}{\mathbb{E}}
\DeclareMathOperator{\bC}{\mathbb{C}}
\DeclareMathOperator{\bN}{\mathbb{N}}
\DeclareMathOperator{\bZ}{\mathbb{Z}}
\DeclareMathOperator{\bD}{\mathbb{D}}
\newcommand{\fH}{{\sf H}}
\newcommand{\fT}{{\sf T}}
\newcommand{\fX}{{\sf X}}
\newcommand{\fY}{{\sf Y}}
\newcommand{\fZ}{{\sf Z}}
\newcommand{\fI}{{\sf I}}
\newcommand{\fToffoli}{{\sf Toffoli}}
\newcommand{\fCN}{{\sf CNOT}}
\newcommand{\fSWAP}{{\sf SWAP}}
\newcommand{\fEn}{{\mathsf{Enc}}}
\newcommand{\fEt}{{\mathsf{Et}}}
\newcommand{\fDc}{{\mathsf{Dec}}}
\newcommand{\fEv}{{\mathsf{Eval}}}
\newcommand{\fKg}{{\mathsf{KeyGen}}}
\newcommand{\fVer}{{\mathsf{Ver}}}
\newcommand{\fneg}{{\mathsf{negl}}}
\newcommand{\fpoly}{{\mathsf{poly}}}
\newcommand{\fAdv}{{\mathsf{Adv}}}
\newcommand{\fBQC}{{\mathsf{BQC}}}
\newcommand{\fCL}{{\mathsf{CL}}}
\newcommand{\fTAB}{{\mathsf{TAB}}}
\newcommand{\fEvTAB}{{\mathsf{EvalTAB}}}
\newcommand{\fGBC}{{\mathsf{GBC}}}
\newcommand{\fKDMP}{{\mathsf{KDMP}}}
\newcommand{\fQKDM}{{\mathsf{QKDM}}}
\newcommand{\fOracle}{{\mathsf{Oracle}}}
\newcommand{\fShuffle}{{\mathsf{Shuffle}}}
\begin{document}
\title{Delegating Quantum Computation in the Quantum Random Oracle Model}
\author{Jiayu Zhang\inst{1}\thanks{Supported in part by NSF awards IIS-1447700 and AF-1763786}}
\institute{Boston University (\email{jyz16@bu.edu})}

\date{\today}
\maketitle
\begin{abstract}
	A delegation scheme allows a computationally weak client to use a server's resources to help it evaluate a complex circuit without leaking any information about the input (other than its length) to the server. In this paper, we consider delegation schemes for quantum circuits, where we try to minimize the quantum operations needed by the client. We construct a new scheme for delegating a large circuit family, which we call ``C+P circuits''. ``C+P'' circuits are the circuits composed of Toffoli gates and diagonal gates. Our scheme is non-interactive, requires small amount of quantum computation from the client (proportional to input length but independent of the circuit size), and can be proved secure in the quantum random oracle model, without relying on additional assumptions, such as the existence of fully homomorphic encryption. In practice the random oracle can be replaced by an appropriate hash function or block cipher, for example, SHA-3, AES.\par
	This protocol allows a client to delegate the most expensive part of some quantum algorithms, for example, Shor's algorithm. The previous protocols that are powerful enough to delegate Shor's algorithm require either many client side quantum operations or the existence of FHE. The protocol requires asymptotically fewer quantum gates on the client side compared to running Shor's algorithm locally.\par
	To hide the inputs, our scheme uses an encoding that maps one input qubit to multiple qubits. We then provide a novel generalization of classical garbled circuits (``reversible garbled circuits'') to allow the computation of Toffoli circuits on this encoding. We also give a technique that can support the computation of phase gates on this encoding.\par
	To prove the security of this protocol, we study key dependent message(KDM) security in the quantum random oracle model. KDM security was not previously studied in quantum settings.
\keywords{Quantum Computation Delegation\and Quantum Cryptography\and Garbled Circuit\and Quantum Random Oracle\and KDM Security}
\end{abstract}
\section{Introduction}
\footnotetext{The full version of this paper can be found at \url{https://arxiv.org/pdf/1810.05234.pdf}}
In computation delegation, there is a client holding secret data $\varphi$ and the description of circuit $C$ that it wants to apply, but it doesn't have the ability to compute $C(\varphi)$ itself. A delegation protocol allows the client to compute $C(\varphi)$ with the help from a more computationally powerful server. The delegation is \emph{private} if the server cannot learn anything about the input $\varphi$ during the protocol.  After some communications, the client can decrypt the response from the server and get the computation result (see Figure 1.) This problem is important in the quantum setting: it's likely that quantum computers, when they are built, will be expensive, and made available as a remote service. If a client wants to do some quantum computation on secret data, a quantum computation delegation protocol is needed.\par
\begin{figure}[H]
\centering
\begin{tikzpicture}
	\tikzstyle{BC} = [
    draw,
    rectangle,
    node distance=10pt,
    minimum width=2cm,
    minimum height=1cm,
    text width=2cm,
    align=center,
  ]
  \node[left] (ink) at (0,0) {\begin{tabular}{r}description of circuit $C$\\$\varphi$\end{tabular}};
  \node[BC] (ink3) at (2,0) {Client};
  \node[BC] (inks) at (5,0) {(Quantum) Server};
  \draw[edge] (ink) -- (ink3);
  \draw[edge] ([yshift=8pt] ink3.east) -- ([yshift=8pt] inks.west);
  \draw[edge] ([yshift=-8pt] inks.west) -- ([yshift=-8pt] ink3.east);
  \node (ink4) at (2,-1.3) {$C(\varphi)$};
  \draw[dashed, ->] (ink3) -- (ink4);
  \node[text width = 4cm, align=center] (ink5) at (5,-1.4) {Nothing about $\varphi$ can be retrieved (efficiently)};
  \draw[dashed, ->] (inks) -- (ink5);
\end{tikzpicture}
\caption{Delegation of (quantum) computation}
\end{figure}
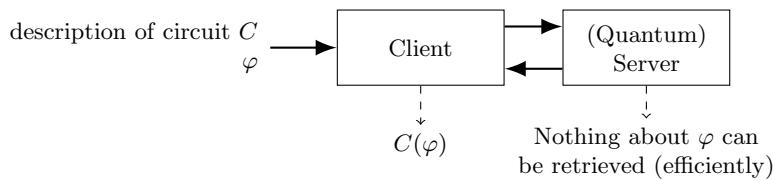

Delegation of computation is a central problem in modern cryptography, and has been studied for a long time in classical settings. Related works include multiparty computation, fully homomorphic encryption(FHE), etc. In the study of delegation, there are two key aspects: privacy and authenticity. This paper will focus on privacy.\par
We want the delegation protocol to be useful, efficient and secure. Previous work falls into two classes: some protocols have information-theoretical security, but they either can only support a small circuit class or require huge client side quantum resources (including quantum memories, quantum gates and quantum communications); other protocols rely on classical fully homomorphic encryption(FHE). This raises the following question:
\begin{center}
	\emph{Is it possible to delegate quantum computation for a large circuit family, with small amount of quantum resources on the client side, without assuming classical FHE?}
\end{center}
In the classical world, Yao's garbled circuit answers this question. Garbled circuit is also a fundamental tool in many other cryptographic tasks, like multiparty computation and functional encryption.
\paragraph{Note}When designing quantum cryptographic protocols, one factor that we care about is the ``quantum resources'' on the client side. The ``quantum resources'' can be defined as the sum of the cost of the following: (1)the size of quantum memory that the client needs; (2)the number of quantum gates that the client needs to apply; (3)the quantum communication that the client needs to make. Note that if the input (or computation, communication) is partly quantum and partly classical, we only consider the quantum part. Since the classical part is usually much easier to implement than the quantum part, as long as the classical part is polynomial, it's reasonable to ignore it and only consider the complexity of quantum resources. And we argue that it's better to consider the ``client side quantum resources'' instead of considering only the quantum memory size or quantum gates: on the one hand, we do not know which type of quantum computers will survive in the future, so it's better to focus on the cost estimate that is invariant to them; on the other hand, there may be some way to compose the protocol with other protocols to reduce the memory size, or simplify the gate set. 
\subsection{Our Contributions}
In this paper we develop a non-interactive (1 round) quantum computation delegation scheme for ``C+P circuits'', the circuits composed of Toffoli gates and diagonal gates. We prove the following:\par
\begin{theorem}
It's possible to delegate C+P circuits non-interactively and securely in the quantum random oracle model, and the client requires $O(\eta N_q+N_q^2)$ quantum $\fCN$ gates as well as polynomial classical computation, where $N_q$ is the number of qubits in the input and $\eta$ is the security parameter.
\end{theorem}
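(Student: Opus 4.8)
The plan is to exhibit an explicit, non‑interactive three‑algorithm protocol — a client‑side encoding $\fEn$, a server‑side evaluation $\fEv$, and a client‑side decoding $\fDc$ — and then establish correctness, security, and the claimed gate count in turn. First I would fix the encoding: the client samples all keys classically and maps each logical input qubit into a bundle of roughly $O(\eta+N_q)$ qubits by a classically‑keyed \emph{linear} (CNOT‑only) embedding, so that a logical $\ket b$ becomes a state tied to a pair of random ``wire labels'' (one for $b=0$, one for $b=1$) under a quantum one‑time pad, with no classical pointer bit that could leak which branch is active. The client then builds, entirely classically, a ``reversible garbled circuit'' for $C$: for every Toffoli gate a garbled table, encrypted under the random oracle and keyed by the incoming wire labels, that lets the server map the three incoming label bundles to the three outgoing ones; the diagonal gates are handled by a separate phase‑gate gadget (a small amount of client‑provided auxiliary data that, exploiting the linearity of the encoding, lets the server apply a logical $\mathrm{diag}(1,e^{i\theta})$ blindly as consistent phases spread across a bundle). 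The client sends the encoded state together with all garbled tables; the server evaluates gate by gate and returns the output bundles; the client decodes with its keys.

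Next I would prove correctness. The subtle point is that $\fEv$ must be run as a \emph{unitary} on a superposition of label bundles (the amplitudes being those of the logical state), so no query outcome or intermediate label may leak or it would collapse the computation, and every piece of garbage produced by a random‑oracle lookup must be uncomputed or end up in tensor product with the logical output. I would argue that keeping one auxiliary register per wire, together with the linear structure of the encoding, lets the server uncompute all garbage, so the composite map is exactly $C$ on the encoded qubits up to a known label/Pauli relabelling that $\fDc$ inverts. For the diagonal layers I would verify that the phase‑gadget realizes the intended logical phase on the encoded bundle and that it commutes correctly with the surrounding Toffoli layers, so that the full C+P circuit is implemented.

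Then comes the security proof, which I expect to be the crux. The goal is to show that a server making at most $\fpoly$ many \emph{quantum} queries to the random oracle cannot distinguish $\fEn(\varphi)$ plus the garbled tables for $C$ from $\fEn(\varphi')$ for any $\varphi'$ of the same length. I would run this in two stages: a Yao‑style hybrid argument that replaces, gate by gate, the ``inactive'' garbled‑table rows by encryptions of fixed junk, reducing to semantic security of the random‑oracle encryption against quantum queries, followed by a final step showing the encoding itself is hiding. The hard part will be that the phase‑gate gadget and the chaining of wire keys produce ciphertexts whose plaintexts depend on the secret keys themselves, so ordinary semantic security does not suffice — one needs key‑dependent‑message security of a random oracle against a quantum‑accessible adversary, which has not been analyzed before. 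I would establish this with a one‑way‑to‑hiding / compressed‑oracle argument: as long as no query ``hits'' a secret key — an event of negligible amplitude, which must be bounded by O2H even though the message being encrypted is itself a function of that key — the adversary's state is independent of $\varphi$. Carrying the circular key dependence through the O2H reprogramming bound is the delicate step and is where most of the effort goes.

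Finally, the gate count. Key sampling, assembling and encrypting all garbled tables, and all bookkeeping are classical and polynomial, so they do not enter the quantum cost. The client's quantum work is exactly (i) preparing the encoding — for each of the $N_q$ input qubits, CNOT‑ing it into its length‑$O(\eta+N_q)$ bundle, costing $O(\eta+N_q)$ CNOTs per qubit and hence $O(\eta N_q+N_q^2)$ overall — and (ii) the symmetric decoding step in $\fDc$, of the same order. Summing gives $O(\eta N_q+N_q^2)$ quantum $\fCN$ gates plus polynomial classical computation, as claimed.
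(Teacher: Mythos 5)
Your overall route --- a keyed $\fCN$-only encoding $\ket{b}\mapsto\ket{k_b}$ into bundles of $O(\eta+N_q)$ qubits, a ``reversible'' garbling of each Toffoli gate, a separate phase gadget, and a reduction to key-dependent-message security of random-oracle encryption against quantum queries --- is the same as the paper's. But two steps you gloss over are exactly where the work (and the theorem's parameters) live. First, the uncomputation. The forward garbled table only gives the server the unitary $\ket{k_{in}}\ket{c}\mapsto\ket{k_{in}}\ket{k_{out}\oplus c}$; the linearity of the input encoding does nothing to remove $\ket{k_{in}}$ afterwards, because decrypting the table is a nonlinear random-oracle computation and the server does not know which of the eight label triples it holds in each branch of the superposition. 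The paper's fix is a second, \emph{backward} table per gate that encrypts the input labels under the output labels, so the server can realize $\ket{k_{in}}\ket{k_{out}}\mapsto\ket{0}\ket{k_{out}}$ as a second garbled-table evaluation. Without this extra table (or an equivalent mechanism) the evaluation leaves the logical output entangled with the input labels and correctness fails.

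Second, and more importantly, your security argument does not actually reach quantum inputs. A Yao-style hybrid that replaces the ``inactive'' rows of each table by junk presupposes a definite active label on every wire, i.e.\ a classical input; for a superposition of inputs every row is active in some branch, so there is no well-defined hybrid to run. The paper proves only the classical-input (IND-CPA) statement this way, and then lifts it to superpositions by expanding the input in the computational basis and bounding each of the up-to-$2^{2N_q}$ cross terms $\bra{x_j}\cdots\ket{x_i}$ via a separate lemma stating that, given the tables and the labels encoding basis state $i$, the labels encoding $j\neq i$ are hard to compute (probability $\fpoly(q,N,L)\,2^{-0.5\kappa}$). Summing the cross terms costs a factor $2^{N_q}$, which is precisely why the proof needs $\kappa=\eta+4N_q$ rather than $\kappa=O(\eta)$ --- and that is the only reason the client's cost is $O(\eta N_q+N_q^2)$ instead of $O(\eta N_q)$ (the latter being only a conjecture in the paper). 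Your proposal posits bundles of length $O(\eta+N_q)$ and so lands on the right gate count, but nothing in your argument forces that key length; the O2H-style KDM bound alone would suggest $\kappa=O(\eta)$ suffices, which is exactly the step the paper cannot prove.
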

We will give a more formal statement in Section 6. The client's quantum circuit size can in fact be bounded by $O(\kappa N_q)$ where $\kappa$ is the key length of the cryptographic primitives we use. Our current proof of security requires setting $\kappa = \eta + 4N_q$ where $\eta$ is the actual security parameter. However, we conjecture the same protocol can be proven secure for $\kappa =O(\eta)$, leading to the following conjecture:
\begin{conjecture}
It's possible to delegate C+P circuits non-interactively and securely in the quantum random oracle model, using the same protocol as Theorem 1, and the client side quantum resources are $O(\eta N_q)$ $\fCN$ gates, where $N_q$ is the number of qubits in the input and $\eta$ is the security parameter.\end{conjecture}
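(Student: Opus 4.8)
The plan is to prove the conjecture without touching the protocol at all, and instead to sharpen the security reduction so that the key length $\kappa$ no longer has to absorb the additive $4N_q$ term of Theorem~1. Concretely, I would start from the proof of Theorem~1, isolate every step whose security loss scales like $2^{\Omega(N_q)}$ rather than $\fpoly(N_q)$, and replace it. I expect exactly one offending step: the reduction from the privacy of the full scheme to the (quantum) $\fKDM$/garbled-circuit hiding of the underlying primitive, where the current argument effectively enumerates, or ``guesses'', the joint encoding of all $N_q$ input qubits at once --- an object living in a space of size $2^{\Theta(N_q)}$ --- and therefore pays $2^{\Theta(N_q)}$, which is exactly what forces $\kappa = \eta + 4N_q$ to compensate.

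The fix I would attempt has two ingredients. First, replace the monolithic ``guess-all-keys'' simulation by an on-the-fly (lazy) simulator that never instantiates more than a constant number of trapdoor keys simultaneously, together with a hybrid argument of length $O(N_q)$ that re-encodes one logical qubit per step; each hybrid step should then cost only $\fpoly(q)\cdot 2^{-\Omega(\kappa)}$, where $q$ bounds the number of oracle queries, so the total loss is $\fpoly(N_q,q)\cdot 2^{-\Omega(\kappa)}$ and $\kappa = O(\eta + \log(N_q q)) = O(\eta)$ suffices. Second, for the quantum-random-oracle part, I would apply the one-way-to-hiding (O2H) lemma once, globally, to the oracle punctured at all reprogrammed points at the same time, rather than once per qubit or per gate, so as not to compound the square-root O2H loss. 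If both succeed, the client's circuit is $O(\kappa N_q) = O(\eta N_q)$ $\fCN$ gates by the remark following Theorem~1, which is the claimed bound.

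The main obstacle is that the encoding spreads one logical qubit over $\Theta(\kappa)$ physical qubits, so a dishonest server can query the random oracle on a superposition over all $2^{N_q}$ logical inputs and over exponentially many label combinations; a naive hybrid that ``reads off'' which branch the adversary occupies collapses this superposition and reintroduces the $2^{N_q}$ factor. Avoiding this requires a superposition-respecting reduction --- most plausibly via Zhandry's compressed-oracle technique --- that reprograms the oracle coherently and argues indistinguishability of the garbled tables without ever measuring the input register. The genuine difficulty is that garbled circuits induce a key-dependent-message cycle (each key encrypts, through the oracle, functions of the other keys), so one must carry the compressed-oracle database through a chain of encryptions whose plaintexts depend on the very keys being replaced; making this chain simulable under the compressed oracle, while respecting the reversible-garbling bookkeeping, is where I expect the real work to lie --- and where a genuinely new idea may be needed, which is presumably why the authors state this only as a conjecture.
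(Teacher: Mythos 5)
There is no proof to compare against: the statement you are asked to prove is stated in the paper only as a conjecture, and the paper's own security analysis (Theorem~\ref{thm:qssec}) deliberately stops at $\kappa=\eta+4N_q$ precisely because the authors could not remove the $2^{N_q}$ loss. Your submission is a research plan rather than a proof, and you concede as much in your last sentence; as it stands it does not establish the conjecture.

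Beyond that, your diagnosis of where the $2^{\Theta(N_q)}$ factor enters is not quite right, and this matters for whether your proposed fix can work. The loss in the paper's argument does not come from a simulator that ``guesses all keys at once.'' It comes from writing the quantum input as $\ket{\varphi}=\sum_i c_i\ket{i}\ket{\varphi_i}$ and expanding the adversary's output into diagonal terms plus the off-diagonal block $\sum_{i\neq j}c_ic_j^\dagger\ket{x_i}\bra{x_j}$; each off-diagonal term is bounded individually by Lemma~\ref{lem:mcluncp} (``given the keys for input $i$, the keys for input $j$ are hard to compute''), and Cauchy--Schwarz over the up to $2^{2N_q}$ pairs produces the $2^{N_q}$ factor. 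A per-qubit hybrid of length $O(N_q)$ does not obviously evade this: the input qubits may be arbitrarily entangled with each other and with the reference system, so ``re-encoding one logical qubit per step'' is not a well-defined classical conditioning, and each hybrid step still carries a coherent superposition over the remaining $N_q-1$ qubits whose off-diagonal terms must be controlled in aggregate. What is actually needed is a bound on the \emph{operator norm} of the off-diagonal block of the adversary's channel (so that the sum over cross terms is absorbed), or a genuinely superposition-respecting simulation of the reversible garbled tables --- your compressed-oracle suggestion points in the right direction, but the key-dependent-message cycle created by the forward and backward tables is exactly where no known technique applies, and you have not supplied one. So the genuine gap is the entire core of the argument: the conjecture remains open.
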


We argue that our protocol is important for three reasons: (1)The client only needs small quantum resources. Here we say ``small'' to mean the quantum resources only depend on the key length and the input size, and are independent of the circuit size. (2)Its security can be proven in the quantum random oracle model, without assuming some trapdoor one-way function. Many protocols before, for example, \cite{QFHEPoly}\cite{Mahadev2017ClassicalHE} are based on classical FHE and therefore rely on some kinds of lattice cryptographic assumptions, for example, LWE assumption. Our protocol is based on the quantum random oracle (therefore based on hash functions in practice), and this provides an alternative, incomparable assumption on which we can base the security of quantum delegation. (3)Our protocol introduces some new ideas and different techniques, which may be useful in the study of other problems.\par
Our protocol can be applied to Shor's algorithm. The hardest part of Shor's algorithm is the Toffoli part applied on quantum states, so the client can use this protocol securely with the help of a remote quantum server. 
\begin{corollary}
It's possible to delegate Shor's algorithm on input of length $n$ within one round of communication in the quantum random oracle model, where the client requires ${O}(\eta n+n^2)$ $\fCN$ gates plus $\tilde O(n)$ quantum gates. Assuming Conjecture 1, the number of $\fCN$ gates is $O(\eta n)$. 
\end{corollary}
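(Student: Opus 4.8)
The plan is to open up Shor's algorithm, isolate its one genuinely expensive quantum block --- modular exponentiation --- observe that this block is a C+P circuit, delegate exactly that block via Theorem 1, and have the client perform the remaining light-weight steps itself.

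I would start by recalling the structure of the period-finding subroutine. After polynomial classical preprocessing (choosing a base $a$ coprime to the modulus $N$, fixing register sizes, laying out a reversible modular-exponentiation circuit), the quantum computation is: (i) prepare $\frac{1}{\sqrt{2^{2n}}}\sum_{x}\ket{x}\ket{0}$ on $O(n)$ qubits using $2n$ Hadamard gates; (ii) apply a reversible circuit $\ket{x}\ket{0}\mapsto\ket{x}\ket{a^x \bmod N}$, i.e.\ a $\fpoly(n)$-size Toffoli circuit on $N_q=O(n)$ qubits with all scratch ancillas uncomputed; (iii) apply the (approximate) inverse quantum Fourier transform to the $x$-register and measure; (iv) do polynomial classical postprocessing via continued fractions.

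The two observations that make the corollary go through are: step (ii) is a Toffoli circuit, hence a C+P circuit, on $N_q=O(n)$ input qubits, so Theorem 1 lets the client delegate it non-interactively in the quantum random oracle model using $O(\eta N_q + N_q^2)=O(\eta n + n^2)$ quantum $\fCN$ gates plus polynomial classical computation and no further assumption; and steps (i) and (iii) cost the client only $\tilde O(n)$ genuinely-quantum gates --- $2n$ Hadamards for state preparation, and, using the approximate QFT together with its semiclassical (measurement-based) implementation, $O(n\log n)$ Hadamards and classically-controlled phase rotations plus $O(n)$ measurements for the inverse QFT, introducing no extra ancillas. Since the delegation of step (ii) is one round and everything else is local to the client, the whole procedure is one round. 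To additionally hide $N$ and $a$ from the server, I would feed $(N,a)$ in the computational basis as part of the quantum input to a \emph{universal} modular-exponentiation circuit whose description is public, so that the privacy guarantee inherited from Theorem 1 leaks nothing about $N$, $a$, or the intermediate arithmetic (only the public circuit description and the padded input length). Summing gives $O(\eta n+n^2)$ $\fCN$ gates plus $\tilde O(n)$ other quantum gates; under Conjecture 1 the delegation cost improves to $O(\eta N_q)=O(\eta n)$, which absorbs the $\tilde O(n)$ remainder and gives the stated $O(\eta n)$ bound.

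The part I expect to need the most care is the interface between the delegated block and the client's local operations: I would have to check that the decoding step of the protocol of Theorem 1 returns the $x$-register in standard, unencoded form within the claimed client cost, so that it can be fed directly into the local inverse QFT, and that discarding the $\ket{a^x\bmod N}$ register (rather than measuring it) is consistent with the protocol's output format. A secondary, more routine point is to write the reversible modular-exponentiation circuit so that every scratch ancilla is uncomputed inside the delegated C+P circuit, giving the delegated block the clean ``$N_q$ qubits in, $N_q$ qubits out'' structure that Theorem 1 assumes. Neither looks like a genuine obstacle --- they are bookkeeping at the boundary between the two pieces.
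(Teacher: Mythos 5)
Your proposal matches the paper's own construction (Protocol~6 and Theorem~8): the paper likewise delegates only the Toffoli/modular-exponentiation block, feeding $\ket{+}^{\otimes n}\otimes\ket{M}$ as the (partly quantum, partly classical) input to a public universal circuit so that $M$ stays hidden inside the $\fEt$ encoding, and has the client perform the $\tilde O(n)$ quantum Fourier transform and measurement locally after decrypting with $\fEt^{-1}_{K_{out}}$. The cost accounting is the same ($O(\kappa n)$ $\fCN$ gates with $\kappa=\eta+4N_q$, hence $O(\eta n+n^2)$, improving to $O(\eta n)$ under Conjecture~1), so the proposal is correct and essentially identical to the paper's argument.
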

If the client runs the factoring algorithm by itself, the quantum operations it needed will be $\omega(n^2)$, and the exact complexity depends on the multiplication methods.\par
The security proof for our protocol heavily uses the concept of KDM security, which was not previously studied in the quantum setting. We therefore also initiate a systematic study of KDM security in the quantum random oracle model. We point out that although there already exists classical KDM secure encryption scheme in the random oracle model\cite{KDMCRO}, the security in the quantum random oracle model still needs an explicit proof. We complete its proof in this paper. Furthermore, we generalize KDM security to quantum KDM security,  and construct a protocol for it in the quantum random oracle model.
\subsection{Related Work}
To delegate quantum computation, people raised the concepts of blind quantum computation\cite{UBQC} and quantum homomorphic encryption(QHE)\cite{AnnesQHELowT}. These two concepts are a little different but closely related: in quantum homomorphic encryption, no interaction is allowed and the circuits to be evaluated are known by the server. While in blind quantum computation, interactions are usually allowed and the circuits are usually only known by the client.\par
The concept of blind quantum computation was first raised in \cite{BlindQC}. And \cite{UBQC} gave a universal blind quantum computation protocol, based on measurement-based quantum computation(MBQC)\cite{MBQC}. What's more, secure assisted quantum computation based on quantum one-time pad(QOTP) technique was raised in \cite{ChildsQOTP}, with which we can easily apply Clifford gates securely but $\fT$ gates are hard to implement and require interactions.\par
Quantum homomorphic encryption is the homomorphic encryption for quantum circuits. Based on QOTP and classical FHE, \cite{AnnesQHELowT} studied the quantum homomorphic encryption for circuits with low $\fT$ gate complexity. Later \cite{QFHEPoly} constructed a quantum homomorphic encryption scheme for polynomial size circuits. But it still requires some quantum computing ability on the client side to prepare the evaluation gadgets, and the size of gadgets is propotional to the number of $\fT$ gates. Recently Mahadev constructed a protocol\cite{Mahadev2017ClassicalHE}, which achieves fully quantum homomorphic encryption, and what makes this protocol amazing is that the client can be purely classical, which hugely reduces the burden on the client side.\par
Another viewpoint of these protocols is the computational assumptions needed. With interactions, we can do blind quantum computation for universal quantum circuits information theoretically(IT-) securely. But for non-interactive protocols, \cite{ITQHELimit} gave a limit for IT-secure QHE, which implies IT-secure quantum FHE is impossible. But it's still possible to design protocols for some non-universal circuit families.\cite{statisticallyQHEforIQP} gave a protocol for IQP circuits, and \cite{QHEcode} gave a protocol for circuit with logarithmic number of $\fT$ gates.\par
On the other hand, \cite{AnnesQHELowT}\cite{QFHEPoly}\cite{Mahadev2017ClassicalHE} rely on classical FHE. The current constructions of classical FHE are all based on various kinds of lattice-based cryptosystems, and the most standard assumption is the Learning-With-Error(LWE) assumption.\par
Table 1 compares different protocols for quantum computation delegation.
\begin{table*}
\begin{tabular}{cm{8em}cm{5em}}
Protocol&Circuit class&Client's quantum resources&Assumption\\\hline
QOTP\cite{ChildsQOTP}&Clifford&$O(N_q)$ Pauli operations&-\\\hline
\cite{UBQC}&All&\makecell{${O(L)}$\\Rounds: Circuit Depth}&-\\\hline
\cite{Mahadev2017ClassicalHE}&All&$O(N_q)$ Pauli operations&FHE\\\hline
\cite{statisticallyQHEforIQP}&IQP&$O(N_q)$&-\\\hline
\cite{QHEcode}&Clifford+small number of $\fT$ gates&Exponential in the number of $\fT$ gates&-\\\hline
This paper&C+P&\makecell{$O(\eta N_q)$(Conjectured)\\$O(\eta N_q+N_q^2)$(Proved)\\$\fCN$ operations}&Quantum ROM\\\hline
\end{tabular}
\caption{$L$ is the number of gates in the circuits, $N_q$ is the number of qubits in the input, $\eta$ is the security parameter.}
\end{table*}
\subsection{Techniques}
\subsubsection*{A different encoding for hiding quantum states with classical keys}
In many previous protocols, the client hides a quantum state using ``quantum one time pad'': $\rho\rightarrow \fX^a\fZ^b(\rho)$, where $a,b$ are two classical strings. After taking average on $a,b$, the encrypted state becomes a completely mixed state. In our protocol, we use the following mapping to hide quantum states, which maps one qubit in the plaintext to $\kappa$ qubits in the ciphertext:
$$\fEt_{k_0,k_1}: \ket{0}\rightarrow \ket{k_0},\ket{1}\rightarrow \ket{k_1}$$
where $k_0,k_1$ are chosen uniformly at random in $\{0,1\}^\kappa$ and distinct.\par
We can prove for all possible input states, if we apply this operator on each qubit, after taking average on all the possible keys, the final results will be exponentially close to the completely mixed state.
\subsubsection*{Reversible garbled circuits}
The main ingredient in our construction is ``reversible garbled circuit''. In the usual construction of Yao's garbled table, the server can feed the input keys into the garbled table, and get the output keys; then in the decoding phase, it uses an output mapping to map the keys to the result. This well-studied classical construction does not work for quantum states. Even if the original circuit is reversible, the evaluation of Yao's garbled circuit is not! To use it on quantum states, besides the original garbled table, we add another table from the output keys to the input keys. This makes the whole scheme reversible, which means we can use it on quantum states and the computation result won't be entangled with auxiliary qubits. For security, we remove the output mappings. In the context of delegation, these are kept by the client.\par
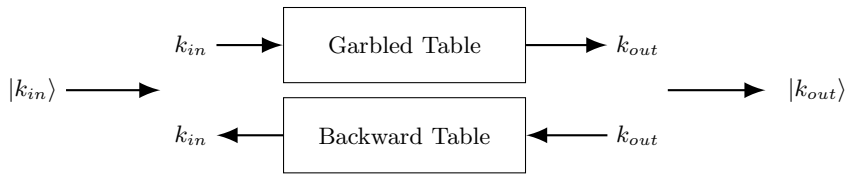
\begin{figure}[H]
\centering
\begin{tikzpicture}
\tikzstyle{BC} = [
    draw,
    rectangle,
    node distance=10pt,
    minimum width=3cm,
    minimum height=1cm,
    text width=3cm,
    align=center,
  ]
  \node[left] (rink) at (0,-0.6) {$\ket{k_{in}}$};
  \node[left] (ink) at (2,0) {$k_{in}$};
	\node[BC] (gc) at (4.5,0) {Garbled Table};
	\node[left] (outk) at (8,0) {$k_{out}$};
	\node[left] (bink) at (2,-1.2) {$k_{in}$};
	\node[BC] (bgc) at (4.5,-1.2) {Backward Table};
	\node[left] (boutk) at (8,-1.2) {$k_{out}$};
	\node[left] (routk) at (10.5,-0.6) {$\ket{k_{out}}$};
\draw[edge] (ink) -- (gc);
	\draw[edge] (gc) -- (outk);
	\draw[edge] (boutk) -- (bgc);
	\draw[edge] (bgc) -- (bink);
	\draw[edge] (rink) -- ++(1.7,0);
	\draw[edge] (8,-0.6) -- ++(1.3,0);
\end{tikzpicture}
\caption{Reversible garbled table}
\end{figure}
\paragraph{Note}The proof of security of this scheme is subtle. The extra information included to allow the reversible computation introduces encryption cycles among the keys. We address the problem by studying key-dependent message security in the quantum setting. We show that a KDM-secure encryption scheme exists in the quantum random oracle model, and use this result to prove the security of our reversible garbled circuit construction. 
\subsubsection*{Phase gates}
The reversible garbled circuit allows evaluating Toffoli circuits. To handle phase gates, instead of applying $\ket{k_{in}}\rightarrow \ket{k_{out}}$, we can make the garbled table implement the following transformation(where $m$ is chosen randomly):
\begin{equation}
	\ket{k_0}\rightarrow\ket{k_0}\ket{m}, \ket{k_1}\rightarrow\ket{k_1}\ket{m+1}\label{eq:1}
\end{equation}
Then the server can apply a ``qudit $\fZ$ gate'' $\sum_i\omega^i_n\ket{i}\bra{i}$ (define $\omega_n:=e^{\mi\pi/n}$) on the second register, where $i\in \bZ_n$ goes through all the integers in $\bZ_n$.(This operation can be done efficiently.) This will give us:
$$\ket{k_0}\rightarrow\omega^m_n\ket{k_0}\ket{m}, \ket{k_1}\rightarrow\omega^{m+1}_n\ket{k_1}\ket{m+1}$$
Then it applies (\ref{eq:1}) again to erase the second register. After removing the global phase the result is the same as the output of applying a phase gate $R_Z(\frac{\pi}{n})=\ket{0}\bra{0}+\omega_n\ket{1}\bra{1}$.
\subsection{Organisation}
This paper is organized as follows. Section 2 contains some background for this paper
. In Section 3 we discuss the encoding scheme. In Section 4 we give our construction of the quantum computation delegation protocol for C+P circuits. In Section 5 we prove the security of classical KDM secure scheme in the quantum random oracle model, as the preparation for the security proof of the main protocol. Then in Section 6 we discuss the security of our protocol. Section 7.1 turns this delegation scheme to a fully blind protocol, and Section 7.2 shows how to use our protocol on Shor's algorithm. Section 8 generalizes KDM security to quantum settings, constructs a quantum KDM secure protocol and proves its security. Then we discuss the open questions and complete this paper.
\section{Definitions and Preliminaries}
\subsection{Basics of Quantum Computation}
In this section we give a simple introduction for quantum computing, and clarify some notations in this paper. For more detailed explanations, we refer to \cite{NielsenChuangs}.\par
In quantum computing, a pure state is described by a unit vector in a Hilbert space. A qubit, or a quantum bit, in a pure state, can be described by a vector $\ket{\varphi}\in \bC^2$. The symbols $\ket{\cdot}$ and $\bra{\cdot}$ are called Dirac symbols. A qudit is described by a vector $\ket{\varphi}\in \bC^d$.\par
But a quantum system isn't necessarily in a pure state. When the quantum system is open, we need to consider mixed states. To describe both pure and mixed states, the state of a qubit is described by a density matrix in $\bC^{2\times 2}$. A density matrix is a trace-one positive semidefinite complex matrix. The density matrix that corresponds to pure state $\ket{\varphi}$ is $\ket{\varphi}\bra{\varphi}$, and we abbreviate it as $\varphi$.\par
For an n-qubit state, its density matrix is in $\bC^{2^n\times 2^n}$. The space of density operators in system $\cS$ is denoted as $\bD(\cS)$. Note that we use $\bE$ for the notation of the expectation value.\par
A quantum operation on pure states can be described by a unitary transform $\ket{\varphi}\rightarrow U\ket{\varphi}$. And an operation on mixed states can be described by a superoperator $\rho\rightarrow \cE(\rho)=\tr_R(U(\rho\otimes\ket{0}\bra{0})U^\dagger))$. We use calligraphic characters like $\cD$, $\cE$ to denote superoperators, and use the normal characters like $U,D$ to denote unitary transforms. We also use Sans-serif font like $\fX,\fZ,\fEt$ to denote quantum operations: When they are used as $\fEt\ket{\varphi}$ they mean unitary operations(applied on Dirac symbols without parentheses), and when used as $\fEt(\rho)$ they mean superoperators.\par
The quantum gates include $\fX$, $\fY$, $\fZ$, $\fCN$, $\fH$, $\fT$, $\fToffoli$ and so on. What's more, denote $R_Z(\theta)=\ket{0}\bra{0}+e^{\mi\theta}\ket{1}\bra{1}$, where $\mi$ is the imaginary unit. Denote $\omega_n=e^{\mi\pi/n}$, we can write $R_Z(k\pi/n)=\ket{0}\bra{0}+\omega_n^k\ket{1}\bra{1}$. Since the $i$ will be used as the symbol for indexes and ``inputs'', we avoid using $e^{\mi\pi/n}$ in this paper, and use $\omega_n$ instead.\par
The trace distance of two quantum states is defined as $\Delta(\rho,\sigma)=\frac{1}{2}|\rho-\sigma|_{\tr}$, where $|\cdot|_{\tr}$ is the trace norm.
\subsection{Encryption with Quantum Adversaries}
A quantum symmetric key encryption scheme contains three mappings: $\fKg(1^\kappa)\rightarrow sk$, $\fEn_{sk}: \bD(\cM)\rightarrow \bD(\cC)$, $\fDc_{sk}: \bD(\cC)\rightarrow \bD(\cM)$.\cite{QPKC}\par
In this paper, we need to use the symmetric key encryption scheme with key tags, which contains four mappings: $\fKg$, $\fEn$, $\fDc$, $\fVer$. The scheme has a key verification procedure $\fVer: \cK\times \bD(\cC)\rightarrow \{\perp, 1\}$.\par
A quantum symmetric key encryption scheme with key tags is correct if:
\begin{enumerate}
	\item $\forall \rho\in \bD(\cR\otimes \cS)$, $\bE_{sk\leftarrow \fKg(1^\kappa)}|(\fI\otimes\fDc_{sk})((\fI\otimes\fEn_{sk})(\rho))-\rho|_{\tr}=\fneg(\kappa)$
	\item $\forall \rho\in \bD(\cR\otimes \cS)$, $\Pr_{sk\leftarrow \fKg(1^\kappa)}(\fVer(sk, (\fI\otimes\fEn_{sk})(\rho))=\perp)=\fneg(\kappa)$,\\
	      and $\Pr_{sk\leftarrow \fKg(1^\kappa), r\leftarrow \fKg(1^\kappa)}(\fVer(r, (\fI\otimes\fEn_{sk})(\rho))=1)=\fneg(\kappa)$
\end{enumerate}
Here the encryption and decryption are all on system $S$, and $R$ is the reference system.\par
Sometimes we also need to encrypt the messages with multiple keys, and require that (informally) an adversary can only get the message if it knows all the keys. In symmetric multi-key encryption scheme with key tags, $\fKg(1^\kappa)$ is the same as the symmetric single-key scheme, $\fEn_{k_1,k_2,\cdots k_i}$ encrypts a message under keys $K=(k_1,k_2,\cdots k_i)$, $\fDc_{k_1,k_2,\cdots k_i}$ decrypts a ciphertext given all the keys $k_1,k_2,\cdots k_i$, and $\fVer(k,i,c)\rightarrow \{\perp, 1\}$ verifies whether $k$ is the $i$-th key used in the encryption of $c$.\par
The next problem is to define ``secure'' formally. The concept of indistinguishability under chosen plaintext attack (IND-CPA) was introduced in \cite{SymINDCPA}\cite{PubINDCPA}. Let's first review the security definitions in the classical case.
\begin{definition}
	For a symmetric key encryption scheme, consider the following game, called ``IND-CPA game'', between a challenger and an adversary $\sA$:
	\begin{enumerate}
		\item The challenger runs $\fKg(1^\kappa)\rightarrow sk$ and samples $b\leftarrow_r\{0,1\}$.
		\item The adversary gets the following classical oracle, whose input space is $\cM$:
		      \begin{enumerate}
			      \item The adversary chooses $m\in \cM$, and sends it into the oracle.
			      \item If $b=1$, the oracle outputs $\fEn(m)$.  If $b=0$, it outputs $\fEn(0^{|m|})$.
		      \end{enumerate}
		\item The adversary tries to guess $b$ with some distinguisher $\cD$. Denote the guessing result as $b^\prime$.
	\end{enumerate}
	The distinguishing advantage is defined by $\fAdv^{IND-CPA}(\sA,\kappa)=|\Pr(b^\prime=1|b=1)-\Pr(b^\prime=1|b=0)|$.\par
	And we call it an one-shot IND-CPA game if the adversary can only query the oracle once. Similarly we can define the distinguishing advantage\\ $\fAdv^{IND-CPA-oneshot}(\sA,\kappa)=|\Pr(b^\prime=1|b=1)-\Pr(b^\prime=1|b=0)|$

\end{definition}
\begin{definition}
	We say a protocol is IND-CPA secure against quantum adversaries if for any BQP adversary $\sA$ which can run quantum circuits as the distinguisher but can only make classical encryption queries, there exists a negligible function $\fneg$ such that $\fAdv^{IND-CPA}(\sA,\kappa)=\fneg(\kappa)$. And we call it one-shot IND-CPA secure against quantum adversaries if $\fAdv^{IND-CPA-oneshot}(\sA,\kappa)=\fneg(\kappa)$.
\end{definition}
Note that the ``IND-CPA security against quantum adversaries'' characterizes the security of a protocol against an adversary who has the quantum computing ability in the distinguishing phase but can only run the protocol classically.\par
For quantum cryptographic schemes, we use the formulation in \cite{AnnesQHELowT}.
\begin{definition}
	For a symmetric key encryption scheme, consider the following game, called ``qIND-CPA game'', between a challenger and an adversary $\sA$:
	\begin{enumerate}
		\item The challenger runs $\fKg(1^\kappa)\rightarrow sk$ and samples $b\leftarrow_r\{0,1\}$.
		\item The adversary gets the following oracle, whose input space is $\cD(\cM)$:
		      \begin{enumerate}
			      \item The adversary chooses $\rho\in \bD(\cM\otimes \cR)$. The adversary sends system $\cM$ to the oracle, and keeps $\cR$ as the reference system.
			      \item If $b=1$, the oracle applies $\fEn$ on $\cM$ and sends it to the adversary. The adversary will hold the state $(\fEn\otimes \fI)(\rho)$.  If $b=0$, the oracle encrypts $0^{|m|}$ and the adversary gets $(\fEn\otimes \fI)(0^{|m|}\otimes\rho_R)$, where $\rho_R$ is the density operator of subsystem $\cR$.
		      \end{enumerate}
		\item The adversary tries to guess $b$ with some distinguisher $\cD$. Denote the guessing output as $b^\prime$.
	\end{enumerate}
	The distinguishing advantage is defined by $\fAdv^{qIND-CPA}(\sA,\kappa)=|\Pr(b^\prime=1|b=1)-\Pr(b^\prime=1|b=0)|$.\par
	And we call it an one-shot qIND-CPA game if the adversary can only query the oracle once. Similarly we can define the distinguishing advantage\\ $\fAdv^{qIND-CPA-oneshot}(\sA,\kappa)=|\Pr(b^\prime=1|b=1)-\Pr(b^\prime=1|b=0)|$.
\end{definition}
\begin{definition}
	A protocol is qIND-CPA secure if for any BQP adversary $\sA$, there exists a negligible function $\fneg$ such that $\fAdv^{qIND-CPA}(\sA,\kappa)=\fneg(\kappa)$.\par
	What's more, we call it one-shot qIND-CPA secure if for any BQP adversary $\sA$, there exists a negligible function $\fneg$ such that $\fAdv^{qIND-CPA-oneshot}(\sA,\kappa)=\fneg(\kappa)$.
\end{definition}
In the definition of qIND-CPA security, the adversary can query the encryption oracle with quantum states, and it can also run a quantum distinguisher.
\paragraph{Key Dependent Message security} In the definitions above the plaintexts do not depend on the secret keys. There is another type of security called ``key-dependent message (KDM) security'', where the adversary can get encryptions of the secret keys themselves. We will need to study this type of security in the proof of our main theorem, but we defer the definitions and further discussions to Section 5.
\subsection{Delegation of Quantum Computation, and Related Problems}
There are three similar concepts: delegation of quantum computation, quantum homomorphic encryption\cite{AnnesQHELowT} and blind quantum computation\cite{BlindQC}\cite{UBQC}.\par
The differences of these three concepts are whether the interaction is allowed, and which party knows the circuit. The delegation of quantum computation and blind quantum computation protocols are interactive. For quantum homomorphic encryption, the interaction is not allowed. If we focus on non-interactive protocols, their difference is which party knows the circuit: in blind quantum computation, the circuit is only known by the client but not the server; in homomorphic encryption, the circuit is known by the server but not necessarily known by the client. In our paper, we use ``delegation of quantum computation'' to mean that the circuit is known by both parties but the input is kept secret.\par
A non-interactive quantum computation delegation protocol $\fBQC$ on circuit family $\cF=\{F_n\}$ contains 4 mappings:
\begin{description}
	\item $\fBQC.\fKg(1^\kappa,1^N,1^L)\rightarrow (sk)$: The key generation algorithm takes the key length $\kappa$, input length $N$ and circuit length $L$ and returns the secret key.
	\item $\fBQC.\fEn^C_{sk}:\bD(\cM)\rightarrow \bD(\cC)$. Given the encryption key and the public circuit in $\cF=\cup\{F_n\}$, this algorithm maps inputs to ciphertexts.
	\item $\fBQC.\fEv^C:\bD(\cC)\rightarrow \bD(\cC^\prime)$. This algorithm maps ciphertexts to some other ciphertexts, following the instructions which may be contained in $\cC$.
	\item $\fBQC.\fDc_{sk}:\bD(\cC^\prime)\rightarrow \bD(\cM^\prime)$. This algorithm decrypts the ciphertexts and stores the outputs in $\cM$.
\end{description}
Here we put $N,L$ into the $\fKg$ algorithm, which are needed in our protocol. We put $C$ on the superscript to mean the circuit is known by both parties.\par
\begin{definition}
	The security (IND-CPA, qIND-CPA, etc) of the non-interactive delegation of computation protocol is defined to be the security of its encryption scheme $(\fKg,\fEn)$.
\end{definition}
\subsection{Quantum Random Oracle Model}
A classical random oracle is an oracle of a random function $\cH:\{0,1\}^\kappa\rightarrow \{0,1\}^\kappa$ which all parties can query with classical inputs. It returns independent random value for different inputs, and returns fixed value for the same input. In practice, a random oracle is usually replaced by a hash function.\par
A quantum random oracle allows the users to query it with quantum states: the users can apply the map $\cH: \ket{a}\ket{b}\rightarrow \ket{a}\ket{\cH(a)\oplus b}$ on its state. The quantum random oracle was raised in \cite{QRO}. It becomes the security proof model for many post-quantum cryptographic scheme\cite{FS}. On the other hand, the application of the quantum random oracle in quantum cryptographic problems is not very common, and as far as we know, our work is the first application of it in the delegation-stype problems.\par
The security definitions in the quantum random oracle model are the same as Definitions 2 and 4. Here we assume the adversary can only make polynomial number of random oracle queries, but the queries can be quantum states. Then by the ``Random Oracle Methodology'' we can conjecture the protocol is also secure in the standard model, when the random oracle is replaced by a hash function in practice. As with proofs in the classical random oracle model, interpreting these security claims is subtle, since there exist protocols that are secure in the random oracle model but insecure in any concrete initialization of hash function.\cite{CGH-ROLimit}\par
This paper focuses on the quantum cryptographic protocols in the quantum random oracle model. As far as we know, the assumption of a quantum random oracle is incomparable to any trapdoor assumption. We do not know any construction of public key encryption based on solely quantum random oracle. What's more, in our proof, the random oracle doesn't need to be ``programmable''\cite{ROProg}.
\subsection{Garbled Table}\label{clgarble}
We make a simple introduction of Yao's garbled table \cite{YaoGCOrigin} here. The garbled table construction will be the foundation of our protocol.\par
Garbled table is a powerful technique for the randomized encoding of functions. When constructing the garbled circuit of some circuit $C$, the client picks two keys for each wire, and denotes them as $k_b^w$, where $b\in \{0,1\}$, and $w$ is the index of the wire.\par
The garbled table is based on a symmetric key encryption scheme with key tags. For gate $g$, suppose its input wires are $w_1, w_2$, and the output wire is $v$. The client constructs the following table:
\begin{align}
	  & \fEn_{k_0^{w_1},k_0^{w_2}}(k_{g(0,0)}^v) \\
	  & \fEn_{k_0^{w_1},k_1^{w_2}}(k_{g(0,1)}^v) \\
	  & \fEn_{k_1^{w_1},k_0^{w_2}}(k_{g(1,0)}^v) \\
	  & \fEn_{k_1^{w_1},k_1^{w_2}}(k_{g(1,1)}^v)
\end{align}
And it picks a random permutation in $S_4$ to shuffle them.\par
If the server is given the garbled table for some gate, and given a pair of input keys, it can evaluate the output keys: it can try each row in the garbled table and see whether the given keys pass the verification. If they pass, use them to decrypt this row and get the output keys.\par
By providing the input keys and the garbled table for each gate in the circuit, the server can evaluate the output keys for the whole circuit. And in the randomized encoding problem the client also provides the mapping from the output keys to the corresponding values on some wires: $k_b^w\rightarrow b$, for some set of $w$s. The server can know the output values on these revealed wires, but the values on other wires are hidden. This construction has wide applications in classical world, for example, it allows an $NC^0$ client to delegate the evaluation of a circuit to the server.

\section{The Encoding For Hiding Quantum States With Classical Keys}
Let's first discuss the encoding operator, $\fEt$,  to ``hide'' the quantum states. For each qubit in the input, the client picks two random different keys $k_0,k_1\in\{0,1\}^\kappa$ and encodes the input qubit with the following operator:
$$\fEt_{k_0,k_1}: \ket{0}\rightarrow \ket{k_0},\ket{1}\rightarrow \ket{k_1}$$
The dimensions of two sides are not the same, but we can add some auxiliary qubits on the left side. As long as $k_0, k_1$ are distinct, this operator is unitary.\par
For pure quantum state $\ket{\varphi}=\sum\alpha_{i_1i_2\cdots i_N}\ket{i_1i_2\cdots i_N}$, given key set $K=\{k^{n}_i\}$, where $n\in [N]$, $i\in \{0,1\}$, if we apply this operator on each qubit, using keys $\{k^n_0,k^n_1\}$ for the $n$-th qubit, we get:
$$\fEt_{K}\ket{\varphi}=\sum\alpha_{i_1i_2\cdots i_n}\ket{k^{(1)}_{i_1}k^{(2)}_{i_2}\cdots k^{(N)}_{i_n}}$$
The following lemma shows that if the keys are long enough, chosen randomly and kept secret, this encoding is statistically secure, in other words, the mixed state after we take average on all the possible keys, is close to the completely mixed state with exponentially small distance:
\begin{lemma}\label{lm:itecd}
	Suppose $\rho\in \bD(\cS\otimes \cR)$, $\cS=(\mathbb{C}^2)^{\otimes N}$. Suppose we apply the $\fEt$ operation on system $\cS$ with key length $\kappa$, after taking average on all the valid keys, we get
	$$\sigma=\frac{1}{(2^\kappa(2^\kappa-1))^N}\sum_{\forall n\in [N], k_0^n,k_1^n\in \{0,1\}^\kappa, k_0^n\neq k^n_1}(\fEt^\cS_{\{k^{n}_i\}}\otimes\fI)(\rho)$$
	then we have $\Delta(\sigma,(\frac{1}{2^{\kappa N}}\fI)\otimes \tr_{\cS}(\rho))\leq(\frac{1}{2})^{\kappa-4}N$
\end{lemma}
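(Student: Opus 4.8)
The plan is to reduce the estimate to the one-qubit case by a hybrid argument over the $N$ input qubits, and then prove the one-qubit bound by writing down the averaged encoding channel exactly and reading off the difference.

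For the reduction, I would let $\cE_n$ be the superoperator on the $n$-th qubit obtained by conjugating with the isometry $\ket{0}\mapsto\ket{k_0^n}$, $\ket{1}\mapsto\ket{k_1^n}$ and then averaging over all distinct pairs $k_0^n\neq k_1^n$ in $\{0,1\}^\kappa$ (the fixed ancilla qubits just come along on both sides), and let $\cM_n$ be the channel that discards the $n$-th qubit and appends the maximally mixed state on $(\bC^2)^{\otimes\kappa}$. Since the key sets on distinct qubits are independent, $\sigma=(\cE_1\otimes\cdots\otimes\cE_N\otimes\fI_\cR)(\rho)$ while the target state is $(\cM_1\otimes\cdots\otimes\cM_N\otimes\fI_\cR)(\rho)$. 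I introduce the hybrids $\sigma_j=(\cM_1\otimes\cdots\otimes\cM_j\otimes\cE_{j+1}\otimes\cdots\otimes\cE_N\otimes\fI_\cR)(\rho)$, so $\sigma_0=\sigma$ and $\sigma_N$ is the target; because all the channels acting on registers other than $\cS_j$ commute with the one acting on $\cS_j$, we have $\sigma_{j-1}=(\cE_j\otimes\fI)(\rho^{(j)})$ and $\sigma_j=(\cM_j\otimes\fI)(\rho^{(j)})$ for the fixed state $\rho^{(j)}$ obtained by applying all the other channels to $\rho$ first. By the triangle inequality it then suffices to prove the one-qubit bound: for every $\tau\in\bD(\bC^2\otimes\cR')$ with $\cR'$ an arbitrary reference system, $\Delta\big((\cE\otimes\fI)(\tau),(\cM\otimes\fI)(\tau)\big)\le 2^{-\kappa}$.

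For that bound, write $\tau$ in block form with respect to the qubit, $\tau=\sum_{a,b\in\{0,1\}}\ket{a}\bra{b}\otimes\tau_{ab}$, so $\tr_{\bC^2}(\tau)=\tau_{00}+\tau_{11}$. Each of $k_0,k_1$ is marginally uniform on $\{0,1\}^\kappa$, hence $\bE[\ket{k_0}\bra{k_0}]=\bE[\ket{k_1}\bra{k_1}]=2^{-\kappa}\fI$, and $\bE[\ket{k_0}\bra{k_1}]=\bE[\ket{k_1}\bra{k_0}]=E$ where $E:=\frac{1}{2^\kappa(2^\kappa-1)}(J-\fI)$ and $J$ is the all-ones $2^\kappa\times2^\kappa$ matrix. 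Thus $(\cE\otimes\fI)(\tau)=2^{-\kappa}\fI\otimes(\tau_{00}+\tau_{11})+E\otimes(\tau_{01}+\tau_{10})$ while $(\cM\otimes\fI)(\tau)=2^{-\kappa}\fI\otimes(\tau_{00}+\tau_{11})$, so the difference is exactly $E\otimes(\tau_{01}+\tau_{10})$. Since $J$ has eigenvalues $2^\kappa$ (once) and $0$, the Hermitian matrix $J-\fI$ has eigenvalues $2^\kappa-1$ and $-1$ (multiplicity $2^\kappa-1$), giving $|E|_{\tr}=\frac{2(2^\kappa-1)}{2^\kappa(2^\kappa-1)}=2^{1-\kappa}$. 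For the other factor, conjugating $\tau$ by the Pauli $\fZ$ on the qubit and averaging kills the off-diagonal blocks, so $\ket{0}\bra{1}\otimes\tau_{01}+\ket{1}\bra{0}\otimes\tau_{10}=\tfrac12\big(\tau-(\fZ\otimes\fI)\tau(\fZ\otimes\fI)\big)$, whose trace norm is at most $1$; as the two summands on the left have orthogonal ranges and orthogonal domains, this trace norm equals $|\tau_{01}|_{\tr}+|\tau_{10}|_{\tr}$, whence $|\tau_{01}+\tau_{10}|_{\tr}\le1$. Therefore $\Delta=\tfrac12|E\otimes(\tau_{01}+\tau_{10})|_{\tr}\le\tfrac12\cdot2^{1-\kappa}\cdot1=2^{-\kappa}$, and summing the $N$ hybrid steps gives $\Delta(\sigma,\text{target})\le N\cdot2^{-\kappa}\le(\tfrac12)^{\kappa-4}N$ — in fact slightly stronger than the stated bound.

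The only step that is not pure bookkeeping is $|\tau_{01}+\tau_{10}|_{\tr}\le1$; besides the $\fZ$-twirl identity above, one could equally invoke the standard factorization of a positive $2\times2$ block matrix, $\tau_{01}=\tau_{00}^{1/2}K\tau_{11}^{1/2}$ with $\|K\|\le1$, which yields $|\tau_{01}|_{\tr}\le\sqrt{\tr\tau_{00}}\,\sqrt{\tr\tau_{11}}\le\tfrac12$. The main thing to be careful about in the full write-up is the hybrid bookkeeping: making explicit that the per-qubit key sets are independent (so the global encoding factors as a tensor product of the $\cE_n$), and that at step $j$ every register other than $\cS_j$ may be absorbed into the reference system, so the one-qubit estimate applies verbatim.
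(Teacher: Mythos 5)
Your proof is correct and follows essentially the same route as the paper's: both reduce to a single qubit, average $\ket{k_0}\bra{k_1}$ over distinct key pairs to get the operator $\frac{1}{2^\kappa(2^\kappa-1)}(J-\fI)$ (the paper writes it as $\frac{1}{2^\kappa-1}(\ket{\Phi}\bra{\Phi}-\frac{1}{2^\kappa}\fI)$), and observe that only the off-diagonal blocks survive in the difference, giving a per-qubit trace distance of $2^{-\kappa}$, well within the stated $(\tfrac12)^{\kappa-4}$. Your write-up is somewhat more careful than the paper's in handling general mixed states via the block decomposition and in making the extension to $N$ qubits explicit through the hybrid argument, but these are refinements of the same argument rather than a different one.
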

Thus such an encoding keeps the input secure against unbounded adversaries. We put the detailed proof in the full version of this paper.\par
Since $\fEt$ is a unitary mapping, given $K$ and $\fEt_K(\rho)$, we can apply the inverse of $\fEt$ and get $\rho$: $\fEt_K^{-1}(\fEt_K(\rho))=\rho$. Note that when applying $\fEt$ we enlarge the space by appending auxiliary qubits, and when applying $\fEt^{-1}$ we remove these auxiliary qubits.\par

\begin{fact}
	$\fEt$ can be implemented with only $\fCN$ operations.
\end{fact}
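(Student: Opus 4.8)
The plan is to exhibit $\fEt_{k_0,k_1}$ as an explicit circuit on $\kappa$ qubits — the one input qubit together with $\kappa-1$ fresh ancillas — and to check that every gate in it is a $\fCN$ (together with, depending on one's accounting conventions, one fixed layer of Pauli-$\fX$ gates, which I discuss below). The starting observation is algebraic: writing $\delta=k_0\oplus k_1$, we have $k_b=k_0\oplus b\cdot\delta$ for $b\in\{0,1\}$, so $\fEt_{k_0,k_1}$ acts on computational basis states (after padding the input with the ancillas) by an \emph{affine} map over $\mathbb F_2$. Unitaries permuting basis states by an affine map are exactly those realizable with $\fCN$ and $\fX$ gates, so the content of the fact is that no $\fH$, $\fT$, Toffoli, or entangling phase gate is ever needed.

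Concretely I would proceed as follows. Since $k_0\neq k_1$, fix a coordinate $j^*$ with $\delta[j^*]=1$ and route the input qubit onto wire $j^*$ (a choice of which physical qubit plays which role, or at worst a $\fSWAP$, which is three $\fCN$s). Step 1: for every $j\in\mathrm{supp}(\delta)\setminus\{j^*\}$ apply $\fCN$ with control $j^*$ and target $j$; after this layer the register holds the string $b\cdot\delta$, which already agrees with $k_b$ on every wire where $k_0$ is $0$. Step 2: apply $\fX$ to every wire in $\mathrm{supp}(k_0)$; the register then holds $b\cdot\delta\oplus k_0=k_b$. By linearity the same circuit sends $\sum_b\alpha_b\ket b$ (tensored with $\ket{0^{\kappa-1}}$) to $\sum_b\alpha_b\ket{k_b}$, hence implements $\fEt_{k_0,k_1}$; and since $\fCN$ and $\fX$ are self-inverse, reversing the gate sequence implements $\fEt_{k_0,k_1}^{-1}$ with the same gate set, which is what decoding uses. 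The $\fCN$ count is at most $\kappa$ per encoded qubit, consistent with the $O(\eta N_q+N_q^2)$ bound once $\kappa=\eta+4N_q$ and there are $N_q$ input qubits.

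The only point that needs care — and the sole reason the statement is not completely immediate — is Step 2: loading the constant $k_0$ genuinely requires something other than $\fCN$, because $\fCN$ circuits fix $\ket{0\cdots0}$ and so cannot produce $\ket{k_0}$ when $k_0\neq 0^\kappa$. The resolution is that this layer consists only of Pauli-$\fX$ gates (at most $\kappa$ of them), which should be regarded as classical: all but possibly one can be absorbed into the \emph{preparation} of the ancillas (initialise the ancilla for wire $j$ in $\ket{k_0[j]}$ rather than in $\ket 0$), and the remaining possible $\fX$ on wire $j^*$ — needed exactly when $k_0[j^*]=1$ — is a single Pauli on the input qubit. So the honest reading I would state is that $\fEt$ and $\fEt^{-1}$ use only $\fCN$ gates together with classical basis-state preparation of the ancilla register (equivalently, only $\fCN$ and $\fX$ gates), which is all the protocol's accounting of quantum $\fCN$ gates requires; I would flag this convention explicitly rather than leave it implicit.
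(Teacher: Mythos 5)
Your construction is the same as the paper's: $\fCN$ the input onto the coordinates where $k_0\oplus k_1$ is $1$, then apply $\fX$ on the support of $k_0$. Your extra observations---that the input wire must sit at a coordinate where $k_0\oplus k_1$ is nonzero, and that the final $\fX$ layer is not literally a $\fCN$ layer and must be absorbed into ancilla preparation or counted as classical Pauli corrections---are correct refinements of the same argument rather than a different route.
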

\begin{proof}
First implement mapping $\ket{0}\rightarrow \ket{0^\kappa},\ket{1}\rightarrow\ket{k_0\oplus k_1}$. This can be done by $\fCN$ the input into the places where $k_0\oplus k_1$ has bit value 1. Then apply $\fX$ gates on the places where $k_0$ has bit value 1. This will xor $k_0$ into these registers and complete the mapping $\ket{0}\rightarrow \ket{k_0},\ket{1}\rightarrow\ket{k_1}$.
\end{proof}
The quantum computation delegation protocol that we will discuss in the next section will use this encoding.
\section{A Quantum Computation Delegation Protocol for C+P Circuits}
In this section, we use $\fEt$ encoding and a new technique called ``reversible garbled circuit'' to design a quantum computation delegation protocol.
\subsection{C+P Circuits and the Relation to Toffoli Depth}
\cite{ClassicalTD1} defined ``almost classical'' circuits. Here we rename it to ``C+P'' circuits, abbreviating ``classical plus phase''.
\begin{definition}[\cite{ClassicalTD1}]
	C+P is the family of quantum circuits which are composed of Toffoli gates and diagonal gates.
\end{definition}
We can prove it's possible to decompose this type of circuits into simpler gates. We put the proof in the full version of this paper.
\begin{proposition}\label{fact:C+Pdecomp}
	Any C+P circuit can be decomposed to Toffoli gates and single qubit phase gates. Furthermore, it can be approximated by Toffoli gates and single qubit phase gates of the form $R_Z(\frac{\pi}{n})=\ket{0}\bra{0}+\omega_n\ket{1}\bra{1}, n\in \bN_+$, where $\omega_n$ is the $nth$ root of unity. To approximate a circuit of length $L$ of Toffoli gates and single qubit phase gates to precision $\epsilon$, we only need Toffoli gates and phase gates in the form of $R_Z(\frac{\pi}{2^d}), d\in [D]$, where $D=\Theta(\log\frac{L}{\epsilon})$.
\end{proposition}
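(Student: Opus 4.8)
\textit{Proof plan.} The statement has two halves: an exact decomposition of any C+P circuit into $\fToffoli$ and single-qubit phase gates, and an approximate one using only the gates $R_Z(\pi/2^d)$. For the exact half, since a C+P circuit is by definition a composition of $\fToffoli$ gates and diagonal gates, it is enough to decompose a single $k$-qubit diagonal gate $D=\sum_{x\in\{0,1\}^k}e^{\mi\theta_x}\ket{x}\bra{x}$, using clean ancillas that are returned to $\ket{0}$. I would write $D$, up to the global phase $e^{\mi\theta_{0^k}}$, as the product of the pairwise-commuting ``single-basis-state'' phase gates $D_x=e^{\mi\theta_x}\ket{x}\bra{x}+\sum_{y\neq x}\ket{y}\bra{y}$ over $x\neq 0^k$; the leftover global phase, if one insists on accounting for it, is itself $R_Z(\theta_{0^k})\cdot\fX R_Z(\theta_{0^k})\fX$ on any single wire. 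Each $D_x$ is obtained by conjugating a $(k-1)$-controlled phase gate by $\fX$ gates on the coordinates where $x_j=0$, and that controlled phase gate is realized by computing $x_1\wedge\cdots\wedge x_k$ into a fresh ancilla with $k-1$ $\fToffoli$ gates, applying $R_Z(\theta_x)$ to the ancilla, and uncomputing with $k-1$ more $\fToffoli$ gates. The auxiliary $\fX$ and $\fCN$ gates appearing here are themselves $\fToffoli$ gates with control wires pinned to $\ket{1}$, so this already proves the first sentence. (Note this may blow the gate count up by $O(2^k k)$ per diagonal gate, but the statement only asserts existence, not efficiency, of this step.)

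For the approximation, the essential point is that single-qubit phase gates form an \emph{abelian} group, $R_Z(\alpha)R_Z(\beta)=R_Z(\alpha+\beta)$, so no Solovay--Kitaev-type argument is needed: one only has to approximate a single real number $\theta/\pi$ by a finite sum of the ``digits'' $2^{-d}$. Writing $\theta\bmod 2\pi$ in binary as $\pi\sum_{d\geq 0}b_d 2^{-d}$, truncating at level $D$, and setting $\tilde\theta=\pi\sum_{d=0}^{D}b_d 2^{-d}$ gives $R_Z(\tilde\theta)=\prod_{d:\,b_d=1}R_Z(\pi/2^d)$, a product of at most $D+1$ gates from $\{R_Z(\pi/2^d)\}$ (if the index set is taken to be $[D]=\{1,\dots,D\}$, simply replace the $d=0$ factor $Z$ by $R_Z(\pi/2)^2$). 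The per-gate error is then bounded by $\|R_Z(\theta)-R_Z(\tilde\theta)\|\le|e^{\mi\theta}-e^{\mi\tilde\theta}|\le|\theta-\tilde\theta|\le\pi 2^{-D}$.

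Finally I would propagate this through a length-$L$ circuit $W=G_L\cdots G_1$ that is \emph{already} written in $\fToffoli$ gates and single-qubit phase gates --- this is the $L$ the statement refers to. Replacing each phase gate by its level-$D$ truncation and leaving $\fToffoli$ gates untouched yields $\tilde W=\tilde G_L\cdots\tilde G_1$ with $\|W-\tilde W\|\le\sum_{i=1}^{L}\|G_i-\tilde G_i\|\le L\pi 2^{-D}$ by the triangle inequality for composition of unitaries; choosing $D=\lceil\log_2(\pi L/\epsilon)\rceil=\Theta(\log(L/\epsilon))$ makes this at most $\epsilon$, and $\tilde W$ uses only $\fToffoli$ gates and gates $R_Z(\pi/2^d)$, $d\in[D]$. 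The matching $D=\Omega(\log(L/\epsilon))$ in the $\Theta$ is the easy remark that the reachable phases lie on a grid of spacing $\pi/2^D$, so a worst-case $\theta$ forces $D$ that large. The argument is routine throughout; the only points needing care are the conventions around what ``$\fToffoli$ and single-qubit phase gates'' is taken to allow ($\fX$/$\fCN$ and global phases, handled as above) and the bookkeeping of the per-gate precision $\epsilon/L$ through the composition bound --- I do not expect a genuine obstacle.
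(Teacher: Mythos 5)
Your proposal is correct and follows essentially the same route as the paper, which simply sketches the two steps you carry out in full: decompose each diagonal gate into Toffoli gates plus single-qubit phase gates (via controlled phases built from Toffolis and ancillas), then expand each phase angle $\theta$ in the basis $\pi/2^d$, $d=1,\dots,D$, and propagate the per-gate truncation error through the length-$L$ circuit. The extra care you take with global phases, the $\fX$/$\fCN$ conventions, and the $\Omega(\log(L/\epsilon))$ lower bound on $D$ is all consistent with (and more detailed than) the paper's argument.
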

We consider $D$ as a fixed value in this paper. Since $\epsilon$ depends exponentially on $D$, a small $D$ in practice should be enough and it will at most add a logarithmic term in the complexity.\par
$\{$C+P, $\fH\}$ is a complete basis for quantum circuits. Our work implies a delegation scheme whose round complexity equals the H-depth of a given circuit. Previous works on quantum computation delegation generally focused on $\{$Clifford, $\fT\}$ basis. (The exception is \cite{statisticallyQHEforIQP}, which works for IQP circuits.) With the exception of Mahadev's FHE-based scheme\cite{Mahadev2017ClassicalHE}, their complexity of client side quantum gates increases with the circuit's T-depth.\par
As far as we know, there is no general way to transform a Toffoli circuit into the $\{$Clifford, $\fT\}$ basis such that its $\fT$ depth is smaller than the Toffoli depth of the original ciruit, without blowing up the circuit width exponentially. We formalize this statement as a conjecture:
\begin{conjecture}
	For any polynomial time algorithm that transforms Toffoli circuits into the $\{$Clifford, $\fT\}$ basis, there exists a sequence of inputs with increasing Toffoli depths for which the algorithm's outputs have $\fT$ depth $\Omega(d)$, where $d$ denotes the Toffoli depths of the original circuits.
\end{conjecture}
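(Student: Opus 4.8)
\medskip
\noindent\textbf{A possible line of attack.}
The cleanest route is to reduce the statement to the existence of a ``$\fT$-depth--hard'' family of unitaries and then to try to build one. Suppose we can exhibit unitaries $\{U_d\}_d$, each acting on $\fpoly(d)$ qubits, such that (a) $U_d$ is computed by some Toffoli circuit of depth $\Theta(d)$, while (b) \emph{every} Clifford$+\fT$ circuit computing $U_d$ -- on any number of ancilla qubits initialised to and returned in $\ket{0}$, and, in the strongest form, even allowing intermediate measurements -- has $\fT$-depth $\Omega(d)$. Then the conjecture follows immediately: given any polynomial-time transformation $A$, feed it a depth-$\Theta(d)$ Toffoli circuit $C_d$ for $U_d$; the output $A(C_d)$ must still compute (a close approximation of) $U_d$, so by (b) it has $\fT$-depth $\Omega(d)$, and these inputs have Toffoli depths tending to infinity. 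Note the easy half of the corresponding tightness statement -- that Toffoli depth $d$ forces $\fT$-depth only $O(d)$ -- is already given by Selinger's $\fT$-depth-one realisation of the Toffoli gate, so all the work is in (b).

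For (b) I would look for a functional $\mu$ on unitaries with three properties: $\mu$ vanishes on Cliffords; $\mu(VU)\le \mu(U)+1$ whenever $V$ is one layer of $\fT$ (or $\fT^\dagger$) gates on disjoint qubits conjugated by a Clifford, so that $\mu(U)$ lower-bounds the $\fT$-depth of any circuit for $U$; and $\mu$ is non-increasing under the usual ancilla embedding, so the bound survives padding. Candidates to try are a Heisenberg-picture ``Pauli rank'' (the worst-case number of Pauli terms in $U^\dagger P U$), a phase-polynomial degree/support measure lifted from the $\{\fCN,\fT\}$ analysis, or a depth-aware refinement of the Clifford-hierarchy level. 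One would then pick an explicit deep family -- the natural first guess being a long sequential chain of Toffoli gates together with its textbook depth-$\Theta(d)$ ancilla construction -- and argue that $\mu$ grows linearly along the chain because each successive Toffoli genuinely increases it. Promoting the bound from exact to approximate implementations would be handled by an $\epsilon$-smoothing of $\mu$ plus a continuity argument, which should be routine next to the rest.

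The crux, and the reason the statement is only a conjecture, is that no such $\mu$ is currently known, and the difficulty is sharply localised. A single $\fT$-depth layer can contain $\Omega(n)$ parallel $\fT$ gates, so any quantity that a $\fT$-\emph{layer} can raise by more than $O(1)$ is useless -- and this kills the obvious candidates: the raw Clifford-hierarchy level (a logarithmic-Toffoli-depth circuit already builds a $d$-controlled $\fZ$, which lives in level $d{+}1$) and \emph{all} known $\fT$-\emph{count} lower-bound machinery (stabilizer rank, stabilizer extent, the algebraic $\fT$-count bounds). Unbounded ancillas also defeat counting arguments, since the space of width-$\fpoly(d)$, $\fT$-depth-$o(d)$ Clifford$+\fT$ circuits is already far larger than the set of Toffoli-depth-$d$ unitaries. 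I therefore expect a full proof to require a genuinely new invariant of the Clifford hierarchy that is stable under both parallel $\fT$ layers and ancilla padding, or a reduction from an unconditional depth lower bound in a related model; without intermediate measurements the Heisenberg-picture / phase-polynomial route looks the most promising, whereas if measurement-based gadgets are permitted I do not currently see even a plausible candidate. A weaker but provable statement would restrict attention to ``oblivious'' transformations that replace each Toffoli layer by a fixed $\fT$-depth-bounded gadget without global reoptimisation; there the bound is essentially by inspection.
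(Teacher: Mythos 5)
This statement is presented in the paper purely as a conjecture: the author offers no proof, only the informal observation that no known transformation beats Toffoli depth without exponential width blow-up. So there is no ``paper proof'' to compare against, and your proposal should be judged as a standalone attempt. As such, it is not a proof. Your reduction framework --- exhibit a family $\{U_d\}$ with Toffoli depth $\Theta(d)$ such that every Clifford$+\fT$ implementation has $\fT$-depth $\Omega(d)$, then feed witnesses for $U_d$ to the transformation algorithm --- is sound as far as it goes (modulo one caveat below), but the entire content of the conjecture is concentrated in your step (b), and you explicitly concede that no functional $\mu$ with the three properties you list is known. Naming candidate invariants (Pauli rank, phase-polynomial degree, Clifford-hierarchy level) and then explaining why each one fails or is unanalyzed does not discharge the obligation; it relocates the conjecture into the existence of $\mu$. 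Your diagnosis of \emph{why} the known $\fT$-count machinery fails (parallel $\fT$ layers, unbounded ancillas, measurement gadgets) is accurate and matches the reason the paper leaves this open, but accuracy about the obstruction is not a proof.

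One additional point worth flagging in your framework: the conjecture as stated quantifies over the algorithm's \emph{inputs}, which are circuits, not unitaries. Your reduction implicitly assumes the transformation is only required to preserve the computed unitary (or an approximation of it), which is the natural reading, but a fully careful argument would need to fix what ``transforms Toffoli circuits into the $\{$Clifford, $\fT\}$ basis'' means --- exact synthesis, approximate synthesis, with or without measurement and classical feedback --- since your step (b) must be proved against whichever model is chosen, and you note yourself that the measurement-based case looks strictly harder. In short: the proposal is a reasonable research program and correctly identifies the missing ingredient, but the statement remains unproven both in the paper and in your write-up.
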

Working with the $\{$C+P, $\fH\}$ basis allows us to design efficient protocols for delegating Shor's algorithm (which has low H-depth). Previously, this was only possible using FHE-based schemes.
\subsection{Protocol Construction}\label{sec:prtlintro}
We now describe our protocol that supports our main results. This protocol gets a public description of a C+P circuit as well as a secret quantum state.\par
The idea comes from Yao's Garbled Circuit construction. We have discussed the construction in section \ref{clgarble}. The garbled circuit construction is commonly used for randomized encodings of classical circuits, but it's not applicable to quantum circuits. In this paper we will show how to do the reversible garbling for C+P circuits. Let's first discuss the ideas briefly.\par
One big difference of classical operations and quantum operations is in quantum world, the operations have to be reversible. Firstly, we will consider the garbling of Toffoli gates. In classical world, the garbled tables can contain non-reversible gates, for example, AND gate, OR gate. But in quantum world, we have to start with the Toffoli gate, which is reversible, and contains 3 input wires and 3 output wires.\par
However, even if the underlying circuit is reversible, if we try to use the classical garbled table construction on a quantum circuit, the garbled circuits is still not reversible, and it's not possible to use it to implement the quantum operations. Note that we need two levels of reversibility here: the circuit to be garbled needs to be reversible, and the garbled circuit itself has to be reversible too, even if it calls the random oracle as a black box.\par
Thus we propose a new garbling technique, which is a reversible garbling of reversible circuits: when constructing the garbled tables, instead of just creating one table for each gate, the client can construct two tables, in one table it encrypts the output keys with the input keys, and in the other table it encrypts the input keys with the output keys! This construction will make the garbled circuit reversible: we will show, the garbled circuit evaluation mapping can be applied on quantum states unitarily.\par
But another problem arises: If we simply replace the garbled circuit in the randomized encoding problem with ``reversible garbled circuit'', it's not secure any more. But it turns out, if we remove the output mapping, it becomes secure again, under some reasonable assumptions. And that gives us a delegation protocol.\par
The full protocol is specified in Protocol \ref{prtl:gbc}. Below we give more details.
\subsubsection{Reversible garbling of Toffoli gates}
First recall that in the classical garbled circuit, the evaluation operation on each garbled gate takes the input keys, decrypts the table and computes the corresponding output keys:\par
$$k_{in}\rightarrow k_{out}$$
This mapping is classical, and there is a standard way to transform a classical circuit to a quantum circuit, by introducing auxiliary output registers, and keeping the input:
\begin{equation}
	U:\ket{k_{in}}\ket{c}\xrightarrow{\text{garbled gate}}\ket{k_{in}}\ket{k_{out}\oplus c}\label{eq:5}
\end{equation}
We use the second register as the output register, and $c$ is its original value. This mapping computes the output keys from the garbled table and xors them to the second register.\par
This mapping is unitary, and we can also put superpositions on the left-hand side of (\ref{eq:5}). However, when it is used directly on quantum states, the inputs and outputs will be entangled together. Explicitly, for a specific Toffoli gate, we use $k^{w1}_u,k^{w2}_v, k^{w3}_w$ to denote the keys of the input wires $w1,w2,w3$ which correspond to the input $(u, v, w)$; for the output part the keys are $k^{v1}_u,k^{v2}_v, k^{v3}_w$. If we apply (\ref{eq:5}) directly, we get:
\begin{align*}U:&\ket{k^{w1}_u}\ket{k^{w2}_v}\ket{k^{w3}_w}\ket{c_1}\ket{c_2}\ket{c_3}\\&\rightarrow\ket{k^{w1}_u}\ket{k^{w2}_v}\ket{k^{w3}_w}\ket{k^{v1}_u\oplus c_1}\ket{k^{v2}_v\oplus c_2}\ket{k^{v3}_{w\oplus uv}\oplus c_3}\end{align*}
But what we need is the following mapping:
\begin{equation}U:\ket{k^{w1}_u}\ket{k^{w2}_v}\ket{k^{w3}_w}\rightarrow\ket{k^{v1}_u}\ket{k^{v2}_v}\ket{k^{v3}_{w\oplus uv}}\label{eq:101}\end{equation}
Which means, we need to disentangle and erase the input registers from the output registers. Note that, again, both sides should be understood as superpositions of different keys. And recall that for each Toffoli gate there are eight possible combinations of input keys, and this mapping should work for all the eight combinations.\par
To erase the input from the output, we can use two mappings: $\ket{k_{in}}\ket{0}\rightarrow\ket{k_{in}}\ket{k_{out}}$ and $\ket{k_{in}}\ket{k_{out}}\rightarrow\ket{0}\ket{k_{out}}$. Both operations have the same form as equation (\ref{eq:5}). (For the second step, we could view the $k_{out}$ as the input, $k_{in}$ as $c$, and get $\ket{k_{out}}\ket{k_{in}\oplus k_{in}}$) So we can use two garbled tables for this ``reversible garbled table''!\par

Assume $\fCL$ is some multiple key encryption scheme with key tags. The client puts the encryption outputs $\fCL.\fEn_{k_{in}}(k_{out})$ into a table(there are eight rows in this table), and shuffles them randomly; this is the forward table. And it puts the encryption outputs $\fCL.\fEn_{k_{out}}(k_{in})$ into a table and shuffles to get the backward table. This construction will allow the server to implement (\ref{eq:101}), even on superpositions of input keys.\par
We note that we do not need to consider the detailed operations for decrypting each garbled table, and the existence of such operations comes from quantize the classical mapping as (\ref{eq:5}).\par
For the encoding of the inputs, recall that in the usual garble table construction, the client encrypts each bit in the inputs with the mapping:
 \begin{equation}
 	0\rightarrow k_0,1\rightarrow k_1\label{eq:6}
 \end{equation}
To make it quantum, instead of replacing the classical bits with the corresponding keys, the client uses $\fEt$ operator to hide the inputs. And we notice that (\ref{eq:6}) is a special case of $\fEt$ where the input is classical.\par
\subsubsection{Phase gates}
Now the protocol works for Toffoli gates. But what if there are phase gates?\par
From Proposition \ref{fact:C+Pdecomp}, we only need to consider the single qubit phase gates in the form of $R_Z(\frac{\pi}{n}), n\in \bZ_+$. Suppose we want to implement such a gate on some wire, where the keys are $k_0,k_1$, corresponding to values $0$ and $1$, as discussed in the last subsection.\par
To implement $R_Z(\frac{\pi}{n})$, the client first picks a random integer $m\in \bZ_n$. What it is going to do is to create a table of two rows, put $\fCL.\fEn_{k_0}(m)$ and $\fCL.\fEn_{k_1}(m+1)$ into the table and shuffle it. When the server needs to evaluate $R_Z(\frac{\pi}{n})$, it will first decrypt the garbled table and write the output on an auxiliary register $\ket{0}$. So it can implement the following transformation:
\begin{equation}
	\ket{k_0}\rightarrow\ket{k_0}\ket{m}, \ket{k_1}\rightarrow\ket{k_1}\ket{m+1}\label{eq:7}
\end{equation}
This step is similar to implementing equation (\ref{eq:5}).\par
Then it applies a ``qudit $Z$ gate'' $\sum_i\omega^i_n\ket{i}\bra{i}$ on the second register, where $i\in \bZ_n$ goes through all the integers in $\bZ_n$.(This operation can be done efficiently.) This will give us:
$$\ket{k_0}\rightarrow\omega^m_n\ket{k_0}\ket{m}, \ket{k_1}\rightarrow\omega^{m+1}_n\ket{k_1}\ket{m+1}$$
Then it applies (\ref{eq:7}) again to erase the second register. After removing the global phase the result is the same as the output of applying a phase gate $R_Z(\frac{\pi}{n})=\ket{0}\bra{0}+\omega_n\ket{1}\bra{1}$.\par
What's more, since $m$ is chosen randomly the garbled gate won't reveal the keys. (This fact is contained in the security proof.)

\subsection{Protocol Design}

In this section we formalize this garbled circuit based quantum computation delegation protocol. Let's call it $\fGBC$.\par
We index the wires in the circuit as follows: If two wires are separated by a single qubit phase gate, we consider them as the same wire; otherwise (separated by a Toffoli gate, or disjoint), they are different wires. Suppose we have already transformed the circuit using Fact \ref{fact:C+Pdecomp} so that there is no controlled phase gate. For a circuit with $N$ input bits and $L$ gates, the number of wires is at most $N+3L$.
\begin{prtl}\label{prtl:gbc} The protocol $\fGBC$, with $\fCL$ being the underlying classical encryption scheme, for a circuit $C$ which is composed of Toffoli gates and phase gates in the form of $R_Z(\frac{\pi}{n})$, is defined as:
	\begin{description}
		\item[Key Generation]$\fGBC.\fKg(1^\kappa,1^N,1^L)$: Sample keys $K=(k_b^l)$,\\ $k_b^l\leftarrow \fCL.\fKg(1^\kappa)$, $b\in \{0,1\},l\in [N+3L]$.
		\item[Encryption]$\fGBC.\fEn_K^C(\rho)$:
		Output $(\fEt_{K_{in}}(\rho), \fTAB_\fCL^C(K))$. (Note that with the reference system, the first part is $(\fI\otimes \fEt_{K_{in}})(\rho_{RS})$.)
		\item[{Evaluation}] $\fGBC.\fEv^C(c)$, where $c=(\rho_q,tabs)$: Output $\fEvTAB_\fCL^C(\rho_q,tabs)$
		\item[{Decryption}]$\fGBC.\fDc_K(\sigma)$: Suppose the output keys in $K$ are $K_{out}$. Apply the map $\fEt^{-1}_{K_{out}}(\cdot)$ on $\sigma$ and return the result.
	\end{description}
\end{prtl}
$\fTAB_\fCL^C(K)$ and $\fEvTAB_\fCL^C(\rho_q,tabs)$ appeared in this protocol are defined as follows:
\begin{prtl} $\fTAB_{\fCL}^C(K)$, where $K$ is the set of keys:\par
	Suppose circuit $C$ is composed of gates $(g_i)_{i=1}^L$. This algorithm returns $(tab_{g_i})_{i=1}^L$, where $tab_g$ is defined as follows:
	\begin{enumerate}
		\item If $g$ is a Toffoli gate: Suppose $g$ has controlled input wires $w1,w2$ and target wire $w3$, and the corresponding output wires are $v1,v2,v3$. Suppose the corresponding keys in $K$ are $\{k^w_b\},w\in \{w1,w2,w3,v1,v2,v3\},b\in\{0,1\}$:\par
Create $table1$ as follows: For each triple $u,v, w \in \{0,1\}^3$, add the following as a row:
$$\fCL.\fEn_{k_u^{w1},k_v^{w2},k_w^{w3}}(k_u^{v_1}||k_v^{v_2}||k_{w\oplus uv}^{v_3})$$
			            and pick a random permutation in $S_8$ to shuffle this table.\par
			            Create $table2$ as follows: For each triple $u,v, w \in \{0,1\}^3$, add the following as a row:
			            $$\fCL.\fEn_{k_u^{v_1},k_v^{v_2},k_{w\oplus uv}^{v_3}}(k_u^{w1}||k_v^{w2}||k_w^{w3})$$
			            and pick another random permutation in $S_8$ to shuffle this table.\\
			       Return $(table1,table2)$

		\item If $g$ is a phase gate, Suppose $g$ is a phase gate $R_Z(\frac{\pi}{n})$ on wire $w$:\par
		Sample $m_0\leftarrow_r \bZ_n$, $m_1=m_0+1$. Create $table1$ as follows: For each $u \in \{0,1\}$, add the following as a row:
			            $$\fCL.\fEn_{k_u^{w}}(m_{u})$$
			            and pick a random permutation in $S_2$ to shuffle this table.\\ Return $table1$.
	\end{enumerate}
\end{prtl}
\begin{prtl} $\fEvTAB^C_{\fCL}(\rho, tab)$:\par
	Suppose circuit $C$ is composed of gates $(g_i)_{i=1}^L$. For each gate $g$ in $C$, whose corresponding garbled gate is $tab_g$ in $tab$:\par
	If $g$ is a Toffoli gate, with input wires $w1,w2,w3$, output wires $v1,v2,v3$: Suppose $tab_g=(tab1,tab2)$, where $tab1$ is the table from input keys to output keys, and $tab2$ is from output keys to input keys. Suppose $\rho\in \bD(\cS_g\otimes \cS^\prime)$, where $\cS_g$ is the system that is currently storing the keys on the input wires of $g$, and $\cS^\prime$ is the remaining systems:
	\begin{enumerate}
		\item Introduce three auxiliary registers and denote the system as $S_g^\prime$. Use $tab1$ to apply the following mapping on $S_g$, as discussed in the section \ref{sec:prtlintro}:
		$$\ket{k^{w1}_u}\ket{k^{w2}_v}\ket{k^{w3}_w}\ket{0}\ket{0}\ket{0}\rightarrow\ket{k^{w1}_u}\ket{k^{w2}_v}\ket{k^{w3}_w}\ket{k^{v1}_u}\ket{k^{v2}_v}\ket{k^{v3}_{w\oplus uv}}$$
		\item Use $tab2$ to apply the following mapping on $\cS_g\otimes \cS^\prime_g$, as discussed in the section \ref{sec:prtlintro}:
		$$\ket{k^{w1}_u}\ket{k^{w2}_v}\ket{k^{w3}_w}\ket{k^{v1}_u}\ket{k^{v2}_v}\ket{k^{v3}_{w\oplus uv}}\rightarrow\ket{0}\ket{0}\ket{0}\ket{k^{v1}_u}\ket{k^{v2}_v}\ket{k^{v3}_{w\oplus uv}}$$
		\item Remove system $\cS_g$, rename $\cS_g^\prime$ as $\cS_g$. Denote the final state as the new $\rho$.
	\end{enumerate}
	If $g$ is a phase gate on wire $w$ in the form of $R_Z(\frac{\pi}{n})$, : Suppose $\rho\in D(\cS_g\otimes \cS^\prime)$, where $\cS_g$ is the system that stores the keys on the input wire of $g$, and $S^\prime$ is the remaining systems:
	\begin{enumerate}
		\item Use $tab_g$ to implement the mapping $\ket{k^w}\ket{0}\rightarrow\ket{k^w}\ket{m}$, where $m$ is the decrypted output.
		\item Apply $\sum_i\omega^i_n\ket{i}\bra{i}$ on the system of $m$.
		\item Use $tab_g$ to implement the mapping $\ket{k^w}\ket{m}\rightarrow\ket{k^w}\ket{0}$.
	\end{enumerate}
\end{prtl}
The following two theorems summarize its correctness and efficiency:
\begin{theorem}
	Protocol $\fGBC$ is a correct non-interactive quantum computation delegation protocol for C+P circuits.
\end{theorem}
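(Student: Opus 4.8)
The plan is to prove correctness \emph{gate by gate} and then compose, maintaining throughout the evaluation the invariant that after the server has processed the first $j$ gates of $C$, its state equals $(\fI\otimes\fEt_{K_j})$ applied to the partially‑computed state, where $K_j$ is the tuple of keys currently labelling the ``active'' wires (those carrying the evolving encoded state). The base case is exactly $\fGBC.\fEn$: by construction the quantum part of the ciphertext is $(\fI\otimes\fEt_{K_{in}})(\rho_{RS})$, i.e.\ the invariant with $j=0$ and $K_0=K_{in}$. The inductive step is the content of the proof: one shows that the garbled‑gate evaluation in $\fEvTAB$ carries $\fEt$‑encoded data under the old wire keys to $\fEt$‑encoded data under the new wire keys, realising precisely the corresponding gate. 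It suffices to treat $C$ already in the normal form of Proposition~\ref{fact:C+Pdecomp} (Toffoli gates and gates $R_Z(\pi/n)$), absorbing the residual approximation error of that proposition into the negligible budget by choosing its precision parameter $D$ large enough; non‑interactivity is immediate from the structure of the protocol.

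For a Toffoli gate I would first record the single‑gate lemma. Quantizing the classical row‑trying/verify/decrypt procedure as in equation~(\ref{eq:5}) yields a unitary $\ket{k_{in}}\ket{c}\mapsto\ket{k_{in}}\ket{\Phi(k_{in})\oplus c}$, where $\Phi$ is the classical input‑keys‑to‑output‑keys function defined by $table1$; taking $c=0$ gives step~1 of $\fEvTAB$. Applying the analogous unitary for $table2$ (whose plaintexts are the \emph{input} keys), now with the register $\cS_g$ playing the role of $c$, gives $\ket{k_{out}}\ket{k_{in}}\mapsto\ket{k_{out}}\ket{0}$, which is step~2: it disentangles and clears $\cS_g$ so that step~3 may discard it. On each of the eight basis states $\ket{k^{w1}_u}\ket{k^{w2}_v}\ket{k^{w3}_w}$ the composition produces $\ket{k^{v1}_u}\ket{k^{v2}_v}\ket{k^{v3}_{w\oplus uv}}$, which under the $\fEt$ dictionary is exactly the Toffoli map $\ket{u}\ket{v}\ket{w}\mapsto\ket{u}\ket{v}\ket{w\oplus uv}$; by linearity it acts correctly on arbitrary superpositions, and tensoring with $\fI$ on the reference $\cR$ covers entangled inputs. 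For a phase gate $R_Z(\pi/n)$ on wire $w$, steps~1--3 implement $\ket{k_0^w}\mapsto\omega_n^{m_0}\ket{k_0^w}$ and $\ket{k_1^w}\mapsto\omega_n^{m_0+1}\ket{k_1^w}$; since $m_0$ is a fixed scalar chosen by the client, $\omega_n^{m_0}$ is an unobservable global phase and the residual relative phase $\omega_n$ is precisely the $\fEt$‑encoded action of $R_Z(\pi/n)$.

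Chaining the single‑gate lemma over the $L$ gates of $C$ shows that $\fGBC.\fEv^C$ sends $(\fI\otimes\fEt_{K_{in}})(\rho)$ to $(\fI\otimes\fEt_{K_{out}})(C(\rho))$, where $K_{out}$ are the keys on the output wires; then $\fGBC.\fDc_K$ applies $\fEt^{-1}_{K_{out}}$, which inverts $\fEt$ because the two keys on each output wire are distinct, recovering $C(\rho)$. To make the ``single‑gate lemma applies'' claim rigorous I would invoke the two correctness conditions of the symmetric multi‑key encryption scheme with key tags: for a uniformly random key tuple $K$, the probability that some row of some table fails to decrypt to its intended plaintext, or that a wrong key spuriously passes $\fCL.\fVer$ on some row, is negligible; a union bound over the $O(L)$ rows and tables keeps it negligible. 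Conditioned on that bad event not occurring, every quantized evaluation acts \emph{exactly} as in the single‑gate lemma on the relevant code space, so the accumulated error is zero; on the bad event the trace distance is bounded trivially by $1$. Hence $\bE_{K}\,\Delta\big(\fGBC.\fDc_K(\fGBC.\fEv^C(\fGBC.\fEn_K^C(\rho))),\,C(\rho)\big)=\fneg(\kappa)$, which is the required correctness.

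The main obstacle is the careful treatment of the quantized garbled‑table evaluation: one must argue that ``run the classical scan/verify/decrypt procedure on a shuffled table'' genuinely compiles to a well‑defined unitary — in particular, that the scratch registers used while scanning the table can be computed and then uncomputed, and that the step‑1 output register, after step~2 with $table2$, is returned to $\ket{0}$ and disentangled — and that the only loss of exactness comes from the probabilistic guarantees of $\fCL.\fVer$ and $\fCL.\fDc$. Lifting those ``probability over $K$'' guarantees to a trace‑distance statement about the output \emph{quantum} states is cleanest by conditioning on the negligible‑measure bad‑$K$ event, as above, rather than by a per‑basis‑state perturbation argument, which for superpositions over the eight input‑key branches would only give a $\sqrt{\,\cdot\,}$‑type loss. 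By comparison, the bookkeeping of wire relabelling inside $\fEvTAB$ (phase gates reuse a wire, Toffoli gates spawn three fresh output wires) is routine once the invariant is phrased in terms of the currently active key tuple $K_j$.
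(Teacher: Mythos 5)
Your proposal is correct and follows essentially the same route the paper takes: the paper's correctness argument is exactly the Section~4.2 discussion (quantizing the classical table evaluation as in equation~(\ref{eq:5}), using the forward table to compute $\ket{k_{out}}$ and the backward table to erase $\ket{k_{in}}$, checking the eight key triples reproduce the Toffoli action under the $\fEt$ dictionary, and cancelling the global phase $\omega_n^{m_0}$ for phase gates), with the output decoded by $\fEt_{K_{out}}^{-1}$. Your additional care about conditioning on the negligible decryption/verification failure event and the union bound over rows only makes explicit what the paper leaves implicit.
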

\begin{theorem}\label{thm:complexity}
In $\fGBC$ protocol, the quantum resources required on the client side are  $O(\kappa N_q)$ $\fCN$ gates, where $\kappa$ stands for the key length used in the protocol, $N_q$ is the size of quantum states in the input, which are independent of the size of the circuit.	
\end{theorem}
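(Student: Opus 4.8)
The plan is to walk through Protocol~\ref{prtl:gbc} step by step and isolate exactly which operations force the client to touch a quantum register, then bound the cost of each such operation using Fact~1 (the $\fCN$-only implementation of $\fEt$).

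First I would observe that of the four algorithms comprising $\fGBC$, two are entirely classical and hence contribute nothing to the quantum budget: $\fGBC.\fKg$ merely runs $\fCL.\fKg(1^\kappa)$ once per wire (there are $N+3L$ wires), and the garbled-table construction $\fTAB_\fCL^C(K)$ only evaluates the classical routine $\fCL.\fEn$ on classical strings, concatenates keys, and applies random permutations in $S_8$ or $S_2$. So although the classical cost of these steps grows with the circuit length $L$, they involve no quantum gates, no quantum memory and no quantum communication. The table-evaluation routine $\fEvTAB^C_\fCL$ is run by the server, not the client. Consequently the \emph{only} genuinely quantum work the client performs is: (i) in $\fGBC.\fEn_K^C$, applying $\fEt_{K_{in}}$ to the quantum input register; and (ii) in $\fGBC.\fDc_K$, applying $\fEt^{-1}_{K_{out}}$ to the register the server returns. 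Moreover $\fEt$ applied to a \emph{classical} input bit is just ``write down $k_0$ or $k_1$'', requiring no quantum gate at all, so only the $N_q$ genuinely quantum input qubits incur a quantum cost.

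Next I would bound these two maps. By Fact~1, $\fEt_{k_0,k_1}$ on a single qubit is realised by CNOTing that qubit into the (at most $\kappa$) coordinates where $k_0\oplus k_1$ equals $1$ and then XORing in $k_0$ with single-qubit $\fX$ gates; this is $O(\kappa)$ $\fCN$ gates together with $O(\kappa)$ single-qubit Pauli $\fX$ gates, and it enlarges a one-qubit register to a $\kappa$-qubit register. Applying this independently to each of the $N_q$ quantum input qubits gives $\fEt_{K_{in}}$ at a cost of $O(\kappa N_q)$ $\fCN$ gates (plus $O(\kappa N_q)$ single-qubit $\fX$ gates, which we account as free, consistently with the usual quantum-one-time-pad-style bookkeeping). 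The decryption $\fEt^{-1}_{K_{out}}$ is the inverse circuit, applied to the output wires carrying quantum data; since Toffoli and phase gates act reversibly on a fixed-width register, the quantum part of the output has size $O(N_q)$, so this costs another $O(\kappa N_q)$ $\fCN$ gates. Summing, the client uses $O(\kappa N_q)$ $\fCN$ gates in total; along the way it never holds more than $O(\kappa N_q)$ qubits, and it sends and receives $O(\kappa N_q)$ qubits of quantum communication, so the aggregate ``quantum resource'' cost in the sense of the introduction is $O(\kappa N_q)$. Crucially every one of these bounds depends only on $\kappa$ and $N_q$: the dependence on the circuit size $L$ resides entirely in the classical tables.

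I do not expect a real obstacle; this is essentially a bookkeeping corollary of Protocol~\ref{prtl:gbc} and Fact~1. The points needing a little care are: verifying that nothing in $\fTAB$ or $\fEvTAB$ is pushed onto the client (all table \emph{construction} is classical, all table \emph{evaluation} is done server-side); pinning down that decryption only ever applies $\fEt^{-1}$ to a register of size $O(N_q)$ rather than something that scales with $L$; and being explicit that the $O(\kappa N_q)$ auxiliary $\fX$ gates are single-qubit operations we treat as negligible, so that ``$O(\kappa N_q)$ $\fCN$ gates'' is the honest one-line summary of the client's quantum cost.
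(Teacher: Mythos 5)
Your proposal is correct and is essentially the argument the paper intends: the paper gives no separate proof of this theorem, treating it as immediate from the protocol description together with Fact~1, which is exactly the bookkeeping you carry out (only $\fEt_{K_{in}}$ in encryption and $\fEt^{-1}_{K_{out}}$ in decryption touch quantum registers, each costing $O(\kappa)$ $\fCN$ gates per quantum input qubit, with all table construction classical and all table evaluation server-side). Your explicit handling of the classical-versus-quantum input split and of the auxiliary $\fX$ gates is consistent with the paper's conventions and, if anything, more careful than the paper itself.
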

Here we use $N_q$ instead of $N$ because we want to consider the case where some part of the input is classical and some part of it is quantum. To make the protocol secure we may need to choose $\kappa$ depending on $N_q$. This is discussed with more details in Section \ref{sec:security}.\par
This means the quantum resources of this protocol are independent of the circuit to be evaluated! In practice the size of the circuit may be a large polynomial of the input size, and our protocol will not be affected by this.
\subsection{Structure of the Security Proofs}
The structure of the security proofs is as follows. First we study the key dependent message security in the quantum world, and design a protocol which we call the $\fKDMP$ protocol. Note that this part is not about the garbling scheme.\par
Then for the garbling scheme, we first state Proposition 2, which is the IND-CPA security of our garbling scheme. And we state a lemma about the security of the garbling scheme, which is the Lemma 2. The proofs use a reduction to the security of the $\fKDMP$ protocol. And the proofs are in the full version.\par
Then we prove the security of our garbling scheme (Theorem 6) from Proposition 2 and lemma 2. This part is given in the main content.
\section{KDM Security of Classical Encryption against Quantum Attack}
As we can see, in $\fGBC$ protocol there are encryption cycles. So to make the protocol secure, for the underlying encryption scheme $\fCL$, the usual security definition is not enough and we need at least KDM security. In this section, we will first discuss the key dependent message security(KDM security) in quantum world, and give an encryption scheme $\fKDMP$ that is KDM-secure against quantum adversaries. These results will be the foundation for the security proof of the $\fGBC$ protocol.\par
In classical world, KDM security was discussed in several papers, for example, \cite{KDMCRO,cliqueKDMFromLWE}. \cite{KDMCRO} gave a classical KDM secure encryption scheme in the random oracle model, and \cite{cliqueKDMFromLWE} constructed KDM secure protocols in the standard model, based on some hard problems, for example, Learning-With-Error.\par
\subsection{KDM Security in the Classical World}
As a part of the preliminaries, we repeat the definition of the security game of the classical KDM security\cite{KDMCRO}\cite{cliqueKDMFromLWE}.\par
\begin{definition}
	The KDM-CPA game is defined similar to the IND-CPA game, except that (1)in the first step the challenger runs $\fKg(1^\kappa)$ for $N$ times to generate $\cK=\{sk_i\}_{i\in [N]}$, $N$ is less than a polynomial of the security parameter.(2)the client is allowed to query the encryption oracle with a function $f\in \cF$, a message $m$, and an index $i$ of the keys, and the encryption oracle returns $\fEn_{sk_i}(f(K,m))$ or $\fEn_{sk_i}(0^{|f(K,m)|})$, depending on $b$. Note that the outputs of functions in $\cF$ should be fixed-length, otherwise $|f(K,m)|$ is not well-defined.
\end{definition}

\subsection{KDM Security in the Quantum World}
The attack for the KDM security can be adaptive, which means, the adversary can make encryption queries after it receives some ciphertexts. But in our work we only need to consider the non-adaptive setting. What's more, we only need to consider the symmetric key case. To summarize, the game between the adversary and the challenger can be defined as:
\begin{definition}[naSymKDM Game]
	
 The symmetric key non-adaptive KDM game naSymKDM for function family $\cF$ against a quantum adversary $\sA$ in the quantum random oracle model with parameters $(\kappa, L, T, q)$ is defined as follows.
	\begin{enumerate}
		\item The challenger chooses bit $b\leftarrow_r\{0,1\}$ and samples $K=\{sk_i\}_{i=1}^{L}$, $sk_i\leftarrow\fKg(1^\kappa)$.
		\item The adversary and the challenger do the following $T$ times, non-adaptively, which means, the challenger will only send out the answers in step (b) after it has received all the queries:
		      \begin{enumerate}
			      \item The adversary picks index $i$, function $f\in \cF$ and message $msg\in\{0,1\}^*$, and sends them to the challenger. The size of $msg$ should be compatible with $f$.
			      \item If $b=1$, the challenger gives $c=\fEn_{sk_i}(f(K,msg))$ to the adversary. If $b=0$, the challenger gives $c=\fEn_{sk_i}(0^{|f(K,msg)|})$.
		      \end{enumerate}

		\item The adversary tries to guess $b$ using distinguisher $\cD$ and outputs $b^\prime$. Here $\cD$ is a quantum operation and can query the oracle with quantum states. Suppose $\cD$ will query the random oracle for at most $q$ times.
	\end{enumerate}
$f$ can also query the random oracle, and it only makes queries on classical states. What's more, the output of functions in $\cF$ should have a fixed length, otherwise $|f(K,m)|$ will not be well-defined.\par
	The guessing advantage is defined as $\fAdv^{naSymKDM}_{\cF}(\sA_{(L,T,q)},\kappa)=|\Pr(b^\prime=1|b=1)-\Pr(b^\prime=1|b=0)|$.
\end{definition}
\begin{definition}
	A symmetric key encryption scheme is nonadaptive KDM secure for circuit family $\cF$ against quantum adversaries in the quantum random oracle model if for any $BQP$ adversary, $$\fAdv^{naSymKDM}_{\cF}(\sA_{(L(\kappa),T(\kappa),q(\kappa))},\kappa)=\fneg(\kappa)$$
	Where $L(\kappa),T(\kappa),q(\kappa)$ are polynomial functions that may depend on the adversary.
\end{definition}

\subsection{A KDM Secure Protocol in the Quantum Random Oracle Model}
In the quantum random oracle model, we can give a construction of the classical KDM secure encryption scheme $\fKDMP$. Here ``classical'' means the encryption and decryption are purely classical. But the distinguisher may query the quantum random oracle in superposition.
\begin{prtl}\label{prtl:kdmro} We can construct a symmetric KDM secure encryption scheme $\fKDMP$ that has key tags in the quantum random oracle model, where we denote the random oracle as $\cH$:
	\begin{description}
		\item $\fKDMP.\fKg(1^\kappa)$: Output $sk\leftarrow_r \{0,1\}^\kappa$
		\item $\fKDMP.\fEn_{sk}(m)$: $R_1,R_2\leftarrow_r\{0,1\}^\kappa$, output ciphertext $c=(R_1,\cH(sk||R_1)\oplus m)$ and key tag $(R_2, \cH(sk||R_2))$\item $\fKDMP.\fDc_{sk}(c)$: Output $\cH(sk||c_1)\oplus c_2$, where $c_1$ and $c_2$ are from $c=(c_1,c_2)$.
		\item $\fKDMP.\fVer(k,tag)$: Suppose $tag=(tag1,tag2)$, output $1$ if $\cH(k||tag1)=tag2$, and $\perp$ otherwise.
	\end{description}
\end{prtl}
Since the execution of this protocol is classical, the correctness can be proved classically and is obvious. We refer to \cite{KDMCRO} here and write it out explicitly for convenience.
\begin{theorem}[Correctness]
	$\fKDMP$ is a correct symmetric key encryption scheme with key tags in the quantum random oracle model.
\end{theorem}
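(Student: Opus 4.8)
The plan is to observe at the outset that $\fKDMP$ is an entirely classical scheme: $\fEn$, $\fDc$, and $\fVer$ operate on classical strings and query $\cH$ only on classical inputs. Hence it suffices to check the three correctness conditions in the definition of a symmetric-key encryption scheme with key tags, and the ``quantum'' status of the oracle is irrelevant for this purpose: as a random object, $\cH$ restricted to any fixed finite set of classical query points is just a tuple of i.i.d.\ uniform strings, no matter whether some party may later query it in superposition. Exact classical identities therefore lift to the quantum formulation, so the only error we can incur comes from the oracle's randomness.

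First I would dispatch decryption correctness by unfolding the definitions. Realized coherently, $\fEn_{sk}$ samples a classical $R_1$, appends a register holding it, and XORs $\cH(sk||R_1)$ into the message register; for each fixed $R_1$ this is a reversible operation independent of the message, and $\fDc_{sk}$ inverts it exactly, so the composite channel is the identity even on superpositions and on states entangled with a reference system. Thus the trace-distance error in condition~1 is $0$. Likewise, an honest encryption under $sk$ produces the key tag $(R_2,\cH(sk||R_2))$, and $\fVer(sk,\cdot)$ merely recomputes $\cH(sk||R_2)$ and compares, so it accepts with probability $1$; the first probability in condition~2 is $0$.

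The only step where the randomness of $\cH$ enters --- and the one I would treat with (minimal) care --- is the soundness clause of condition~2: for an independently sampled $r\leftarrow_r\{0,1\}^\kappa$, bound $\Pr(\fVer(r,(R_2,\cH(sk||R_2)))=1)=\Pr(\cH(r||R_2)=\cH(sk||R_2))$. I would split on whether $r=sk$, which has probability $2^{-\kappa}$ and trivially makes verification accept, or $r\neq sk$, in which case $r||R_2$ and $sk||R_2$ are distinct oracle inputs, so $\cH(r||R_2)$ is uniform and independent of $\cH(sk||R_2)$ and they collide with probability $2^{-\kappa}$. A union bound gives $\le 2^{-(\kappa-1)}=\fneg(\kappa)$, which establishes the claim. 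There is no substantive obstacle here: the proof is essentially the one-line random-oracle collision argument above, and the one thing worth stating cleanly is that the scheme being classical lets us sidestep any quantum subtleties about the oracle or about quantum inputs.
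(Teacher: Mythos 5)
Your proposal is correct and follows the same route the paper takes: the paper simply remarks that since the execution of $\fKDMP$ is entirely classical, correctness is immediate (deferring to the classical treatment in \cite{KDMCRO}), and your write-up just makes that observation explicit — exact cancellation of the $\cH(sk||R_1)$ pad for decryption, trivial acceptance of honest tags, and a $2^{-(\kappa-1)}$ collision bound for a mismatched key. Nothing further is needed.
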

The security under classical random oracle model has been proven. But here we study the quantum random oracle, so although the protocol is almost the same, we still need a new proof.
\begin{theorem}[Security]\label{thm:rosec}
	Define $\cF[q^\prime]$ as the set of classical functions that query the random oracle at most $q^\prime$ times. For any adversary which can query the random oracle quantumly at most $q$ times, we have
	$$\fAdv_{\fKDMP, \cF[q^\prime]}^{naSymKDM}(\sA_{(L,T,q)},\kappa)\leq \fpoly(q,q^\prime,L,T)2^{-0.5\kappa}$$
	where $\fpoly$ is a fixed polynomial.
\end{theorem}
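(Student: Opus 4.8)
The plan is to lift the classical ``random oracle acts as a one-time pad'' argument for $\fKDMP$ to the quantum setting, using the One-Way-to-Hiding (O2H) lemma to cope with superposition queries; the square-root loss inherent in O2H is precisely what degrades the classical bound $\fpoly\cdot 2^{-\kappa}$ to the stated $\fpoly\cdot 2^{-0.5\kappa}$. Set $S=\{\,sk_i\|x : i\in[L],\ x\in\{0,1\}^{\kappa}\,\}$, the set of oracle inputs ``prefixed by a key'', whose relative density in the input domain $\{0,1\}^{2\kappa}$ is $L\cdot 2^{-\kappa}$.

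First I would introduce a hybrid game identical to the real \textsf{naSymKDM} game except that the oracle $\cH$ is replaced everywhere---for the distinguisher $\cD$, for the honest encryption procedure, and for every invocation of a function $f\in\cF[q']$---by an oracle $G$ that agrees with $\cH$ off $S$ and takes fresh independent uniform values on $S$. In this hybrid, each ciphertext's masked component $G(sk_i\|R_1)\oplus(\cdot)$ is, from $\cD$'s viewpoint, an independent uniform string, provided (i) the pairs $(i,R_1)$ used across the $T$ rounds are distinct, (ii) no encryption randomness $R_1$ collides with a key-tag randomness $R_2$ used with the same key, and (iii) the $sk_i$ are distinct; each failure has probability $\le\fpoly(L,T)\,2^{-\kappa}$, crucially because the challenger samples $R_1,R_2$ fresh \emph{after} evaluating $f$, so they also avoid $f$'s $\le q'$ classical query points. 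Hence in the hybrid $\cD$'s view is statistically independent of the bit $b$ and of the plaintexts $f(K,msg_t)$---the KDM ``encryption cycles'' collapse because every pad is independent fresh randomness---so the hybrid advantage is $\le\fpoly(L,T)\,2^{-\kappa}$.

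Next I would invoke O2H: the real game and the hybrid differ only in the oracle, and $\cH,G$ differ only on $S$, so $|\fAdv_{\mathrm{real}}-\fAdv_{\mathrm{hyb}}|\le 2q_{\mathrm{tot}}\sqrt{P_{\mathrm{find}}}$, where $q_{\mathrm{tot}}=q+Tq'+O(T)$ counts \emph{all} oracle queries (by $\cD$, by the $f$'s, and by the honest encryptions) and $P_{\mathrm{find}}$ is the probability that measuring a uniformly chosen such query returns an element of $S$. It remains to bound $P_{\mathrm{find}}$: in a game where the keys are genuinely hidden, querying a point of $S$ amounts to an unstructured quantum search for one of $L$ uniformly random $\kappa$-bit strings (the key tags supply at most one membership-test ``handle'' per key, which helps only by a constant factor), so by the optimality of Grover search $P_{\mathrm{find}}\le O(q_{\mathrm{tot}}^2\,L\,2^{-\kappa})$; combined with the previous step this yields $\fAdv_{\mathrm{real}}\le\fpoly(q,q',L,T)\,2^{-0.5\kappa}$.

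The main obstacle is twofold. First, the KDM oracle runs adversary-chosen functions $f$ that are \emph{entitled} to depend on, and to query the oracle at points prefixed by, the secret keys; one must cleanly separate this honest use of the keys from the distinguisher illegitimately \emph{finding} a key, which is exactly why the freshness of $R_1,R_2$ (sampled after $f$ runs) is essential. Second, and more fundamental, there is an apparent circularity: the bound on $P_{\mathrm{find}}$ needs the keys hidden, which holds only in the hybrid, while O2H relates the two games through $P_{\mathrm{find}}$. I would resolve this either with the semi-classical/punctured form of O2H (which bounds $P_{\mathrm{find}}$ directly in the punctured game, where $\cH$ carries no information about $S$) or with a short sequence of games transferring the $P_{\mathrm{find}}$ estimate from the hybrid back to the real game at a second square-root cost; in both cases the total loss stays $\fpoly\cdot 2^{-0.5\kappa}$, which is the source of the $2^{-0.5\kappa}$ in the statement.
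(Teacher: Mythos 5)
Your proposal follows essentially the same route as the paper's proof: the paper hand-rolls the BBBV-style hybrid argument that the O2H lemma packages, replacing the distinguisher's oracle by one that is independently random on key-prefixed inputs, bounding the resulting hybrid's advantage by the classical bad events (collisions, and $f$'s classical queries hitting $sk_i\|R_1$, which is exactly why the fresh sampling of the paddings after $f$ runs matters), and resolving the circularity you identify by transferring the query-magnitude estimate from the real ciphertexts to the all-zero ciphertexts at an additive cost and then averaging over the now-uncorrelated keys. One caution: as literally written your hybrid replaces $\cH$ by $G$ \emph{everywhere including the honest encryption}, which would make the hybrid distributionally identical to the real game; the pads must remain values of $\cH$ (equivalently, be decoupled from the distinguisher's oracle), as your own subsequent reasoning about the pads looking uniform to $\cD$ implicitly assumes.
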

We put the proof in the full version of this paper.
\section{Security of $\fGBC$ Protocol}\label{sec:security}
In this section we discuss the security of protocol $\fGBC$. First we need to construct a classical encryption scheme $\fCL$ as its underlying scheme. The construction is very similar to the $\fKDMP$ scheme, except that this is multi-key and the $\fKDMP$ scheme is single-key. We will use it as the underlying scheme of $\fGBC$.
\subsection{Construction of the Underlying Classical Encryption Scheme}

\begin{prtl}\label{prtl:clro} The underlying multi-key encryption scheme $\fCL$ is defined as:
	\begin{description}
		\item $\fCL.\fKg(1^\kappa)$: Output $sk\leftarrow_r \{0,1\}^\kappa$
		\item $\fCL.\fEn_{k_1,k_2,k_3}(m)$: $R_1,R_2,R_3,R_4,R_5,R_6\leftarrow_r\{0,1\}^\kappa$, output
		\begin{gather}
			(R_1,R_2,R_3,\cH(k_1||R_1)\oplus\cH(k_2||R_2)\oplus\cH(k_3||R_3)\oplus m),\\
			((R_4, \cH(k_1||R_4)), (R_5, \cH(k_2||R_5)), (R_6, \cH(k_3||R_6)))
		\end{gather}

		where $\cH$ is the quantum random oracle.
		\item $\fCL.\fDc_{k_1,k_2,k_3}(c)$: Suppose $c=(R_1,R_2,R_3,c_4)$. Output $(\cH(k_1||R_1)\oplus\cH(k_2||R_2)\oplus\cH(k_3||R_3)\oplus c_4)$.
		\item $\fCL.\fVer(k,i,c)$: Suppose the $i$th key tag in $c$ is $tag_i=(R_i,r)$. Output $1$ if $r=\cH(k||R_i)$, and $\perp$ otherwise.
	\end{description}
\end{prtl}
We choose not to define and discuss the security of this scheme, but use it as a ``wrapper'' of the $\fKDMP$ scheme. In the security proof we will ``unwrap'' its structure and base the proof on the security of $\fKDMP$ scheme.
\subsection{Security of $\fGBC$ against Classical or Quantum Attack}
In this subsection we give the security statements of $\fGBC$. First, we can show, when used on classical inputs, $\fGBC_\fCL$ is secure:
\begin{proposition}\label{prop:clsec}
	$\fGBC_\fCL$, where $\fCL$ is defined as Protocol \ref{prtl:clro}, is one-shot IND-CPA secure against quantum adversary (that is, secure when used to encrypt one \emph{classical} input) in the quantum random oracle model. Explicitly, if the distinguisher that the adversary uses makes at most $q$ queries to the quantum random oracle, the input size is $N$ and the size of circuit $C$ is $L$,
	$$\fAdv^{IND-CPA-oneshot}_{\fGBC_\fCL^C}(\sA,\kappa)\leq \fpoly(q,N,L)2^{-0.5\kappa}$$
	Where $\fpoly$ is a fixed polynomial that does not depend on $\sA$ or the parameters.
\end{proposition}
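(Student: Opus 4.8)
The plan is to reduce to the naSymKDM security of $\fKDMP$ (Theorem \ref{thm:rosec}), exploiting the fact that $\fCL$ is a ``multi-key wrapper'' of $\fKDMP$. The first observation is that $\fTAB_\fCL^C(K)$ does not depend on the input at all; hence on a classical input $x\in\{0,1\}^N$ the challenge $\fGBC.\fEn_K^C(x)=(\fEt_{K_{in}}(x),\fTAB_\fCL^C(K))$ differs from $\fGBC.\fEn_K^C(0^N)$ only in the encoding part, which for a classical input is simply the tuple of ``active'' input keys $(k_{x_l}^l)_{l\in\mathrm{in}}$. So I must show that, given $\fTAB_\fCL^C(K)$, revealing the active input keys for $x$ is indistinguishable from revealing them for $0^N$. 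I would do this by exhibiting an input-independent simulator $\mathsf{Sim}(C)$ and showing $\fGBC.\fEn_K^C(x)\approx_\epsilon\mathsf{Sim}(C)$ for \emph{every} classical $x$, where $\approx_\epsilon$ means no $q$-query quantum distinguisher separates them with advantage more than $\epsilon=\fpoly(q,N,L)2^{-0.5\kappa}$; the claim then follows by the triangle inequality over $x$ and $0^N$. The simulator samples fresh uniform ``active'' keys for all wires, generates each garbled-table row on the active path honestly from these keys (for a Toffoli gate, the forward row connecting active input keys to active output keys and the backward row connecting them back; for an $R_Z(\frac{\pi}{n})$ gate, a fresh encryption of a uniform element of $\bZ_n$ under the active key), and generates every off-path row as a fresh $\fCL$-encryption of $0$ under a fresh uniform key in each off-path slot and the active key in each on-path slot. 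By inspection $\mathsf{Sim}(C)$ depends only on $C$: the active keys are i.i.d.\ uniform regardless of the bit they ``represent'', and the Toffoli/phase semantics appear only inside ciphertexts.

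The reduction to naSymKDM works as follows. Fix $x$ and let $y$ denote the wire assignment $C$ induces on $x$; call $k_{y_w}^w$ the active key on wire $w$ and $k_{1-y_w}^w$ the inactive key. The key invariant, checked gate by gate, is that every inactive key is used only as a message inside $\fCL$-ciphertexts that are themselves encrypted under at least one inactive key (every off-path row of a gate uses an inactive key of that gate in at least one encryption slot), while the active keys are i.i.d.\ uniform and the revealed input keys are exactly the active input keys. A naSymKDM adversary $\cB$ against $\fKDMP$ therefore designates the set of \emph{inactive} keys over all wires as its KDM key vector --- of size $O(N+L)$ --- and samples all active keys itself. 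It reconstructs the active-path rows honestly; for each off-path row it picks an inactive encryption slot as the \emph{pivot}, makes one $\fKDMP$ KDM query under that pivot key with a function $f$ that XORs into the payload the hash evaluations of the (at most two) non-pivot slots and assembles the row's message by extracting the relevant inactive keys from the KDM key vector and hard-coding the relevant active keys; such an $f$ queries the oracle at most twice, so lies in $\cF[2]$, and $\cB$ reconstructs the full $\fCL$-ciphertext, key tags included, from the $\fKDMP$ response. When the hidden bit is $1$, $\cB$'s output is distributed exactly as $\fGBC.\fEn_K^C(x)$; when it is $0$, the off-path payloads vanish and $\cB$'s output is statistically close to $\mathsf{Sim}(C)$, the residual distance (collisions among pivot hashes, and a query-bounded distinguisher noticing a hidden key) being absorbed into the $\fpoly(q,N,L)2^{-0.5\kappa}$ bound of Theorem \ref{thm:rosec}. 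There are $O(L)$ KDM queries, so plugging $L_{\mathrm{KDM}}=O(N+L)$, $T=O(L)$, $q'=2$ into Theorem \ref{thm:rosec} gives exactly the claimed bound.

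The main obstacle is the presence of encryption cycles, which is precisely why ordinary IND-CPA of $\fCL$ is insufficient: the forward and backward Toffoli tables encrypt the output keys under the input keys and vice versa, and more generally the inactive keys form a sub-structure riddled with cyclic dependencies; handling this is exactly what the KDM step buys us. The technical care goes into (i) verifying the invariant that inactive keys never appear outside a ciphertext guarded by an inactive key, so that the multi-key ``one secret slot suffices'' property of $\fCL$ lets a single-key $\fKDMP$ query simulate each row; (ii) checking that the reconstruction reproduces $\fCL$-ciphertexts with the correct joint distribution of fresh randomness and key tags under both values of the hidden bit; and (iii) bookkeeping the polynomial factors so the final advantage is $\fpoly(q,N,L)2^{-0.5\kappa}$. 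Phase-gate tables are handled analogously, using that the decrypted active value $m_{y_w}$ is uniform in $\bZ_n$ independently of $y_w$.
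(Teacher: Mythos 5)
Your proposal is correct and follows essentially the same route as the paper: identify the ``active'' keys (the paper's $Closure(Rev)$), observe that rows guarded by at least one inactive key can be replaced by encryptions of zeros via a reduction to the naSymKDM security of $\fKDMP$ with functions in $\cF[q'=2]$ (one hidden key as the pivot/outer key, the remaining slots folded into the KDM function), and conclude by noting that the resulting zeroed-out, shuffled distribution is independent of the input. The only cosmetic difference is that the paper packages the intermediate step as a standalone ``restricted game'' lemma rather than an explicit simulator $\mathsf{Sim}(C)$.
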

The detailed proof is in the full version of this paper.\par
But we meet some difficulty when we try to prove the qIND-CPA security (that is, the security for quantum inputs). We leave it as a conjecture:
\begin{conjecture}\label{conj:qindcpa}
	$\fGBC_\fCL$ is one-shot qIND-CPA secure in the quantum random oracle model.
\end{conjecture}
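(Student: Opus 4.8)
\medskip\noindent\textbf{Proof proposal.}
The plan is to reduce Conjecture~\ref{conj:qindcpa} to a \emph{simulation} statement, mirroring the route behind the classical Proposition~\ref{prop:clsec}, and then to isolate the one genuinely quantum step. Concretely, I would try to build an efficient simulator $\cS(1^\kappa,1^N,1^L,C)$ producing a state on the ciphertext registers such that, for \emph{every} input $\rho\in\bD(\cM\otimes\cR)$,
$$\big(\fGBC.\fEn_K^C\otimes\fI\big)(\rho)\ \approx_{\fpoly(q,N,L)\,2^{-\Omega(\kappa)}}\ \sigma_\cS\otimes\rho_\cR ,$$
the indistinguishability holding against any distinguisher making $q$ quantum queries to $\cH$. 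Since $\sigma_\cS$ is independent of $\rho$, the $b=0$ and $b=1$ branches of the (one-shot) qIND-CPA game are each indistinguishable from $\sigma_\cS$, hence from each other, which is exactly the claim; note that one-shotness means we never have to cope with adaptivity or multiple challenge queries.

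To build $\cS$ and establish the approximation I would run a hybrid argument over the gates of $C$ in topological order, at each step replacing the forward and backward tables of one gate by tables that encrypt uniformly random strings under \emph{fresh} uniformly random keys (keeping consistent key tags), and rerouting downstream wires onto the new keys. A single hybrid step should be justified by a reduction to the KDM security of the underlying scheme, i.e.\ Theorem~\ref{thm:rosec} lifted from $\fKDMP$ to the multi-key wrapper $\fCL$ of Protocol~\ref{prtl:clro} --- the same skeleton already used for Proposition~\ref{prop:clsec}. Once every table has been replaced, the garbled part of the ciphertext is independent of the input-wire keys $K_{in}$, those keys appear \emph{only} inside $\fEt_{K_{in}}$, and Lemma~\ref{lm:itecd} yields that $(\fEt_{K_{in}}\otimes\fI)(\rho)$ is within $2^{-(\kappa-4)}N$ of the maximally mixed state tensored with $\rho_\cR$; the simulator then simply outputs that maximally mixed state together with the fake tables.

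The hard part --- and presumably the reason this remains a conjecture --- is the per-hybrid reduction itself. In the classical proof the reduction answers the lone encryption query through the KDM oracle because the challenge message and everything the adversary holds are classical. Here the reduction must additionally hand the adversary the quantum register $(\fEt_{K_{in}}\otimes\fI)(\rho)$, which is entangled with precisely the input-wire keys that the KDM challenger keeps secret; it cannot prepare that register without those keys, so a black-box reduction to Theorem~\ref{thm:rosec} seems to break down and one is forced to reprove a combined statement, ``KDM security of $\fCL$ in the presence of an $\fEt$-encoded side register'', from scratch. I would attempt this with the compressed-oracle technique (or iterated one-way-to-hiding), tracking the joint state of the oracle database, the adversary's workspace, the $\fEt$-encoded system and the reference, and arguing that a distinguishing advantage forces the database to contain some $sk_i\|R_j$ for a secret key $sk_i$, then bounding that event. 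The catch is that the adversary genuinely \emph{can} recover one of the two keys on any input wire merely by measuring the encoded register (e.g.\ whenever $\rho$ has a $\ket{+}$ component), so the statement needed is not ``the keys are unpredictable'' but the subtler assertion that the keys cannot be used \emph{coherently} in a way that simultaneously exploits the tables and preserves enough of $\rho$ to be detected. Formulating and proving this ``coherent-unpredictability of $\fEt$ keys'' lemma --- some quantum strengthening of Lemma~\ref{lm:itecd} that survives both key-dependent ciphertexts and quantum oracle access --- is, I expect, the real obstacle and the place where a new idea is required; the facts that the random oracle here need not be programmable and that the scheme carries key tags (which rigidly pin down which key decrypts a row) are the structural features I would try to leverage.
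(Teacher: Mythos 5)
You have not proved the statement: the entire argument funnels into your ``coherent-unpredictability of $\fEt$ keys'' lemma, which you explicitly leave unproven, and that lemma is precisely where the difficulty lives. Note also that the paper itself offers no proof of Conjecture~\ref{conj:qindcpa} --- it is stated as an open problem --- so there is no paper proof to match your proposal against; what the paper does prove is only the weaker Theorem~\ref{thm:qssec}, in which the advantage is $\fneg(\kappa-4N_q)$ rather than $\fneg(\kappa)$, i.e.\ the same protocol with keys lengthened by $4N_q$. Your diagnosis of the obstacle is accurate and matches the paper's own discussion (the adversary holds superpositions of keys entangled with the reference system, so a black-box reduction cannot prepare the challenge register without the very keys the KDM challenger hides), but a correct diagnosis plus an unproven lemma is a research plan, not a proof.

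It is worth contrasting your plan with the route the paper actually takes for the provable weaker statement, because the two are genuinely different. You propose a simulation-based, gate-by-gate hybrid that, if the missing lemma held, would give the full conjecture with no loss in $\kappa$. The paper instead sidesteps coherence entirely: it expands the quantum input in the computational basis as $\sum_i c_i\ket{i}\ket{\varphi_i}$, splits the advantage into (i) the distance between the encryption of the input and of its basis-projection $\cE_{proj}(\rho)$, and (ii) the classical one-shot IND-CPA advantage on $\cE_{proj}(\rho)$; term (ii) is Proposition~\ref{prop:clsec}, and term (i) is a sum of at most $2^{2N_q}$ cross terms $\ket{x_i}\bra{x_j}$, each of which is rotated via Cauchy--Schwarz into the probability of mapping $\fEt_K\ket{i}$ to $\fEt_K\ket{j}$ given one set of input keys and the tables --- an essentially classical ``hard-to-compute-the-other-key'' statement, Lemma~\ref{lem:mcluncp}, provable from the KDM security of Protocol~\ref{prtl:kdmro} and Grover optimality. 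The price is the $2^{N_q}$ factor from the number of cross terms, which is exactly why only $\kappa=\eta+4N_q$ is achieved and why the conjecture as stated remains open. So your approach is more ambitious but stalls at an unproven quantum lemma, while the paper's approach is weaker but closes; neither establishes Conjecture~\ref{conj:qindcpa}.
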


But if we use a longer key, we can prove its security.
\begin{theorem}\label{thm:qssec}
	For any BQP adversary $\sA$, there exists a negligible function $\fneg$ such that:
	$$\fAdv^{qIND-CPA-oneshot}_{\fGBC_\fCL}(\sA,\kappa)= \fneg(\kappa-4N_q)$$
	where $N_q$ is the size of quantum states in the input.
\end{theorem}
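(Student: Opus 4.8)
The plan is to bootstrap the classical one-shot IND-CPA security of $\fGBC_\fCL$ (Proposition~\ref{prop:clsec}), together with the statistical hiding of the $\fEt$ encoding (Lemma~\ref{lm:itecd}) and the garbling-scheme security lemma (``Lemma~2'', itself reduced to the $\fKDMP$ security of Theorem~\ref{thm:rosec}), to the quantum-input case, paying a multiplicative factor $2^{2N_q}$ in the advantage. Since Proposition~\ref{prop:clsec} gives $\fpoly(q,N,L)\,2^{-0.5\kappa}$, this produces $\fpoly(q,N,L)\,2^{2N_q-0.5\kappa}=\fpoly(q,N,L)\,2^{-0.5(\kappa-4N_q)}$, which is $\fneg(\kappa-4N_q)$ as $q,N,L$ are polynomially bounded. \emph{Step 1 (isolate the encoded register).} The garbled tables $\fTAB_\fCL^C(K)$ depend only on the key set $K$ and not on the plaintext, so the $b=0$ and $b=1$ branches of the one-shot qIND-CPA game hand the adversary the same table distribution and differ only in the key-encoded register. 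Writing the chosen input on the quantum wires as $\rho=\sum_{x,x'}\ket{x}\!\bra{x'}_\cM\otimes\rho_{x,x'}$ with $x\in\{0,1\}^{N_q}$ and $k_x=k^1_{x_1}||\cdots||k^{N_q}_{x_{N_q}}$, the adversary holds, up to the tables, either $\sigma_1=\sum_{x,x'}\ket{k_x}\!\bra{k_{x'}}\otimes\rho_{x,x'}$ (if $b=1$) or $\sigma_0=\ket{k_{0^{N_q}}}\!\bra{k_{0^{N_q}}}\otimes\rho_R$ with $\rho_R=\tr_\cM(\rho)$ (if $b=0$). Thus it suffices to bound, for every input, the advantage of a BQP, $q$-QRO-query distinguisher $\cD$ separating $(\sigma_1,\fTAB_\fCL^C(K))$ from $(\sigma_0,\fTAB_\fCL^C(K))$; I route through the dephased state $\tilde\sigma=\sum_x\ket{k_x}\!\bra{k_x}\otimes\rho_{x,x}$, i.e.\ $\sigma_1$ with all coherences between distinct key-strings erased.

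\emph{Step 2 (the classical half: $\tilde\sigma$ vs.\ $\sigma_0$, no loss).} Observe that $\ket{k_x}\!\bra{k_x}\otimes\fTAB_\fCL^C(K)=\fGBC.\fEn_K^C(x)$. Given a distinguisher for $(\tilde\sigma,\fTAB)$ against $(\sigma_0,\fTAB)$, I build a classical-query adversary for the one-shot IND-CPA game of $\fGBC_\fCL$: it prepares $\rho$ on $\cM\otimes\cR$, measures $\cM$ in the computational basis to obtain $x$ with probability $p_x:=\tr\rho_{x,x}$ (leaving $\cR$ in $\rho_{x,x}/p_x$), submits the classical plaintext $x$, receives the challenge ($\fGBC.\fEn_K^C(x)$ or $\fGBC.\fEn_K^C(0^{N_q})$), appends its $\cR$-register, and runs $\cD$. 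Averaging over the measurement outcome reproduces $\tilde\sigma$ in the $b=1$ case and $\sigma_0$ in the $b=0$ case exactly, so by Proposition~\ref{prop:clsec} this advantage is at most $\fpoly(q,N,L)\,2^{-0.5\kappa}$, with no exponential loss. (The side information on $\cR$ only strengthens the distinguisher, which the model allows.)

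\emph{Step 3 (the quantum half: $\sigma_1$ vs.\ $\tilde\sigma$, loss $2^{2N_q}$).} Here I must show that, given the tables, the off-diagonal term $\sigma_1-\tilde\sigma=\sum_{x\neq x'}\ket{k_x}\!\bra{k_{x'}}\otimes\rho_{x,x'}$ is computationally invisible to $\cD$. The intuition: for $x\neq x'$ the strings $k_x,k_{x'}$ differ in some block $n$, and the tables touch the two block-$n$ keys $k^n_0,k^n_1$ only through one-time-pad masks $\cH(k^n_b||R)$, so a $\cD$ sensitive to the relative phase of $k_x$ and $k_{x'}$ would be learning joint information about $(k^n_0,k^n_1)$ that the tables hide; Lemma~\ref{lm:itecd} supplies the statistical base case (averaged over the keys the coherence vanishes), and the hiding is lifted to the computational setting via the garbling-scheme security lemma (reduced, in turn, to Theorem~\ref{thm:rosec}). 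Decomposing the distinguishing quantity over the $\le 2^{N_q}\cdot 2^{N_q}=2^{2N_q}$ pairs of computational-basis branches, with each pair handled by that lemma, costs exactly the factor $2^{2N_q}$ and gives advantage at most $\fpoly(q,N,L)\,2^{2N_q}\,2^{-0.5\kappa}$. Combining Steps 2 and 3 by the triangle inequality yields $\fAdv^{qIND-CPA-oneshot}_{\fGBC_\fCL}(\sA,\kappa)\le\fpoly(q,N,L)\,2^{-0.5(\kappa-4N_q)}=\fneg(\kappa-4N_q)$.

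\paragraph{Main obstacle.} The delicate point is Step 3: controlling coherence-detection in the quantum random oracle model. The adversary literally holds the key-strings in superposition inside the encoded register, so it can query $\cH$ \emph{coherently} at the keys, and one must rule out that this lets it ``open'' a block and invert $\fEt$. I expect this needs a one-way-to-hiding–style argument (using that our oracle is never reprogrammed) together with Lemma~\ref{lm:itecd} to certify that the register carries no usable classical information about $K$, and the accounting must be done carefully so the loss is $2^{2N_q}$ rather than worse, since it is exactly this that pins the bound at $\fneg(\kappa-4N_q)$ and motivates $\kappa=\eta+4N_q$ in Theorem~1. Whether the same protocol is in fact one-shot qIND-CPA secure with no key blow-up (Conjecture~\ref{conj:qindcpa}) is left open by this approach.
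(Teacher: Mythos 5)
Your proposal follows essentially the same route as the paper: split the one-shot qIND-CPA advantage into a dephasing term (the coherences $\ket{k_x}\bra{k_{x'}}$, $x\neq x'$, bounded pairwise by the ``hard to compute the other input's keys'' lemma, i.e.\ Lemma~\ref{lem:mcluncp}) plus a classical term (the dephased state versus the encryption of $0^N$, bounded by Proposition~\ref{prop:clsec} exactly as in your Step 2), and this is precisely the paper's decomposition via $\cE_{proj}$. The only discrepancy is quantitative: Lemma~\ref{lem:mcluncp} bounds a \emph{probability} ($\leq\fpoly\cdot 2^{-0.5\kappa}$), whereas each cross term in the advantage is a matrix element $\bra{x_j}D^\dagger P_0 D\ket{x_i}$ whose modulus is only the square root of that, so the $2^{2N_q}$ pairs give $2^{N_q}\fpoly\cdot 2^{-0.25\kappa}=\fpoly\cdot 2^{-0.25(\kappa-4N_q)}$ rather than your claimed $2^{-0.5(\kappa-4N_q)}$; this still yields $\fneg(\kappa-4N_q)$, so the theorem is unaffected.
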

In other words, denote $\fGBC^\prime$ as the protocol of taking $\kappa=\eta+4N_q$ as the key length in the $\fGBC$ protocol, we can prove $\fGBC^\prime$ is one-shot qIND-CPA secure with respect to security parameter $\eta$. So we prove:\par
\begin{theorem}
There exists a delegation protocol for C+P gate set that is one-shot qIND-CPA secure in the quantum random oracle model, and the client requires $O(\eta N_q+N_q^2)$ quantum $\fCN$ gates as well as polynomial classical computation, where $N_q$ is the number of qubits in the input and $\eta$ is the security parameter.
\end{theorem}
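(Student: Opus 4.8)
The plan is to instantiate $\fGBC$ (Protocol~\ref{prtl:gbc}) with the concrete multi-key scheme $\fCL$ of Protocol~\ref{prtl:clro} as its underlying encryption scheme, and to run it with key length $\kappa=\eta+4N_q$; write $\fGBC^\prime$ for the resulting protocol. The theorem is then obtained by combining the three prior facts about $\fGBC$ under this choice of $\kappa$. For \emph{correctness}: one checks directly that $\fCL$ is a correct multi-key symmetric encryption scheme with key tags in the quantum random oracle model --- its execution is purely classical, $\fDc_{k_1,k_2,k_3}$ inverts $\fEn_{k_1,k_2,k_3}$, and $\fVer$ accepts the genuine key and rejects an independent one except when two fresh uniform $\kappa$-bit nonces happen to agree on their oracle value, which is negligible --- and then the correctness theorem for $\fGBC$, which holds for any such underlying scheme and is insensitive to $\kappa$, gives that $\fGBC^\prime$ is a correct non-interactive C+P delegation protocol.

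For \emph{security}, Theorem~\ref{thm:qssec} states $\fAdv^{qIND-CPA-oneshot}_{\fGBC_\fCL}(\sA,\kappa)=\fneg(\kappa-4N_q)$ for every BQP adversary $\sA$; substituting $\kappa=\eta+4N_q$ makes $\kappa-4N_q=\eta$, so the advantage against $\fGBC^\prime$ is $\fneg(\eta)$, and since the security of a non-interactive delegation scheme is by definition the (one-shot qIND-CPA) security of its pair $(\fKg,\fEn)$, $\fGBC^\prime$ has exactly the claimed security. For \emph{efficiency}, Theorem~\ref{thm:complexity} tells us the client's only quantum operations in $\fGBC$ are the encodings $\fEt_{K_{in}}$ (in encryption) and $\fEt^{-1}_{K_{out}}$ (in decryption), implementable with $O(\kappa)$ $\fCN$ gates per encoded qubit and hence $O(\kappa N_q)$ $\fCN$ gates overall, independent of the circuit length $L$; with $\kappa=\eta+4N_q$ this is $O(\eta N_q+N_q^2)$ $\fCN$ gates. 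Everything else --- sampling the $O(N+3L)$ keys, building the forward/backward garbled tables via $O(L)$ calls to $\fCL.\fEn$, and the classical post-processing in $\fDc$ --- is classical and polynomial in $\kappa,N,L$.

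The entire substance of the theorem is therefore contained in Theorem~\ref{thm:qssec}, whose proof is deferred to the full version; I expect it to proceed in two parts. First, Lemma~\ref{lm:itecd} already shows that the $\fEt$-encoding alone hides any input up to statistical distance $(1/2)^{\kappa-4}N_q$, so only the extra leakage of the garbled tables needs a computational argument. Second, one reduces the hiding of the garbled tables to the KDM security of $\fKDMP$ (Theorem~\ref{thm:rosec}): $\fCL$ is essentially a multi-key ``wrapper'' around $\fKDMP$, and the forward and backward tables of the reversible garbling produce exactly the cycles of key-under-key encryptions that KDM security is designed to tolerate; for classical inputs this is Proposition~\ref{prop:clsec}.

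The step I expect to be the main obstacle --- and the reason the key length carries the additive $4N_q$ rather than $O(\eta)$ --- is upgrading the classical-input statement (Proposition~\ref{prop:clsec}) to arbitrary quantum inputs. The natural argument decomposes the quantum input register into computational-basis components and applies the classical bound to each, which costs a multiplicative factor on the order of the register's dimension $2^{N_q}$ (or $4^{N_q}$ if one routes through a Pauli/twirl argument) before it can absorb the $2^{-0.5\kappa}$ bound coming from the random oracle; for the product to remain negligible one then needs $\kappa$ to exceed a fixed constant times $N_q$ by $\Omega(\eta)$. Tightening this loss --- ideally to an additive $O(\eta)$ --- is precisely what Conjecture~\ref{conj:qindcpa} claims should be possible and would improve the $\fCN$ count to $O(\eta N_q)$.
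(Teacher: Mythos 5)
Your proposal matches the paper's own derivation: the paper obtains this theorem exactly by setting $\kappa=\eta+4N_q$ in $\fGBC_\fCL$ and combining Theorem~\ref{thm:qssec} (which gives advantage $\fneg(\kappa-4N_q)=\fneg(\eta)$) with Theorem~\ref{thm:complexity} (which gives $O(\kappa N_q)=O(\eta N_q+N_q^2)$ client-side $\fCN$ gates). Your side remarks on where the $4N_q$ loss originates are also consistent with the paper's actual argument (a $2^{N_q}$ blow-up from the cross terms in the computational-basis decomposition, with the diagonal terms handled by Proposition~\ref{prop:clsec} and the off-diagonal ones by Lemma~\ref{lem:mcluncp}), so there is nothing substantive to correct.
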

Although we don't have a proof for Conjecture \ref{conj:qindcpa}, we conjecture it is true, since this protocol seems to be a very natural generalization from classical to quantum. We leave it as an open problem. The main obstacle here is its security cannot be reduced to the semantic security of classical garbled circuits easily: the adversary gets many superpositions of keys. We have to prove it using different techniques, which leads to Theorem \ref{thm:qssec}.\par
From Theorem \ref{thm:qssec} we know when we take $\kappa\geq 4N_q$ and consider $\kappa-4N_q$ as the security parameter the security has been proved. So when the circuit size $L=\omega(N^2_q)$ the quantum resources for the client to run this protocol are smaller than running the circuit itself anyway.\par
What's more, although our proof requires the quantum random oracle model, we conjecture that this protocol is still secure when we replace the random oracle with practical hash functions or symmetric key encryption schemes:
\begin{conjecture}\label{conj:qrotohash}
When we replace the quantum random oracle in $\fGBC_\fCL$ with practical hash functions or symmetric key encryption schemes, such as versions of SHA-3 or AES with appropriate input and output sizes, the security statements still hold.	
\end{conjecture}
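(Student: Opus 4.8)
The plan is to be honest from the outset: this statement is a heuristic conjecture, not a theorem provable in full generality. The excerpt itself notes via \cite{CGH-ROLimit} that there exist schemes secure in the random oracle model yet insecure under every concrete hash instantiation, so no unconditional implication of the form ``secure when $\cH$ is a random oracle $\Rightarrow$ secure when $\cH$ is SHA-3'' can hold for an arbitrary scheme. The realistic goal is therefore to replace the appeal to a random oracle with an explicit, falsifiable standard-model assumption on the hash family, prove that the security statements (Proposition \ref{prop:clsec}, Theorem \ref{thm:qssec}, and the underlying Theorem \ref{thm:rosec}) reduce to that assumption, and then give evidence that SHA-3 and AES plausibly satisfy it. In other words I would turn ``the security still holds'' into ``the security holds under a clean assumption that current cryptanalysis gives no reason to doubt.''

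First I would extract, from the quantum-random-oracle proofs of Theorem \ref{thm:rosec} and Proposition \ref{prop:clsec}, the precise list of oracle properties each reduction actually consumes. Inspecting the $\fKDMP$ and $\fCL$ constructions, a ciphertext masks its payload with $\cH(sk\|R)$ and authenticates with $\cH(sk\|R')$; the only feature of $\cH$ the reductions use is that, without $sk$, these outputs are indistinguishable from fresh randomness even when the masked message is a key-dependent function $f(K,\mathrm{msg})$ and even when the distinguisher queries $\cH$ in superposition. I would package this as a single standard-model primitive -- a keyed hash family that is \emph{correlation-robust and KDM-pseudorandom under quantum-superposition access} -- whose security game mirrors the naSymKDM game with $\cH$ demoted from an oracle to a fixed public function whose prefix key is secret. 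The target is to show that every step of the QROM arguments that invoked ``$\cH$ is a random oracle'' can instead invoke only this abstracted property.

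Second I would re-run the reductions mechanically. The $\fCL$ scheme is built as a three-key ``wrapper'' of $\fKDMP$, and the $\fGBC$ security proofs reduce to the naSymKDM security of the underlying scheme; so if the abstracted assumption suffices to re-derive Theorem \ref{thm:rosec} with the concrete $\cH$, the garbling-level statements follow with the same polynomial loss $\fpoly(q,q^\prime,L,T)2^{-0.5\kappa}$ and the same $4N_q$ key slack of Theorem \ref{thm:qssec}. For SHA-3 specifically I would additionally exploit the sponge construction: Keccak is indifferentiable from a random oracle assuming its internal permutation is ideal, and quantum indifferentiability of the sponge is known, so the QROM proofs transfer essentially verbatim to an ideal-permutation model for Keccak, giving a cleaner conditional statement than a bare ``assume SHA-3 is good.'' For AES one would instead work in the ideal-cipher model (e.g.\ via a Davies--Meyer compression function) and invoke an ideal-cipher-to-random-oracle transfer.

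The hard part is twofold, and is exactly why this remains a conjecture. First, the abstracted assumption must be simultaneously weak enough to be credible for SHA-3/AES and strong enough to carry the KDM reduction with messages that depend on the very keys used as hash prefixes -- correlation-robustness for garbling is classical folklore, but its KDM-strengthened, quantum-superposition-access form is not a standard assumption and has no clean reduction to ordinary one-wayness or collision-resistance. Second, and more seriously, the quantum-access requirement is genuinely delicate: there is no general theorem converting classical-query security of a concrete hash into superposition-query security, so the idealized objects (quantum indifferentiability, ideal permutation/cipher) are doing essential work and cannot simply be removed. Finally, this inherits the difficulty already flagged for Conjecture \ref{conj:qindcpa}: even in the quantum random oracle model the tight quantum-input bound is open because the adversary receives superpositions of keys, so any standard-model proof must first overcome that obstacle before the instantiation question can even be posed cleanly. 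I therefore expect the honest deliverable to be a reduction to a named, falsifiable assumption plus cryptanalytic evidence, not an unconditional proof.
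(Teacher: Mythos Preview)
Your recognition that this statement is a heuristic conjecture rather than a provable theorem is exactly right, and it matches the paper: the paper offers no proof of Conjecture~\ref{conj:qrotohash} at all. It is stated as a conjecture and left as one. The only related discussion appears in the short ``Standard Model'' subsection, where the paper remarks that the random oracle is used solely inside the symmetric multi-key encryption with key tags, that key tags could be replaced by the classical point-and-permute trick, and that what one would need from a concrete primitive is (at minimum) KDM security for affine functions against quantum adversaries. That is the full extent of the paper's treatment.

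Your proposal is therefore not a competing proof but a considerably more detailed research programme than anything the paper attempts. The paper's informal suggestion (``assume KDM security for affine functions'') is a strict subset of your plan: you go further by proposing to isolate a correlation-robust/KDM-pseudorandom abstraction under superposition access, to invoke quantum indifferentiability of the sponge for SHA-3 and an ideal-cipher route for AES, and you are explicit about the obstacles (no generic classical-to-superposition transfer, and the unresolved Conjecture~\ref{conj:qindcpa} sitting upstream). These refinements are reasonable and buy you a cleaner, falsifiable target, but they are your additions, not the paper's. In short: there is nothing to compare your argument against, because the paper does not argue for this conjecture; your write-up is an honest and well-scoped account of why it is a conjecture and what a standard-model replacement would have to look like.
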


\subsection{Security Proof}
\subsubsection{IND-CPA security of Protocol \ref{prtl:gbc}}
The proof of Proposition \ref{prop:clsec} is postponed into the full version of this paper. The proof is based on Theorem \ref{thm:rosec}, which is about KDM security of Protocol \ref{prtl:kdmro}. The structure of our scheme, when used classically, can be seen as a special case of the KDM function. But the definition of IND-CPA security for protocol $\fGBC$ is still different from the KDM game security: in $\fGBC$ we are trying to say the inputs of $\fEt$ are hidden, but KDM security is about the encrypted messages in the garbled table. So it doesn't follow from the security of $\fKDMP$ protocol trivially.\par
\subsubsection{Discussions of the qIND-CPA security}
To prove Theorem \ref{thm:qssec}, we use a different security proof technique, which enables us to base the qIND-CPA advantage on the IND-CPA advantage and a classical ``hard-to-compute'' lemma. This technique enables us to argue about the security of a quantum protocol using only security results in the classical settings.\par
We need to prove the keys that are not ``revealed'' are ``hard to compute''. Then we expand the expression of the qIND-CPA advantage, write it as the sum of exponential number of terms and we can observe that their forms are the same as the probability of ``computing the unrevealed keys''. We can prove these terms are all exponentially small, thus we get a bound for the whole expression.\par
\begin{lemma}\label{lem:mcluncp}
	For any C+P circuit $C$, $|C|=L$, any adversary that uses distinguisher $\cD$ which can query the quantum random oracle $q$ times (either with classical or quantum inputs), given the reversible garbled table and input keys corresponding to one input, it's hard to compute the input keys corresponding to other input. Formally, for any $i\neq j,\ket{\varphi_i}$, we have
	\begin{align}
&\bE_{K}\bE_R\tr((\fEt_K\ket{j})^\dagger\cD(\fEt_K(\ket{i}\bra{i})\otimes \varphi_i\otimes \fTAB^C_{\fCL}(K,R))(\fEt_K\ket{j}))\notag\\&\qquad\qquad\qquad\qquad\qquad\qquad\qquad\qquad\qquad\qquad\quad\leq \fpoly(q,N,L)2^{-0.5\kappa}\label{eq:25}
	\end{align}
	where $\fpoly$ is a fixed polynomial that does not depend on $\sA$ or the parameters, $N$ is the size of inputs, and $R$ denotes the randomness used in the computation of $\fTAB^C_\fCL(K)$, including the random oracle outputs, the random paddings and the random shuffling. And $\fTAB^C_{\fCL}(K,R)$ is the output of $\fTAB^C_\fCL(K)$ using randomness $R$, and since $R$ is given as a parameter there will be no randomness inside.

\end{lemma}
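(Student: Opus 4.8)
The plan is to reduce the left-hand side of (\ref{eq:25}) to a purely classical ``key-finding'' statement and then attack that statement with the KDM security of $\fKDMP$ (Theorem \ref{thm:rosec}), exactly in the spirit of the deferred proof of Proposition \ref{prop:clsec}. First I would observe that $\ket{i}$ and $\ket{j}$ are computational-basis states, so $\fEt_K(\ket{i}\bra{i})$ is the classical product state $\ket{k^1_{i_1}\cdots k^N_{i_N}}$ and similarly for $j$; moreover $\ket{\varphi_i}$ is fixed independently of $K$ and $R$, hence can be folded into the distinguisher. So the quantity in (\ref{eq:25}) is precisely the winning probability of the following game: the challenger samples $K=(k^l_b)$ and randomness $R$, hands $\cD$ the input-$i$ keys together with $\fTAB^C_\fCL(K,R)$, and $\cD$ wins if it outputs the string $(k^1_{j_1},\dots,k^N_{j_N})$. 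Fixing an input wire $w_0$ on which $i$ and $j$ disagree, the target string contains the key $k^*:=k^{w_0}_{1-i_{w_0}}$.

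The combinatorial core of the argument is a \emph{key-closure dichotomy}. Call a key $k^w_b$ \emph{$i$-consistent} if $b$ is the bit carried on wire $w$ when $C$ is evaluated on the classical input $i$; these are exactly the keys honest evaluation of the tables produces from the input-$i$ keys. I would check, case by case, that every row of every table emitted by $\fTAB^C_\fCL(K,R)$ is either a \emph{path row} — all of its encryption keys and all of its plaintext keys are $i$-consistent (the rest of the plaintext, a phase mask, being irrelevant) — or a \emph{non-path row} — at least one encryption key is not $i$-consistent. For the Toffoli forward table this is immediate; for the backward table it uses that $(u,v,w)\mapsto(u,v,w\oplus uv)$ is a bijection, so a wrong ``output triple'' forces a non-$i$-consistent output key; the phase-gate table is trivial. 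In particular honest evaluation on input $i$ never exposes a non-$i$-consistent key, and $k^*$ is not $i$-consistent.

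Next I would run a single hybrid step via Theorem \ref{thm:rosec}. Let the non-$i$-consistent keys (there are at most $N+3L$ of them) play the role of the secret keys in a \emph{naSymKDM} game for $\fKDMP$ with function family $\cF[q']$, $q'=O(1)$. A reduction $\cB$ samples all $i$-consistent keys and all randomness itself, builds every path row directly, and builds each non-path row by one KDM query: a non-path $\fCL$ row whose ciphertext body is $\cH(k_1||R_1)\oplus\cH(k_2||R_2)\oplus\cH(k_3||R_3)\oplus m$ with, say, $k_1$ non-$i$-consistent, is obtained from $\fKDMP.\fEn_{k_1}(f(K,\mathit{msg}))$, where $f\in\cF[O(1)]$ xors into $m$ the contributions $\cH(k_l||R_l)$ of the other non-$i$-consistent encryption keys, while $\cB$ itself xors in the contributions of the $i$-consistent encryption keys; the key tags for the remaining non-$i$-consistent keys of that row come from extra dummy KDM queries whose ciphertexts are discarded. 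Thus $\cB$ simulates the key-finding game faithfully when the KDM bit is $b=1$, and when $b=0$ every non-path row becomes independent of the KDM-secret plaintexts. Since $\cB$ does not know $k^*$, it detects $\cD$'s success using $k^*$'s key tag (which it holds): it outputs $b'=1$ iff the $w_0$-block of $\cD$'s output passes $\fKDMP.\fVer$ against that tag. Theorem \ref{thm:rosec} then bounds $|\Pr(b'=1\mid b=1)-\Pr(b'=1\mid b=0)|$ by $\fpoly(q,N,L)2^{-0.5\kappa}$. In the $b=0$ world, $k^*$ appears in $\cD$'s entire view only through values $\cH(k^*||R)$ for fresh uniform $R$ (residual encryption-key contributions in non-path rows and in key tags), since path rows never mention it and non-path plaintexts are zeroed; so producing a string that passes $\fKDMP.\fVer$ against $k^*$'s tag is either inverting the random oracle at the hidden point $k^*$ or finding a collision, each of probability $\fpoly(q,N,L)2^{-0.5\kappa}$ in the quantum random oracle model — the same estimate that underlies Theorem \ref{thm:rosec}. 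Adding the two bounds gives the claim.

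I expect the obstacle to be twofold. Conceptually, the key-closure dichotomy is the crux: it must hold simultaneously for the forward and backward Toffoli tables and for phase-gate tables, and it is precisely what prevents the unrevealed input keys from being reachable through the encryption cycles that the backward tables introduce. Technically, the fiddliest point is the faithful simulation of the multi-key scheme $\fCL$ (Protocol \ref{prtl:clro}), with its several key tags per row, by single-key $\fKDMP$ queries — one must verify that $\cB$ reproduces every ciphertext, every key tag, and every random-oracle answer in $\cD$'s view while never touching a non-$i$-consistent key in the clear, and that the functions $f$ really lie in $\cF[q']$ with fixed output length so that Theorem \ref{thm:rosec} applies.
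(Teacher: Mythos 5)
Your proposal is correct and follows essentially the same route as the paper: reduce the trace expression to a classical key-finding game, observe that the keys reachable from the input-$i$ keys through the forward/backward/phase tables form a closed set excluding $k^*$ (the paper packages this as the $Closure(Rev)$ condition of an intermediate ``rG'' game), use the KDM security of $\fKDMP$ (Theorem~\ref{thm:rosec}) to replace all rows encrypted under unreachable keys by encryptions of $0^*$, detect the distinguisher's success via the key tag, and bound the residual success probability in the zeroed world by Grover-search optimality. The only cosmetic difference is that you inline the rG-game reduction (Lemma~\ref{lm:rgsec}) instead of invoking it as a separate lemma.
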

Note that since we have already fixed all the randomness, $\fTAB^C_{\fCL}(K,R)$ is pure. We also note that this can be seen as a classical lemma since $\ket{i}$, $\ket{j}$ are all in computational basis. We postpone the proof into the full version.\par
Let's prove Theorem \ref{thm:qssec} from Proposition \ref{prop:clsec} and Lemma \ref{lem:mcluncp}. We will expand the the expression of the input state and qIND-CPA advantage, and each term in the cross terms can be bounded by (\ref{eq:25}).
\begin{proof}[of Theorem \ref{thm:qssec}]
	First, suppose the state that the adversary uses is $\ket{\varphi}=\sum_{i}c_i\ket{i}\ket{\varphi_i}$, where $i$ is in the input system, $i\in I$ where $I$ is the set of non-zero term($c_i\neq 0$), $|I|\leq  2^{N_q}$ and $\ket{\varphi_i}$ is in the reference system. Additionally assume $c_i$s are all real numbers and $|\ket{i}\ket{\varphi_i}|=1$. We can only consider pure states since we can always write a mixed state as a probability ensemble of pure states.\par
	Then we can assume the distinguisher $\cD$ is a unitary operation $D$ on the output and auxiliary qubits, followed by a measurement on a specific output qubit. So we can write $\cD(\rho)=\tr_R(D(\rho\otimes \ket{0}\bra{0})D^\dagger)$, where $\ket{0}\bra{0}$ stands for big enough auxiliary qubits. Let's use $\cE_{proj}(\rho)$ to denote the operation of projecting $\rho$ onto the computational basis. Denote the projection operator onto the $\ket{0}\bra{0}$ space as $P_0$, we have
	\begin{flalign}
		     & \fAdv^{qIND-CPA-oneshot}_\fGBC(\sA,\kappa)                                                                      \\
		=    & |\Pr(\cD(\bE_K\fGBC.\fEn_K(\varphi))=1))-\Pr(\cD(\bE_K\fGBC.\fEn_K(0^{N}))=1)|                          \\
		\leq & |\Pr(\cD(\bE_K\bE_R(\rho))=1))-\Pr(\cD(\bE_K\bE_R(\cE_{proj}(\rho)))=1)|+ \notag \\
		     & |\Pr(\cD(\bE_K\fGBC.\fEn_K(\cE_{proj}(\varphi)))=1))-\Pr(\cD(\bE_K\fGBC.\fEn_K(0^{N}))=1)|\label{eq:32}
	\end{flalign}
	Here we write $\rho:=(\fEt_K\otimes \fI)(\varphi)\otimes \fTAB(K,R)$.\par
	Let's first compute the first term.
	\begin{flalign}
		  & |\Pr(\cD(\bE_K\bE_R(\rho))=1))-\Pr(\cD(\bE_K\bE_R(\cE_{proj}(\rho)))=1))|                                                                          \\
		= & |\tr(P_0(\bE_K\bE_R D(\rho\otimes \ket{0}\bra{0})D^\dagger))-\tr(P_0(\bE_K\bE_R D(\cE_{proj}(\rho)\otimes \ket{0}\bra{0})D^\dagger))|\label{eq:38}
	\end{flalign}
	The first term inside can be expanded as
	\begin{align}
		  & \bE_K\bE_R D(\rho\otimes \ket{0}\bra{0})D^\dagger                                                                                                                                                          \\
		= & \bE_{K}\bE_R D((\fEt_K\otimes \fI)(\varphi)\otimes \fTAB(K,R)\otimes \ket{0}\bra{0})D^\dagger                                                                                                               \\
		= & \bE_{K}\bE_R D((\fEt_K\otimes\fI)((\sum_{i}c_i\ket{i}\ket{\varphi_i})(\sum_{i}c^\dagger_i\bra{i}\bra{\varphi_i}))\notag\\
		&\qquad\qquad\qquad\qquad\qquad(\fEt_K\otimes\fI)^\dagger\otimes \fTAB(K,R)\otimes \ket{0}\bra{0})D^\dagger\label{eq:35}
	\end{align}
	Denote $\ket{x_i}=\fEt_K\ket{i}\otimes \ket{\varphi_i}$, we can simplify the expression:
	\begin{align}
		(\ref{eq:35})= & \bE_{K}\bE_R D(\sum_{i}c_i\ket{x_i}\sum_{i}c^\dagger_i\bra{x_i}\otimes \fTAB(K,R)\otimes \ket{0}\bra{0})D^\dagger \\
		=              & \bE_{K}\bE_R D(\sum_{i}|c_i|^2\ket{x_i}\bra{x_i}\otimes \fTAB(K,R)\otimes \ket{0}\bra{0})D^\dagger     \notag           \\
		               & +\bE_{K}\bE_R D(\sum_{i\neq j}c_ic_j^\dagger\ket{x_i}\bra{x_j}\otimes \fTAB(K,R)\otimes \ket{0}\bra{0})D^\dagger  \\
		=  & \bE_K\bE_R D(\cE_{proj}(\rho)\otimes \ket{0}\bra{0})D^\dagger                                                      \notag\\
		 & +\bE_{K}\bE_RD(\sum_{i\neq j}c_ic^\dagger_j\ket{x_i}\bra{x_j}\otimes \fTAB(K,R)\otimes \ket{0}\bra{0})D^\dagger
	\end{align}
	Substitute it into (\ref{eq:38}), we get
	\begin{flalign}
		&(\ref{eq:38})\notag\\
		= & |\bE_K\bE_R\tr(P_0D(\sum_{i\neq j}c_ic^\dagger_j\ket{x_i}\bra{x_j}\otimes \fTAB(K,R)\otimes \ket{0}\bra{0})D^\dagger)|                                                \label{eq:24n}\\
		=              & |\sum_{i\neq j}c_ic^\dagger_j\bE_K\bE_R(\bra{x_j}\bra{\fTAB(K,R)}\bra{0} D^\dagger P_0 D (\ket{x_i}\ket{\fTAB(K,R)}\ket{0})| \label{eq:25n}                                         \\
		\leq           & \sqrt{\sum_{i\neq j}c_i^2{c^\dagger_j}^2}\sqrt{\sum_{i\neq j}|\bE_K\bE_R\bra{0}\bra{\fTAB(K,R)}\bra{x_j} D^\dagger P_0 D \ket{x_i}\otimes \ket{\fTAB(K,R)}\ket{0}|^2} \\
		\leq           & \sqrt{\sum_{i\neq j}\bE_K\bE_R|(\bra{0}\otimes\bra{\fTAB(K,R)}\bra{x_j})D^\dagger P_0 D (\ket{x_i}\otimes \ket{\fTAB(K,R)}\ket{0})|^2}\label{eq:48}
	\end{flalign}
The magic of this technique actually happens between (\ref{eq:24n}) and (\ref{eq:25n}): first we move $\sum_{i\neq j}c_ic^\dagger_j$ out by linearity, then after rotating terms inside the trace, an expression which talks about applying $D$ on some state becomes an expression for the probability of applying $\{D^\dagger P_0D,D^\dagger P_1D\}$ on $\ket{x_i}$ and getting $\ket{x_j}$.\par

	By Lemma \ref{lem:mcluncp}, consider the operation $\cE$ defined as follows: expand the space and apply $D$, make a measurement with operators $\{P_0,P_1\}$, and apply $D^\dagger$. Let $\cE_0=D^\dagger P_0D(\cdot\otimes \ket{0}\bra{0})D^\dagger P_0D$, and $\cE_1=D^\dagger P_1D(\cdot\otimes \ket{0}\bra{0})D^\dagger P_1D$. We have:
	\begin{align}
		     & \bE_K\bE_R(\tr((\fEt_K\ket{j})^\dagger \cE_0(\fEt_K(i)\otimes \varphi_i\otimes \fTAB(K,R))\fEt_K\ket{j}))    \\
		     & +\tr((\fEt_K\ket{j})^\dagger \cE_1 (\fEt_K(\ket{i}\bra{i})\otimes \varphi_i\otimes\fTAB(K,R))\fEt_K\ket{j})) \\
		\leq & \fpoly(q,N,L)2^{-0.5\kappa}
	\end{align}
	With this, we can bound the inner part of (\ref{eq:48}) further:
	\begin{align}
		     & \bE_K\bE_R|(\bra{0}\otimes\bra{\fTAB(K,R)}\bra{x_j})D^\dagger P_0 D (\ket{x_i}\otimes \ket{\fTAB(K,R)}\ket{0})|^2                                                                   \\
		=    & \bE_K\bE_R|(\bra{0}\otimes\bra{\fTAB(K,R)}((\fEt_K\ket{j})\otimes \ket{\varphi_j})^\dagger \notag\\
		&\qquad \qquad \qquad \qquad\qquad D^\dagger P_0 D (\fEt_K\ket{i}\otimes \ket{\varphi_i})\otimes \ket{\fTAB(K,R)}\ket{0})|^2 \\
		\leq & \bE_K\bE_R\tr((\fEt_K\ket{j})^\dagger \cE_0(\fEt_K(\ket{i}\bra{i})\otimes \varphi_i\otimes \fTAB(K,R)\otimes \ket{0}\bra{0})\fEt_K\ket{j})                                                       \\
		\leq & \fpoly(q,N,L)2^{-0.5\kappa}
	\end{align}
	
	Substitute it back into (\ref{eq:48}), we will know
	\begin{align}
		     & |\Pr(\cD(\bE_K\bE_R(\rho))=1)-\Pr(\cD(\bE_K\bE_R(\cE_{proj}(\rho)))=1)| \\
		\leq & 2^{N_q}\fpoly(q,N,L)2^{-0.25\kappa}
	\end{align}
	The second term in (\ref{eq:32}) can be bounded by Proposition \ref{prop:clsec}. $\cE_{proj}(\rho)$ is a classical state so we have
	\begin{align*}
&|\Pr(\cD(\bE_K\fGBC.\fEn_K(\cE_{proj}(\varphi)))=1)-\Pr(\cD(\bE_K\fGBC.\fEn_K(0^{N}))=1)|\\
\leq & \fpoly(q,N,L)2^{-\kappa}
	\end{align*}
	Combining these two inequalities we have
	$$\fAdv^{qIND-CPA-oneshot}_\fGBC(\sA,\kappa)\leq\fpoly(q,N,L)2^{-0.25(\kappa-4N_q)}$$
\end{proof}
\subsection{Standard Model}
In the last section we prove the security in the quantum random oracle model. In practice, the random oracle can usually be replaced with hash functions, and we claim that our protocol is not an exception (Conjecture \ref{conj:qrotohash}). 
In our protocol, it's more natural to use a symmetric key encryption scheme directly: the usage of the random oracle in our protocol is on the symmetric multi-key encryption scheme with key tags, and the key verification can be replaced with the ``point-and-permute'' technique from the classical garbled circuit.\par
When using symmetric key encryption instead of the random oracle, since in our protocol we use affine functions in KDM game, we need at least that the symmetric key encryption is secure against quantum adversaries under KDM game for affine functions. Although this is a strong assumption, it's still reasonable in practice.\par
\section{Applications}
\subsection{Blind Quantum Computation for C+P Circuits}\label{sec:BQC}
Protocol \ref{prtl:gbc} is a quantum computation delegation protocol. But since the circuit can be put into inputs, we can turn it into a blind quantum computation protocol, where the server doesn't know either input state or  the circuit to be applied. If we only want to hide the type of gates in the circuit, our original protocol actually already achieves it. But if we also want to hide the circuit topology, we need to do more. The adversaries should only know the fact that the circuit is a C+P circuit, the input size and an upper bound on the circuit size. In this subsection we are going to construct a universal machine $\cU$ such that for all the C+P circuit $C$, $C(\rho)=\cU(C,\rho)$. What's more, we want $\cU$ to be in C+P so that we can use our protocol on $\cU$.\par
Suppose the size of input is $N$ and the phase gates are all in the form of $R_Z(\pi/2^d), d\in [D]$. Then there are $N^3+ND$ possible choices for each gate. Thus a $\log(N^3+ND)$ bits description is enough for each gate. For the server-side evaluation, a bad implementation  may lead to $N^3+ND$ extra cost, and we can do a simple preprocessing on the circuit to reduce it: We can first introduce three auxiliary wires, and convert $C$ to a form that only contains three types of gates: (1)$R_Z(\pi/2^d)$ (2)a $\fSWAP$ operation between a normal wire and an auxiliary wire (3)a Toffoli gate on the auxiliary wires. After this transformation, the number of choices of the gates is only  $3N+1+ND$. Thus we can describe each gate by a string of length $\log(3N+1+ND)$. And given the description of $g$, the operation of $\cU$ is a series of multi-controlled gate operations, where the control wires correspond to the gate description and the target wires are the wires in the original circuit. And this multi-controlled multi-target operation is also in C+P and it can be transformed to the standard form of $\fToffoli$ and phase gates.\par 
Since $\cU$ itself is a C+P circuit, we can delegate it by applying Protocol \ref{prtl:gbc}. Then the original circuit will be indistinguishable from the identity circuit, which means we know nothing beyond some information on its size.
 \subsection{Delegation of Shor's Algorithm}
Shor's algorithm contains two parts: first we apply lots of Toffoli gates on $\ket{+}^{\otimes n}\otimes \ket{M}$, where $M$ is, for example, the number to be factored, and $n=\log M$; then measure, apply quantum Fourier transform and measure again. From \cite{ParallelQFT}\cite{ECCResource} we know the quantum Fourier transform is actually easy to implement: a quantum Fourier transform on $n$ qubits has time complexity $\tilde O(n)$. The main burden of Shor's algorithm is the Toffoli part. (\cite{ECCResource} contains resource estimates on the elliptic curve version.) With this protocol we can let the server do the Toffoli part of Shor's algorithm without revealing the actual value of the input.\par

Explicitly, suppose the client wants to run Shor's algorithm on $M$ while also wants to keep $M$ secret, the client can use the following protocol:
\begin{prtl}\label{prtl:shordeleg}Protocol for delegation of Shor's algorithm:\par
Suppose ShorToff is the Toffoli gate part of Shor's algorithm, and its length is $L$.
\begin{enumerate}
	\item The client samples $K\leftarrow \fGBC.\fKg(1^\kappa, 1^{2n}, 1^L)$. Then the client prepares $(\rho,tab)\leftarrow\fGBC.\fEn_K^{\text{ShorToff}}(\ket{+}^{\otimes n}\otimes \ket{M})$ and sends it to the server.
	\item The server evaluates $\fGBC_{\fCL}.\fEv^{\text{ShorToff}}(\rho,tab)$ and sends it back to the client.
	\item The client decrypts with $\fGBC.\fDc_K$. Then it does quantum Fourier transform itself and measures to get the final result.
\end{enumerate}
\end{prtl}
So the quantum resources on the client side are only $O(\kappa n)$ $\fCN$ gates plus $\tilde O(n)$ gates for quantum Fourier transform, and it can delegate Shor's algorithm to the server side securely.\par
\begin{theorem}
Protocol \ref{prtl:shordeleg} can be used to delegate Shor's algorithm securely and non-interactively, in the quantum random oracle model(without assuming trapdoor one-way functions), and for $n$ bit inputs, the amount of quantum resources on the client side are quasi-linear quantum gates plus $O(\kappa n)$ $\fCN$ gates (assuming Conjecture \ref{conj:qindcpa}, $\kappa=\eta$, or under the current security proof, $\kappa=\eta+4n$).
\end{theorem}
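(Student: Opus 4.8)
The plan is to obtain the theorem by combining three facts already established for the $\fGBC$ protocol --- its correctness on C+P circuits, its one-shot qIND-CPA security, and its client-side gate count (Theorem~\ref{thm:complexity}) --- with the standard observation that the non-Toffoli part of Shor's algorithm (the quantum Fourier transform together with the measurements) is quasi-linear and can be run locally by the client. There is no new cryptographic content here; the work is in checking that the pieces fit and in pinning down the right parameters.

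For \emph{correctness}, I would note that ShorToff is built only from Toffoli gates and is in particular a C+P circuit, so the correctness theorem for Protocol~\ref{prtl:gbc} applies: after Steps~1--3 of Protocol~\ref{prtl:shordeleg} the client holds a state within trace distance $L\cdot\fneg(\kappa)$ of $\text{ShorToff}(\ket{+}^{\otimes n}\otimes\ket{M})$, the union bound being over the $L$ gates, each contributing at most the key-tag scheme's negligible per-gate failure probability. The remaining steps --- the QFT on $O(n)$ qubits and the final measurements --- are performed exactly by the client, so the output distribution is within negligible statistical distance of that of an honest run of Shor's algorithm; since that algorithm is anyway repeated polynomially many times and succeeds with constant probability per run, a negligible perturbation of a single run does no harm.

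For \emph{security}, the key point is that the server's entire view in Protocol~\ref{prtl:shordeleg} is exactly the single output $\fGBC.\fEn_K^{\text{ShorToff}}(\ket{+}^{\otimes n}\otimes\ket{M})$, which already bundles in the garbled tables; and since the security of a non-interactive delegation protocol is by definition the security of its encryption procedure, it suffices to invoke the one-shot qIND-CPA security of $\fGBC_\fCL$. Only the $n$-qubit register $\ket{+}^{\otimes n}$ is genuinely quantum --- the register holding $\ket{M}$ is in the computational basis and counts as a classical input --- so $N_q=n$ in Theorem~\ref{thm:qssec}. Taking $\kappa=\eta+4n$ makes $\fAdv^{qIND-CPA-oneshot}_{\fGBC_\fCL}(\sA,\kappa)=\fneg(\eta)$, and a triangle inequality through an encryption of the all-zero state of the same size then shows that a BQP server cannot distinguish the encryptions corresponding to any two moduli $M,M'$, hence learns nothing efficiently about $M$. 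Under Conjecture~\ref{conj:qindcpa} the same conclusion holds already with $\kappa=\eta$. The protocol is manifestly one round, and the whole argument lives in the quantum random oracle model with no trapdoor assumption.

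For \emph{resources}, the client's quantum operations are: $n$ Hadamards to prepare $\ket{+}^{\otimes n}$; the maps $\fEt_{K_{in}}$ and $\fEt^{-1}_{K_{out}}$ used inside $\fGBC.\fEn$ and $\fGBC.\fDc$, which can be implemented with only $\fCN$ and $\fX$ gates, $O(\kappa)$ per wire over the $O(n)$ relevant wires, hence $O(\kappa n)$ in total; the quantum Fourier transform, implementable with $\tilde O(n)$ gates \cite{ParallelQFT}\cite{ECCResource}; and $O(n)$ single-qubit measurements. Building the tables $\fTAB^{\text{ShorToff}}_{\fCL}(K)$ (choosing permutations, paddings, and random-oracle values) is purely classical and polynomial in $L$ and $\kappa$. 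Summing gives $\tilde O(n)$ quantum gates plus $O(\kappa n)$ $\fCN$ gates, i.e.\ $O(\eta n+n^2)$ $\fCN$ gates when $\kappa=\eta+4n$ and $O(\eta n)$ under Conjecture~\ref{conj:qindcpa}, consistent with Theorem~\ref{thm:complexity}. I do not expect a genuine obstacle: the two points that need care are the quantum/classical split of the input that pins down $N_q=n$ (so that Theorem~\ref{thm:qssec} gives $\kappa=\eta+4n$), and the observation that the server's view is captured verbatim by $\fGBC.\fEn$'s output, so that Theorem~\ref{thm:qssec} applies directly with no extra reduction.
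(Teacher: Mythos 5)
Your proposal is correct and follows essentially the same route as the paper, which proves this theorem simply by combining the correctness and complexity statements for $\fGBC$ with the one-shot qIND-CPA security of Theorem~\ref{thm:qssec} (taking $N_q=n$ for the genuinely quantum register $\ket{+}^{\otimes n}$, hence $\kappa=\eta+4n$) and the observation that the QFT part is $\tilde O(n)$ and run locally. No substantive difference.
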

For comparison, if the client runs Shor's algorithm locally, the client needs to perform $\omega(n^2\log n)$ Toffoli gates, and the exact form depends on the multiplication method it uses. Schoolbook multiplication leads to $O(n^3)$ complexity; if it uses fast multiplication method, the complexity is still $\omega(n^2\log n)$ and it has a big hidden constant. 
\par 
\section{Quantum KDM Security}
As a natural generalization of our discussion of KDM-security, we formalize the quantum KDM security and construct a protocol in this section. Previously when we discuss the KDM security the function $f$ and message $m$ are classical; here we further generalize them to include quantum states and operations.
\begin{definition}
	A symmetric key non-adaptive quantum KDM game naSymQKDM for function family $\cF$  in the quantum random oracle model is defined as follows:
	\begin{enumerate}
		\item The challenger chooses bit $b\leftarrow_r\{0,1\}$ and samples $K=\{sk_i\}_{i=1}^{N}$, $sk_i\leftarrow\fKg(1^\kappa)$.
		\item The adversary and the challenger repeat the following for $L$ times,non-adaptively, in other words, the challenger should only sends out the answers in step (b) after it receives all the queries:
		      \begin{enumerate}
			      \item The adversary picks index $i$, function $f\in \cF$ and message $\rho\in \bD(\cR\otimes \cM)$, and sends system $\cM$ to the challenger.
			      \item If $b=1$, the challenger returns $c=\fEn_{sk_i}(f(K,\rho_m))$ to the adversary. If $b=0$, the challenger returns $c=\fEn_{sk_i}(0^{|f(K,\rho_m)|})$.
		      \end{enumerate}
		\item The adversary tries to guess $b$ with some distinguisher $\cD$, and outputs $b^\prime$.
	\end{enumerate}
\end{definition}
Note that $\cF$ can be quantum operations and can query the random oracle with quantum states. The output of functions in $\cF$ should be fixed-lengthed, otherwise $|f(K,m)|$ will not be well-defined.\par
The guessing advantage is defined as $\fAdv^{naSymQKDM}(\sA,\kappa)=|\Pr(b^\prime=1|b=1)-\Pr(b^\prime=1|b=0)|$
\begin{definition}
	A symmetric key quantum encryption scheme is nonadaptively qKDM-CPA secure for function $\cF$ if for any BQP adversary $\sA$, $$\fAdv^{naSymQKDM}_{\cF}(\sA,\kappa)=\fneg(\kappa)$$
\end{definition}

\subsection{Protocol Design}
\begin{prtl}\label{QKDM}
A Quantum KDM Secure Protocol in the Quantum Random Oracle Model:
	\begin{description}
		\item[Key Generation] $\fQKDM.\fKg(1^\kappa)$: $sk\leftarrow \{0,1\}^\kappa$.
		\item[Encryption] $\fQKDM.\fEn_{sk}(\rho)$: Sample $a,b\in_r\{0,1\}^{N}$, where $N$ is the length of inputs.\\
		Output $(\fX^a\fZ^b(\rho), \fKDMP.\fEn_{sk}(a,b))$.
		\item[Decryption] $\fQKDM.\fDc_{sk}((\rho,c))$: First compute $a,b\leftarrow \fKDMP.\fDc_{sk}(c)$, then output $\fX^a\fZ^b(\rho)$
	\end{description}
\end{prtl}
\begin{theorem}\label{QKDMsec}
	Protocol \ref{QKDM} is nonadaptively qKDM-CPA secure for functions in $\cF[\fpoly]$ in the quantum random oracle model, where $\cF[\fpoly]$ is the function family that makes at most $\fpoly(\kappa)$ queries to the quantum random oracle.
\end{theorem}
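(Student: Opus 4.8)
The plan is to reduce the security of $\fQKDM$ to two ingredients: the information-theoretic hiding of the quantum one-time pad, and the classical KDM security of $\fKDMP$ proved in Theorem~\ref{thm:rosec}. On each query the adversary receives $(\fX^{a}\fZ^{b}(f(K,\rho_m)),\fKDMP.\fEn_{sk_i}(a,b))$ for a freshly sampled pad $(a,b)$. The key point is that if the classical part carried no information about $(a,b)$, then averaging over the pad would send the quantum part to $\bE_{a,b}[\fX^{a}\fZ^{b}(f(K,\rho_m))\fZ^{b}\fX^{a}]=\fI/2^{N}$, independently of $f(K,\rho_m)$, so the whole transcript would be independent of the challenge bit. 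Hence all of the distinguishing advantage is funnelled into the classical ciphertexts $\fKDMP.\fEn_{sk_i}(a,b)$.

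Concretely I would run a hybrid argument. Let $H_0$ be the $b=1$ instance of the naSymQKDM game and $H_0'$ the $b=0$ instance. Define $H_1$ (resp.\ $H_1'$) identically except that every classical ciphertext $\fKDMP.\fEn_{sk_i}(a,b)$ handed to the adversary is replaced by $\fKDMP.\fEn_{sk_i}(0^{2N})$. In $H_1$ the classical registers no longer depend on the pads, so applying the one-time-pad twirl on each answered query shows that the adversary's state in $H_1$ is exactly its state in $H_1'$: for every query it is $\fI/2^{N}$ tensored with an encryption of zeros, next to the adversary's untouched reference systems. Therefore $\fAdv^{naSymQKDM}_{\cF[\fpoly]}(\sA,\kappa)\le |\Pr[\sA\to 1 \text{ in } H_0]-\Pr[\sA\to 1 \text{ in } H_1]|+|\Pr[\sA\to 1 \text{ in } H_1']-\Pr[\sA\to 1 \text{ in } H_0']|$, and it remains to bound each of the two terms by $\fneg(\kappa)$.

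Each term says that swapping $\fKDMP.\fEn_{sk_i}(a,b)$ for $\fKDMP.\fEn_{sk_i}(0^{2N})$ is undetectable, which I would derive from Theorem~\ref{thm:rosec}. Given a $\fQKDM$-distinguisher $\sA$, build a naSymKDM adversary $\sC$ against $\fKDMP$: for each of $\sA$'s (non-adaptive) queries $(i,f,\rho)$, $\sC$ samples its own pad $(a,b)$, asks its oracle the constant query ``index $i$, function outputting $(a,b)$'' to obtain either $\fKDMP.\fEn_{sk_i}(a,b)$ or $\fKDMP.\fEn_{sk_i}(0^{2N})$, attaches the quantum register $\fX^{a}\fZ^{b}(f(K,\rho_m))$ (this is the delicate step, discussed next), and finally runs $\sA$'s distinguisher and echoes its guess. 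Because $\sA$ is BQP and $f\in\cF[\fpoly]$, all of the parameters $L,T,q,q'$ of Theorem~\ref{thm:rosec} are polynomial in $\kappa$, so that theorem bounds $\sC$'s advantage — hence each hybrid gap — by $\fpoly(\kappa)2^{-0.5\kappa}=\fneg(\kappa)$.

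The main obstacle is exactly producing the register $\fX^{a}\fZ^{b}(f(K,\rho_m))$ inside $\sC$: the pad is fresh and key-independent, but the plaintext $f(K,\rho_m)$ depends on the secret keys $K=\{sk_i\}$, which $\sC$ does not hold, and $f$ may even output a register containing some $sk_j$, so the quantum side information cannot be assumed key-independent a priori. This is a genuine circularity — such side information could, together with knowledge of a pad, help recover some $sk_i$ and so defeat the very classical ciphertext that is meant to hide that pad. Resolving it requires routing the key-dependent computation through the naSymKDM oracle's own ability to evaluate functions of $K$ (folding $f$ together with $\sA$'s state preparation into the function submitted to that oracle) and running the accompanying one-way-to-hiding / compressed-oracle bookkeeping — the same bookkeeping behind Theorem~\ref{thm:rosec} — to show that, except with probability $\fpoly(\kappa)2^{-\Omega(\kappa)}$, the hash inputs of the form $sk_i||R$ guarding the pads are never queried, so the pads stay hidden and the quantum registers stay maximally mixed; the poly-query restriction defining $\cF[\fpoly]$ is exactly what makes this counting work. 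Getting this circular step right, rather than the bookkeeping around it, is where essentially all the effort lies; granting it, the triangle inequality above closes the proof.
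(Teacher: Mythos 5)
Your overall skeleton is the right one and matches the paper's proof in spirit: the quantum one-time-pad twirl argument (that once the classical ciphertexts carry no information about $(a,b)$, averaging over the pad makes the quantum registers maximally mixed and the two worlds identical) is exactly the closing step of the paper's proof, and your hybrids $H_0\to H_1$, $H_1\equiv H_1'$, $H_1'\to H_0'$ are a sound decomposition. The $H_0'\to H_1'$ gap is also fine, since there the quantum register is $\fX^a\fZ^b(0^N)$ and the reduction can prepare it itself.

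The genuine gap is the $H_0\to H_1$ step, and you have correctly located it but not closed it. The black-box reduction to Theorem~\ref{thm:rosec} cannot be implemented: the naSymKDM oracle of Definition~8 accepts only classical functions queried on classical states and returns only the \emph{encryption} $\fKDMP.\fEn_{sk_i}(f(K,msg))$ --- it never hands back a plaintext or any quantum register depending on $K$ --- so your adversary $\sC$ has no way to manufacture the register $\fX^a\fZ^b(f(K,\rho_m))$ that it must feed to $\sA$; ``folding $f$ into the function submitted to that oracle'' only ever produces more ciphertexts. Your fallback (``run the same one-way-to-hiding bookkeeping as Theorem~\ref{thm:rosec}'') is indeed the correct technique, but it is the entire content of the proof and you leave it as a black box while conceding that ``essentially all the effort lies'' there. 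What the paper actually does is forgo the reduction and re-run the proof of Theorem~\ref{thm:rosec} directly, replacing its single estimate of $|\bE_{R_{K||}}(\ket{c_{K,R}}\bra{c_{K,R}}-\ket{e_{K,R}}\bra{e_{K,R}})|_{\tr}$ with a new one: it hybridizes over the $q'$ oracle queries made by the (now quantum) function $f$, swapping in an oracle that is fresh on the $T$ encryption positions $sk_i\|R_j$; each swap costs $O(\sqrt{q'^2 T\,2^{-\kappa}})$ because the paddings $R_j$ are sampled independently of $f$'s input, so $f$'s query weight on those positions is exponentially small. After this decoupling the values $\cH(sk_i\|R_j)$ are uniform and independent of the adversary's view, so they perfectly hide $(a,b)$, and only then does the twirl apply --- yielding the stated $\fpoly(L,T,q,q')2^{-0.25\kappa}$ bound. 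Without carrying out this step (or an equivalent one-way-to-hiding count), your proof does not go through as written.
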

We put its proof in the full version of this paper.
\section{Open Problems}
One obvious open problem in our paper is to prove Conjecture \ref{conj:qindcpa}, the qIND-CPA security without additional requirement on $\kappa$. We believe this is true, but we can only prove the security when $\kappa-4N_q=\eta$. And another further research direction is to base these protocols directly on the assumptions in the standard model, for example, the existence of hash functions or symmetric key encryption schemes that are exponentially KDM secure for affine functions against a quantum adversary. We can also study how to optimize this protocol, and how efficient it is compared to other protocols based on the quantum one-time pad. One obvious route is to make use of the optimization techniques for classical garbled circuits.\par
Another open question is whether this protocol is useful in other problems than Shor's algorithm. Lots of previous works studied quantum circuits on $\{\mathsf{Clifford}, \fT\}$ gate set, and our work shows $\{\text{C+P},\fH\}$ is also important and worth studying. There are not many works on converting quantum circuits into layers of C+P gates and $\fH$ gates, and it's possible that some famous quantum algorithms which require a lot of $\fT$ gates, after converted into $\{\text{C+P},\fH\}$ gate set, can have small $\fH$ depth. This problem is still quite open, and further research is needed here.\par
What's more, KDM security in quantum settings is an interesting problem. This paper gives some initial study on it, but there are still a lot of open questions. Is it possible to construct quantum KDM secure protocol in the standard model? Could quantum cryptography help us design classical KDM secure scheme?Again, further research is needed here.\par
This paper also gives some new ideas on constructing secure quantum encryption schemes without using trapdoor functions. Although there is some result\cite{ITQHELimit} on the limit of information-theoretically secure quantum homomorphic encryption, in our work we use the quantum random oracle and make the circuits available to the client, the limit doesn't hold any more. So here comes lots of interesting problems on the possibility and impossibility of quantum computation delegation: What is the limit for non-interactive information-theoretically secure delegation of quantum computation, where the circuit is public/private, with/without quantum ROM? If we allow small amount of quantum/classical communication, does it lead to something different?

\section*{Acknowledgements}
The author would like to thank Prof. Adam Smith, NSF funding and anonymous reviewers.

\bibliography{main_Jiayu}
\pagebreak
\begin{center}
\textbf{\large Supplementary Materials}
\end{center}
\appendix
\section{Missing Proofs}
\subsection{Missing Proofs in Section 3 to 5}
\begin{proof}[of Lemma \ref{lm:itecd}]
	Let's first consider the encryption of one qubit. Suppose the input is $\ket{\phi_{SR}}=\alpha\ket{0}\ket{\varphi_0}+\beta\ket{1}\ket{\varphi_1}$, and we apply $\fEt$ on the first register. The lemma holds since:
	\begin{align}
		  & \frac{1}{2^\kappa(2^\kappa-1)}\sum_{k_0,k_1\in\{0,1\}^\kappa,k_0\neq k_1}(\fEt^S_{k_0,k_1}\otimes \fI^R)(\phi)                                                                                                                                      \\
		= & \frac{1}{2^\kappa(2^\kappa-1)}\sum_{k_0,k_1\in\{0,1\}^\kappa,k_0\neq k_1}(\alpha\ket{k_0}\ket{\varphi_0}+\beta\ket{k_1}\ket{\varphi_1})(\alpha^\dagger\bra{k_0}\bra{\varphi_0}+\beta^\dagger\bra{k_1}\bra{\varphi_1})                               \\
		= & (\frac{1}{2^\kappa}\fI)\otimes(|\alpha|^2\varphi_0+|\beta|^2\varphi_1)\notag\\&\qquad+\frac{1}{2^\kappa-1}(\ket{\Phi}\bra{\Phi}-\frac{1}{2^\kappa}\fI)\otimes(\alpha^\dagger\beta\ket{\varphi_1}\bra{\varphi_0}+\alpha\beta^\dagger\ket{\varphi_0}\bra{\varphi_1})
	\end{align}
	where $\ket{\Phi}=(\frac{1}{\sqrt{2}}(\ket{0}+\ket{1}))^{\otimes \kappa}$.\par
	The first term can be written as $\frac{1}{2^\kappa}\fI\otimes\tr_S(\phi)$. Which means
	$$\Delta((\fEt^S_{k_0,k_1}\otimes \fI^R)(\phi),\frac{1}{2^\kappa}\fI\otimes\tr_S(\phi))\leq (\frac{1}{2})^{\kappa-4}$$
	Using this inequality on all the encrypted qubits completes the proof.
\end{proof}
\begin{proof}[of Proposition \ref{fact:C+Pdecomp}]
	First, any diagonal gate can be decomposed to Toffoli gates and single qubit phase gates. For $R_Z(\theta)$, expand $\theta$ on $\pi/2^d,d=1,\cdots D$ will complete the proof.
\end{proof}
\subsection{Proof of Theorem \ref{thm:rosec}}
\begin{proof}
	The proof is similar to the proof of optimality of Grover's algorithm, which can be found in \cite{GroverOptimal}. But we need to make some revisions which are necessary for this problem.\par
	First, we notice that the key tags can also be seen as part of the ciphertexts, where the messages to be encrypted are $0^*$. So we can prove this theorem without considering the key tags, and we need to redefine $T$ as $2T$. This won't affect the final result since $2$ can be absorbed into the $\fpoly$ function.\par
	After getting the ciphertexts from the encryption phase, the adversary will use some distinguisher $\cD$ to distinguish the states and compute $b^\prime$. By expanding the space, we can assume the distinguisher $\cD$ is a measurement of a specific output qubit after applying a unitary transform $O$ on the ciphertexts and auxiliary qubits. $O$ can be written as $O=U_q\cH_qU_{q-1}\cdots \cH_1U_0$, where $\cH_i$ means ith query to the quantum random oracle.\par
	Furthermore, suppose the secret keys generated by the key generator are $K=\{sk_i\}_{i=1}^L$, and $R$ is the set of all the randomness in the encryption phase(including the random paddings and the random outputs from the random oracle). Let $c_{K,R}$ denote the ciphertexts that the adversary gets in the naSymKDM game when $b=1$, and $e_{K,R}$ denote the ciphertexts when $b=0$. Then what we need to prove can be rewritten as
	\begin{align}
		     & \fAdv_{\fKDMP,\cF[q^\prime]}^{naSymKDM}(\sA_{(L,T,q)},\kappa)                                                                                                            \\
		=    & \tr(P_0(\bE_{K}\bE_R(O(\ket{c_{K,R}}\bra{c_{K,R}}\otimes \ket{0}\bra{0})O^\dagger)\notag\\&\qquad\qquad\qquad\qquad\qquad\qquad-\bE_{K}\bE_R(O(\ket{e_{K,R}}\bra{e_{K,R}}\otimes \ket{0}\bra{0})O^\dagger)))          \\
		\leq & \frac{1}{2}|\bE_{K}\bE_R(O(\ket{c_{K,R}}\bra{c_{K,R}}\otimes \ket{0}\bra{0})O^\dagger)-\bE_{K}\bE_R(O(\ket{e_{K,R}}\bra{e_{K,R}}\otimes \ket{0}\bra{0})O^\dagger)|_{\tr}
	\end{align}
	Where $\ket{0}\bra{0}$ can be very large system.\par
	Suppose $\cH^\prime$ is a new random oracle, which is independently random of $\cH$ on inputs that contain a prefix in $K$, and behaves the same as $\cH$ otherwise. Notice that if we can replace the random oracle queries in $O$ with $\cH^\prime$, since the challenger uses $\cH(sk||\cdot)$ for encryption, the messages will be hidden and won't be distinguished.\par 
	Define $O_i=U_q\cH_qU_{q-1}\cdots \cH_{i+1}U_i\cH^\prime_iU_{i-1}\cdots\cH^\prime_1U_0$. So $O=O_0$. Define
	\begin{align}
		\phi^{i}&=\bE_R(O_i (\ket{c_{K,R}}\bra{c_{K,R}}\otimes \ket{0}\bra{0})O_i^\dagger)\\
	\psi^{i}&=\bE_R(O_i (\ket{e_{K,R}}\bra{e_{K,R}}\otimes \ket{0}\bra{0})O_i^\dagger)\label{eq:80}
	\end{align}
	So
	\begin{align}
		     & \fAdv_{\fKDMP,\cF[q^\prime]}^{naSymKDM}(\sA_{(L,T,q)},\kappa)                                                    \\
		=    & \frac{1}{2}|\bE_K(\phi^0-\psi^0)|_{\tr}                                                                          \\
		\leq & \frac{1}{2}(|\bE_K(\phi^0-\phi^q)|_{\tr}+|\bE_K(\phi^q-\psi^q)|_{\tr}+|\bE_K(\psi^0-\psi^q)|_{\tr})\label{eq:74}
	\end{align}
	Let's first estimate $|\bE_K(\phi^q-\psi^q)|_{\tr}$. First by dividing the randomness $R$ into $R^\prime$, which is the randomness of $\cH^\prime$, and $R_{K||}$, which is the randomness of $\cH(sk||\cdot)$  together with the random paddings. We can see
	\begin{equation}
		|\bE_K(\phi^q-\psi^q)|_{\tr}= |\bE_K\bE_{R^\prime}O_i(\bE_{R_{K||}}(\ket{c_{K,R}}\bra{c_{K,R}}-\ket{e_{K,R}}\bra{e_{K,R}}))O_i^\dagger)|_{\tr}\label{eq:75}
	\end{equation}
	The components of $\ket{c_{K,R}}$ have form $(R_i,\cH(sk||R_i)\oplus m)$. We can't say the randomness in $R_{K||}$ hides everything in $m$ since the computation of $m=f(K,msg)$ may also contains some random oracle queries. But the random oracle queries of $f$ are all in classical states so the probability that $sk||R_i$ does not appear before can be bounded. Denote the queries to the random oracle inside the circuit of $f$ as $Q$. Define $bad$ as one of the following two events: (1) for some $i$, $sk_i||R_i\in Q$; (2) the set of $sk_i||R_i$ contains repetitions. And define $good$ as the complement of $bad$.
	\begin{align}
		     & |\bE_{R_{K||}}(\ket{c_{K,R}}\bra{c_{K,R}}-\ket{e_{K,R}}\bra{e_{K,R}})|_{\tr}                           \label{eq:76}                \\
		=    & |\Pr(bad)\bE_{R_{K||}}(\ket{c_{K,R}}\bra{c_{K,R}}|bad)+\Pr(good)\bE_{R_{K||}}(\ket{c_{K,R}}\bra{c_{K,R}}|good)                      \\
		     & -\Pr(bad)\bE_{R_{K||}}(\ket{e_{K,R}}\bra{e_{K,R}}|bad)-\Pr(good)\bE_{R_{K||}}(\ket{e_{K,R}}\bra{e_{K,R}}|good)|_{\tr} \label{eq:78} \\
		=    & |\Pr(bad)\bE_{R_{K||}}(\ket{c_{K,R}}\bra{c_{K,R}}|bad)-\Pr(bad)\bE_{R_{K||}}(\ket{e_{K,R}}\bra{e_{K,R}}|bad)|_{\tr}   \label{eq:79} \\
		\leq & 2\Pr(bad)                                                                                                                           \\
		\leq & 4T^2({q^\prime}+2)2^{-\kappa}\label{eq:81}
	\end{align}
	(\ref{eq:78}) to (\ref{eq:79}) is because when $sk||R_i$ does not appear in any other places of the random oracle queries, we can take average on the randomness of $\cH(sk||R_i)$ and the distribution of $\cH(sk||R_i)\oplus 0$ will be the same as $\cH(sk||R_i)\oplus m$.\par
	Substitute it back into (\ref{eq:75}) we will know
	\begin{equation}
		|\bE_K(\phi^q-\psi^q)|_{\tr}\leq 4T^2({q^\prime}+2)2^{-\kappa}\label{eq:82}
	\end{equation}
	And we have:
	\begin{align}
		&|\phi^0-\phi^q|_{\tr} \\
		\leq &\sum_i|\phi^i-\phi^{i-1}|_{\tr}                                                                                                                                        \label{eq:90}\\
		                      \leq & \sum_i\bE_R|((\cH_i-\cH^\prime_i)V_i(\ket{c_{K,R}}\otimes \ket{0}))|\quad(V_i:=U_i\cH^\prime_{i-1}U_{i-1}\cdots\cH^\prime_1U_0)                                        \\
		                      \leq & \sum_i\bE_R2\sqrt{\tr(P_{K}V_i(\ket{c_{K,R}}\bra{c_{K,R}}\otimes \ket{0}\bra{0})V_i^\dagger)}                                                                          \\
		                      \leq & 2\sum_i\sqrt{\bE_R\tr(P_K(V_i(\ket{c_{K,R}}\bra{c_{K,R}}\otimes \ket{0}\bra{0})V_i^\dagger)))}                                                                         \\
		                      \leq & 2\sqrt{q\sum_i\tr(P_{K}(\bE_{R^\prime}(V_i(\bE_{R_{K||}}(\ket{c_{K,R}}\bra{c_{K,R}})\otimes \ket{0}\bra{0})V_i^\dagger)))}                               \label{eq:94} \\
		                      \leq & 2\sqrt{q\sum_i(\tr(P_{K}(\bE_{R^\prime}(V_i(\bE_{R_{K||}}(\ket{e_{K,R}}\bra{e_{K,R}})\otimes \ket{0}\bra{0})V_i^\dagger)))+4T^2({q^\prime}+2)2^{-\kappa})}\label{eq:98}
	\end{align}
	Where the last step is by (\ref{eq:76})-(\ref{eq:81}).\par
	And when we take the expectation on $K$, we have
	\begin{align}
		     & \bE_K|\phi^0-\phi^q|_{\tr}                                                                                                              \\
		\leq & \sqrt{\bE_K|\phi^0-\phi^q|_{\tr}^2}                                                                                                     \\
		\leq & \sqrt{4q\bE_K\sum_i(\tr(P_{K}(\bE_{R^\prime}(V_i(\bE_{R_{K||}}(\ket{e_{K,R}}\bra{e_{K,R}})\otimes \ket{0}\bra{0})V_i^\dagger)))+2T({q^\prime}+T)2^{-\kappa})} \label{eq:99}\\
		\leq & \sqrt{(4q(qL+4T^2({q^\prime}+2)))2^{-\kappa}}\label{eq:102}
	\end{align}
(\ref{eq:99}) to (\ref{eq:102}) is because the inner part is the same for different $K$ so we can take average on $P_K$.\par
	Similarly we have
	\begin{equation}\bE_K|\psi^0-\psi^q|_{\tr}\leq \sqrt{(4q(qL+4T^2({q^\prime}+2)))2^{-\kappa}}\label{eq:93}
	\end{equation}
	Substitute (\ref{eq:82})(\ref{eq:102})(\ref{eq:93}) into (\ref{eq:74}), we have
	$$\fAdv_{\fKDMP,\cF[q^\prime]}^{naSymKDM}(\sA_{(L,T,q)},\kappa)\leq 2\sqrt{(4q(qL+4T^2({q^\prime}+2)))2^{-\kappa}}+4T^2({q^\prime}+2)2^{-\kappa}$$
\end{proof}
\subsection{Proof of Proposition \ref{prop:clsec}}
First, let's prove a lemma. To handle the case that there are some revealed keys in the $\fGBC$ game, we introduce the following game, which we called ``rG'' game, which is the starting point of our proof, and whose security can be proved from Theorem \ref{thm:rosec}.
\begin{lemma}\label{lm:rgsec}
	Let's first define an intermediate game for $\fCL$. Let's call restricted game rG, with parameters $(N,L,J,q), \kappa$:
	\begin{enumerate}
		\item The challenger chooses bit $b\leftarrow\{0,1\}$ and samples $(sk_i)_{i=1}^L$, $sk_i\leftarrow \fKg(1^\kappa)$
		\item The adversary picks (1)a set of pairs $P=\{(S_j,T_j)\}_{j=1}^J$, where $S_j,T_j$ are all tuples of indexes, $\forall j, S_j\subseteq [L], T_j\subseteq [L], |S_j|=|T_j|=3\text{ or }|S_j|=1,|T_j|=0$; (2)a sequence of messages $M=(m_1,m_2,\cdots m_J)$; (3)index set $Rev=\{rid_s\}_{s=1}^{N}$, $rid_s\in [L]$.\par
		      Define $Closure(Rev)$ as the minimum set that satisfies:\\(1)$Rev\subseteq Closure(Rev)$; (2)$\forall (S_j,T_j)\in P$, if $S_j\subseteq Closure(Rev)$, then $T_j\subseteq Closure(Rev)$.
		\item The challenger sends a set of secret keys $\{sk_{rid}\},rid\in Rev$ to the adversary. And for all $i\in [J]$, the challenger also sends:
		      \begin{gather}
			      \fCL.\fEn_{sk_{S_i}}(sk_{T_i}||m_i),\text{if }b=1\text{ or }S_i\subseteq Closure(Rev)\\
			      \fCL.\fEn_{sk_{S_i}}(0^{|sk_{T_i}||m_i|}),\text{ otherwise}
		      \end{gather}

		      Here we use $sk_{S_i}$ as the abbreviation of the three keys used for the encryption whose indexes are in $S_i$, and use $sk_{T_i}$ to denote the concatenation of keys whose indexes are in $T_i$.

		\item The adversary tries to guess $b$ with some distinguisher $\cD$, which will query the random oracle(either in classical or quantum states) $q$ times.
	\end{enumerate}
	The guess advantage is defined as $\fAdv^{rG}_\fCL(\sA_{(N,L,J,q)},\kappa)=|\Pr(b^\prime=1|b=1)-\Pr(b^\prime=1|b=0)|$.\par
	We have $\fAdv^{rG}_\fCL(\sA_{(N,L,J,q)},\kappa)=\fpoly(q,L,J)2^{-0.5\kappa}$, where $\fpoly$ is a polynomial that does not depend on $\sA$ or the parameters.
\end{lemma}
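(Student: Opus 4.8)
The plan is to reduce the rG game for the multi-key scheme $\fCL$ (Protocol \ref{prtl:clro}) to the single-key non-adaptive KDM game for $\fKDMP$ (Protocol \ref{prtl:kdmro}) and then invoke Theorem \ref{thm:rosec}. Given an rG adversary $\sA$ with parameters $(N,L,J,q)$, the first step is to split the key slots $[L]$ into the \emph{exposed} slots $C:=Closure(Rev)$ and the \emph{hidden} slots $[L]\setminus C$. In the simulation the $\fKDMP$ challenger owns exactly one secret key per hidden slot (so the number of keys is $L'=|[L]\setminus C|\le L$), while the reduction samples the keys of the exposed slots itself and hands $\{sk_{rid}\}_{rid\in Rev}$ to $\sA$. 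The closure condition is used precisely here: whenever a query row has $S_i\subseteq C$ we also have $T_i\subseteq C$, so the reduction knows every key occurring in that row and can compute the genuine ciphertext $\fCL.\fEn_{sk_{S_i}}(sk_{T_i}||m_i)$ by itself, exactly as the rG challenger does in both branches.

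The heart of the reduction is the rows with $S_i\not\subseteq C$. Let $k_a$ be the hidden key of smallest index in $S_i$. The key observation is that a $\fCL$ ciphertext of the form $(R_1,R_2,R_3,\cH(k_a||R_1)\oplus\cH(k_b||R_2)\oplus\cH(k_c||R_3)\oplus x)$ can be reconstructed from a single $\fKDMP$ encryption under $sk_{k_a}$ of the string $\cH(k_b||R_2)\oplus\cH(k_c||R_3)\oplus x$: the function that produces this string is allowed to read the whole key vector and to make a constant number of classical random-oracle calls, so it can itself evaluate the other two one-time pads (reading $k_b,k_c$ from the key vector when hidden, from the query string when exposed) and, in the Toffoli case, read off the message keys $sk_{T_i}$. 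This is exactly the encryption-cycle behaviour that KDM security is designed to absorb, so the relevant KDM functions lie in $\cF[O(1)]$. The per-key tags of $\fCL$ are produced in the same spirit: the tag for $sk_{k_a}$ comes for free with the $\fKDMP$ ciphertext, and for every other hidden key in $S_i$ the reduction makes an auxiliary $\fKDMP$ query on the all-zero message, whose output is $b$-independent and already has the form of a valid key tag; tags for exposed keys are computed directly. Rows with $|S_i|=1$ are the degenerate case where the KDM function is a constant.

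Next I would define the intermediate distribution $\cG$ that arises when the above simulation is run against the $b=0$ branch of $\fKDMP$ — every hidden row becomes $\fKDMP.\fEn_{sk_{k_a}}(0^\ast)$ padded out with fresh random blocks and tags — and build two naSymKDM adversaries from $\sA$. The adversary $\cB$ uses KDM functions returning $\cH(k_b||R_2)\oplus\cH(k_c||R_3)\oplus(sk_{T_i}||m_i)$, so its $b=1$ view is literally rG with $b=1$ and its $b=0$ view is $\cG$. The adversary $\cB'$ uses KDM functions returning $\cH(k_b||R_2)\oplus\cH(k_c||R_3)$ (and the all-zero constant on single-key rows), so its $b=1$ view is literally rG with $b=0$ and its $b=0$ view is again $\cG$ — the two branches agree because the two families of KDM functions have equal output lengths and the exposed part of the view is simulated identically. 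Each of $\cB,\cB'$ runs $\sA$'s distinguisher plus $O(J)$ classical oracle calls to prepare the exposed rows, makes $O(J)$ encryption queries, uses $L'\le L$ keys, and its KDM functions lie in $\cF[O(1)]$. Applying Theorem \ref{thm:rosec} to each and using the triangle inequality,
$$\fAdv^{rG}_\fCL(\sA_{(N,L,J,q)},\kappa)\le \fAdv^{naSymKDM}_{\fKDMP,\cF[O(1)]}(\cB,\kappa)+\fAdv^{naSymKDM}_{\fKDMP,\cF[O(1)]}(\cB',\kappa)\le \fpoly(q,L,J)2^{-0.5\kappa}.$$

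I expect the delicate step to be the middle one: checking that everything the simulator is forbidden to know — chiefly the hidden message keys $sk_{T_i}$ that close the encryption cycles, but also the shuffles, paddings, and tag structure — really can be pushed inside a legitimate, fixed-output-length, few-query KDM function, and that the two simulations are glued to one common distribution $\cG$ so that the triangle inequality closes. The remaining bookkeeping (the closure computation, counting queries and keys, absorbing constants into $\fpoly$) is routine.
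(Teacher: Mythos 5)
Your proposal is correct and follows essentially the same route as the paper's proof: partition the key slots by $Closure(Rev)$, let the hidden keys be owned by the $\fKDMP$ challenger while the reduction samples the exposed ones itself, fold the remaining one-time pads and the message keys $sk_{T_i}$ into a KDM function with $O(1)$ oracle queries (the paper uses $\cF[q^\prime=2]$ via the ``cascading'' view of $\fCL$), obtain the key tags as encryptions of $0^*$, and invoke Theorem~\ref{thm:rosec}. Your only departure is the explicit intermediate distribution $\cG$ with the two adversaries $\cB,\cB^\prime$ glued by a triangle inequality; the paper's proof uses a single reduction and silently identifies the rG $b=0$ branch with the KDM $b=0$ branch (which are not literally the same distribution, since $\fCL.\fEn_{sk_{S_i}}(0^*)$ still contains the inner pads), so your version is the more careful one, at the cost only of a factor of $2$ absorbed into $\fpoly$.
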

The proof is based on Theorem \ref{thm:rosec} and is postponed to the next subsection of the appendix.\par

Now we can prove Proposition \ref{prop:clsec} from Lemma \ref{lm:rgsec}.
\begin{proof}[of Proposition \ref{prop:clsec}]
	Note that, what the adversary gets in this game is already very similar to the ciphertexts in Protocol \ref{prtl:gbc} when the input is classical. So with an adversary for IND-CPA game for $\fGBC^C_\fCL$ with parameters $(N,L,q), \kappa$, which will choose $\ket{i}\bra{i}$ as input and use $\cD$ as the distinguisher, we can design an adversary for $rG$ with parameters $(N,3L,16L,q), \kappa$ in this way:
	\begin{enumerate}
		\item A key set $K=(k_b^l)$ is sampled. Let's reindex the keys with $[3L]$. Choose $Rev_i$ to be the set of indexes of keys revealed in $\fEn_K(i)$, where $i$ is the input string to the IND-CPA game of $\fGBC$.
		\item Choose $P_C$ and $M_C$ corresponding to circuit $C$. Explicitly, each element of $P_{C}$ corresponds to a row in garbled tables, and for each row in the form of $\fCL.\fEn_{k_1,k_2,k_3}(sk)$, it corresponds to an index $i$ such that $S_i$ contains the indexes of ${k_1,k_2,k_3}$, and $T_i$ is the indexes of $sk$, $m_i$ is empty. And for each row in the form of $\fCL.\fEn_{k}(m)$, $S_i$ contains the index of $k$, and $m$ is $m_i\in M_C$, $T_i=\emptyset$.	\end{enumerate}
	and define $\fShuffle$ as the operation that arranges these ciphertexts into the corresponding places in garbled tables and shuffles each table randomly. Explicitly, let's use $\fOracle_b^{rG}(P_C,M_C)$ as the output of the $rG$ game in the third step, taking average on all the randomness. Then we have
	\begin{align*}&\fTAB_{\fCL}^C(K)=\bE_M\fShuffle(\fOracle^{rG}_{b=1}(P_{C},M_{C}))\\
	&\fGBC_\fCL^C.\fEn_K(i)=K_{Rev_i}\otimes \fTAB_{\fCL}^C(K)\\
	&\fGBC_\fCL^C.\fEn_K(0^N)=K_{Rev_0}\otimes \fTAB_{\fCL}^C(K)\\\end{align*}
	This means if we have an adversary which can distinguish $\bE_K\fGBC.\fEn_K(i)$ and $\bE_K\fGBC.\fEn_K(0^N)$ by some distinguisher $\cD$, we can design an adversary  for $rG$ by first choosing the corresponding $P,Rev,M$ and get $K_{Rev}\otimes \fOracle^{rG}_{b=1}(P,M)$, apply $\fShuffle$ and then use $\cD$ to distinguish the two cases. Explicitly, by Lemma \ref{lm:rgsec}, when the input is $i$:
	\begin{align}
		     & |\Pr(\cD(\bE_K(K_{Rev_i}\otimes \fShuffle(\fOracle^{rG}_{b=0}(P_{C},M_{C}))=1)))\notag\\
		     & -\Pr(\cD(\bE_K(K_{Rev_i}\otimes \fShuffle(\fOracle^{rG}_{b=1}(P_{C},M_{C}))))=1)| \\
		=    & \fAdv^{rG}_{\fCL}(\sA^\prime_{(N,6L,16L,q)}, \kappa)                              \\
		\leq & \fpoly(q,N,L)2^{-0.5\kappa}\label{eq:11}
	\end{align}
	When the input is $0$:
	\begin{align}
		     & |\Pr(\cD(\bE_K(K_{Rev_0}\otimes \fShuffle(\fOracle^{rG}_{b=0}(P_{C},M_{C}))=1)))  \notag\\
		     & -\Pr(\cD(\bE_K(K_{Rev_0}\otimes \fShuffle(\fOracle^{rG}_{b=1}(P_{C},M_{C}))=1)))| \\
		=    & \fAdv^{rG}_{\fCL}(\sA^{\prime\prime}_{(N,6L,16L,q)}, \kappa)                  \\
		\leq & \fpoly(q,N,L)2^{-0.5\kappa}\label{eq:14}
	\end{align}
	And after shuffling the tables and taking average on all the possible keys we have
	\begin{align}
		  & |\bE_K\bE_M(K_{Rev_i}\otimes \fShuffle(\fOracle^{rG}_{b=0}(P_{C},M_{C})))\notag\\
		  &-\bE_K\bE_M(K_{Rev_0}\otimes \fShuffle(\fOracle^{rG}_{b=0}(P_{C},M_{C})))|_{tr}=0\label{eq:21}
	\end{align}
	This is because the keys and the $m$ for phase gates are all chosen randomly. Notice that we need to take average on the choices of $M_C$ here. \par
	And this implies
	\begin{align}
		     & \fAdv_{\fGBC_\fCL}^{IND-CPA}(\sA, \kappa)                                       \\
		=    & |\Pr(\cD(\bE_K\fGBC.\fEn_K(i))=1))-\Pr(\cD(\bE_K\fGBC.\fEn_K(0^{N}))=1)|        \\
		=    & |\Pr(\cD(\bE_K(K_{Rev_i}\otimes \bE_M\fShuffle(\fOracle^{rG}_{b=1}(P_{C},M_{C}))=1)))\notag\\
		     & -\Pr(\cD(\bE_K(K_{Rev_0}\otimes \bE_M\fShuffle(\fOracle^{rG}_{b=1}(P_{C},M_{C}))))=1)| \\
		     \leq&|\bE_K\bE_M(K_{Rev_i}\otimes \fShuffle(\fOracle^{rG}_{b=0}(P_{C},M_{C})))\notag\\
		     &-\bE_K\bE_M(K_{Rev_0}\otimes \fShuffle(\fOracle^{rG}_{b=0}(P_{C},M_{C})))|_{tr}\notag\\
		     &+\bE_M|\Pr(\cD(\bE_K(K_{Rev_0}\otimes \fShuffle(\fOracle^{rG}_{b=0}(P_{C},M_{C}))))=1)-\notag\\
		     &\quad\quad\Pr(\cD(\bE_K(K_{Rev_0}\otimes \fShuffle(\fOracle^{rG}_{b=1}(P_{C},M_{C}))))=1)|\notag\\
		     &+\bE_M|\Pr(\cD(\bE_K(K_{Rev_i}\otimes \fShuffle(\fOracle^{rG}_{b=0}(P_{C},M_{C}))))=1)-\notag\\
		     &\quad\quad\Pr(\cD(\bE_K(K_{Rev_i}\otimes \fShuffle(\fOracle^{rG}_{b=1}(P_{C},M_{C}))))=1)|\\
		\leq & \fpoly(N,L,q)2^{-0.5\kappa}
	\end{align}
	Where the last step is by (\ref{eq:11})(\ref{eq:14})(\ref{eq:21}).
\end{proof}

\subsection{Proof of Lemma \ref{lm:rgsec}}
\begin{proof}
	First, let's give an estimate for $\fAdv^{rG}_\fCL(\sA,\kappa)$. Notice that the construction of multi-key encryption $\fCL$ is the same as cascading encryption \\$\fEn_{k_1}(\fEn_{k_2}(\fEn_{k_3}(m)))$, where $\fEn_k(m)=(R,\cH(k||R)\oplus m)$. And we can view the inner $\fEn$ as part of KDM function $f\in \cF[q^\prime= 2]$. What's more, the key tags can be obtained by encrypting $0^*$ with the corresponding keys. So an adversary $\sA$ for $rG$ game with parameters $\kappa, (N,L,J,q)$ can be used to design an adversary $\sA^\prime$ for naSymKDM game for function family $\cF[q^\prime= 2]$ with parameters $\kappa, (L,J+L,q)$ as follows:
	\begin{enumerate}
		\item In the step 1 of naSymKDM game, $K=\{sk_i\}_{i=1}^{L}$ is sampled. The adversary simulates an $rG$ game, and samples $K^\prime=\{sk^\prime_i\}_{i=1}^{L}$.\\
		      The step 2(a) of naSymKDM game is done by the adversary. The adversary first runs the step 2 of $rG$ game. Suppose what the adversary picks are $P,R,M$. Denote $|Closure(Rev)|=L_1$, $L-|Closure(Rev)|=L_2$. The adversary in $rG$ game will regard $\{sk_i\}_{i=1}^{L_2}\cup \{sk_i^\prime\}_{i=1}^{L_1}$ as the keys sampled in the first step in $rG$ game by the challenger.
		\item Then the adversary in $rG$ goes to step 3. It will get a set of ciphertexts from the challenger in $rG$. Notice that what the adversary gets in $rG$ can be simulated with naSymKDM game as follows:
		\begin{enumerate}
		\item For $i\in [J]$, suppose $S_i=(a,b,c)$. If $a\notin Closure(Rev)$, query $\fKDMP.\fEn$ on  $f(K,m)=\fEn_{sk_b}(\fEn_{sk_c}(sk_{T_i}||m_i))$ under $sk_a$. If $k_1\in Closure(Rev)$ but $k_2\notin Closure(Rev)$, query on  $f=\fEn_{sk_c}(sk_{T_i}||m_i)$ under $sk_b$. If $k_1,k_2\in Closure(Rev)$ but $k_3\notin Closure(Rev)$, query on $sk_{T_i}||m$ under $sk_c$. If all the three indexes are in $Closure(Rev)$, skip this query.\\
		      Then for each key tag in the form of $(R_i, \cH(sk_i||R_i))$, query the encryption of $0$ under $sk_i$.
		\item In step 2(b) of naSymKDM game, $\sA^\prime$ gets a list of ciphertexts. For $i\in [|J|]$, suppose $S_i=(a,b,c)$. Suppose $c$ is the ciphertext corresponding to $S_i$ that the adversary gets in step 2(if there is). If $a\notin Closure(Rev)$, assign $c^\prime=c$. If $k_1\in Closure(Rev)$ but $k_2\notin Closure(Rev)$, assign $c^\prime=\fEn_{sk^\prime_a}(c)$. If $k_1,k_2\in Closure(Rev)$ but $k_3\notin Closure(Rev)$, assign $c^\prime=\fEn_{sk^\prime_a}(\fEn_{sk^\prime_b}(c))$. If all the three indexes are in $Closure(R)$, assign $c^\prime=\fEn_{sk^\prime_a}(\fEn_{sk^\prime_b}(\fEn_{sk_c}(sk_{T_i}||m_i)))$. Store this $c^\prime$ into a list.
		\end{enumerate}
		\item Use the distinguisher of $rG$ game on the list of $c^\prime$ that the adversary gets in the last step. Suppose the result is $b^\prime$. Output $b^\prime$ as the distinguishing output in naSymKDM game.
	\end{enumerate}
	So from the security of KDM protocol we know
	$$\fAdv^{rG}_\fCL(\sA_{(N,L,J,q)},\kappa)\leq \fAdv^{naSymKDM}_{\fKDMP,\cF[q^\prime=2]}(\sA_{(L,J+N,q)},\kappa)\leq \fpoly(q,L,J)2^{-0.5\kappa}$$
\end{proof}
\subsection{Proof of Lemma \ref{lem:mcluncp}}
\begin{proof}
	For any operation $\cD$, $i\neq j,\ket{\varphi_j}$, $$p=\bE_{K}\bE_R\tr((\fEt_K\ket{j})^\dagger\cD(\fEt_K(\ket{i}\bra{i})\otimes \varphi_i\otimes \fTAB(K,R))(\fEt_K\ket{j})$$ is the probability that the input is $\fEt_K\ket{i}$ and the output is $\fEt_K\ket{j}$. But since $\fEt_K\ket{i}$ is a classical state we can make one copy of it and in the end we will know both from $\cD$ with this probability.\par
	Let's use it to design an adversary $\sA$ for the rG game of $\fCL$:
	\begin{enumerate}
		\item The adversary picks $\ket{i}\otimes \ket{\varphi_j}$ as the input and $P_C,M_C$ defined as the proof of Proposition \ref{prop:clsec}. Then the adversary will get $\fEt_K\ket{i}\otimes \ket{\varphi_i}\otimes \fOracle^{rG}_{b}(P_{C},M_{C})$ from the challenger in the $rG$ game.
		\item Then the adversary applies the distinguisher $\cD^\prime$ defined as follows:
		      \begin{enumerate}
			      \item The adversary makes a copy of the outputs and applies $\fShuffle^C$ to get $\fTAB^C_{\fCL}(K)=\fShuffle^C(\fOracle^{rG}_{b}(P_{C},M_{C}))$. Then it gets
			            $$\fEt_K\ket{i}\otimes \fOracle^{rG}_{b}(P_{C},M_{C})\otimes\fEt_K\ket{i}\otimes \fTAB^C_{\fCL}(K)\otimes \ket{\varphi_i}$$
			            Then the adversary applies $\cD$ on the last three systems and measures in the computational basis.
			      \item Find a $w$ such that $i$ and $j$ differ in bit $w$. There are two keys $k_0^w$ and $k_1^w$ on this wire, from $\fEt_K\ket{i}$ the adversary knows one key on wire $w$, and from the last step the adversary gets some result which might be $\fEt_K\ket{j}$(with some probability), from which it can get another key on $w$. Then the adversary verifies whether this value is another key with the key verification information in $\fOracle^{rG}_{b}(P_{C},M_{C})$. If the verification passes, output 1. Otherwise, output 0.
		      \end{enumerate}
	\end{enumerate}
	By the $rG$ game security we have
	\begin{align}
		     & \fAdv^{rG}_{\fGBC_\fCL}(\sA_{(N,6L,16L,q+1)},\kappa)                                                                                                                                                \\
		=    & |\Pr(\cD^\prime(\fEt_K\ket{i}\otimes \ket{\varphi_i}\otimes \fOracle^{rG}_{b=1}(P_{C},M_{C}))=1)\notag\\
		&-\Pr(\cD^\prime(\fEt_K\ket{i}\otimes \ket{\varphi_i}\otimes \fOracle^{rG}_{b=0}(P_{C},M_{C}))=1)|\label{eq:28} \\
		\leq & \fpoly(N,L,q)2^{-0.5\kappa}\label{eq:29}
	\end{align}
	And $\Pr(\cD^\prime(\fEt_K\ket{i}\otimes \ket{\varphi_i}\otimes \fOracle^{rG}_{b=0}(P_{C},M_{C}))=1)$ is the probability of ``compute and verify successfully'', it can be bounded by the optimality of Grover search. Let's analyze with more details.\par
Suppose in $\cD^\prime$ we choose to compute and verify the key $sk^w_x$ on input wire $w$. That will imply the index of $sk^w_{1-x}$ is in $Rev$, and the index of $sk^w_x$ is not in $Closure(Rev)$. Recall that in the construction of $\fOracle^{rG}_{b=0}(P_{C},M_{C})$, if the index of some key is not in $Closure(Rev)$, and when $b=0$, the $\fOracle$ will replace the ciphertext with encryption of $0^*$. So the only information related to $sk_x^w$ is contained in: (1)the oracle output in the form of $\fCL.\fEn_{S_i}(0^*), sk_b^w\in S_i$. (2) the key verification tag in the form of $\fCL.\fEn_{sk_b^w}(0^*)$.\par
Write $\cD^\prime=\fVer\circ \mathsf{Compute}$, where $\mathsf{Compute}$ is the operation in the 2(a) of $\cD$. Since the key tag can be very long(the output length of $\cH$ is arbitrary), the probability that $sk^w_x$ is not the only value that can pass the verification is exponentially small. So conditioned on it is the only entry that can pass the verification, we have
\begin{align}
	&\Pr(\cD^\prime(\fEt_K\ket{i}\otimes \ket{\varphi_i}\otimes \fOracle^{rG}_{b=0}(P_{C},M_{C}))=1)\\
	=&\Pr(\mathsf{Compute}(\fEt_K\ket{i}\otimes \ket{\varphi_i}\otimes \fOracle^{rG}_{b=0}(P_{C},M_{C}))=sk_b^w)
\end{align}

which is the probability of computing the key $sk^w_{x}$ in $q$ queries. From the result in \cite{GroverOptimal} we know
	$$\Pr(\cD^\prime(\fShuffle(\fOracle^{rG}_{b=0}(P_{C},M_{C})))=1)=\Theta(q^22^{-\kappa})$$
	Combining it with (\ref{eq:29}) we can get a bound on the first term in (\ref{eq:28}):
	$$\Pr(\cD^\prime(\fEt_K\ket{i}\otimes\ket{\varphi_i}\otimes \fShuffle(\fOracle^{rG}_{b=1}(P_{C},M_{C})))=1)\leq \fpoly(N,L,q)2^{-0.5\kappa}$$
	Return to the original problem.\\
	 $\bE_{K}\bE_R\tr((\fEt_K\ket{j})^\dagger\cD(\fEt_K(\ket{i}\bra{i})\otimes \varphi_i\otimes \fTAB(K,R))(\fEt_K\ket{j}))$ is the probability that the input is $\fEt_K(\ket{i})$ and the output is $\fEt_K(\ket{j})$, and when we construct $\cD^\prime$ from $\cD$, conditioned on this event, the result will pass the verification in 2(b) step with probability 1. That implies
	\begin{align}
		     & \bE_{K}\bE_R\tr((\fEt_K\ket{j})^\dagger\cD(\fEt_K(\ket{i}\bra{i})\otimes \varphi_i\otimes \fTAB(K,R))(\fEt_K\ket{j})) \\
		\leq & \Pr(\cD^\prime(\fEt_K\ket{i}\otimes\ket{\varphi_i}\otimes \fShuffle(\fOracle^{rG}_{b=1}(P_{C},M_{C})))=1)             \\
		\leq & \fpoly(N,L,q)2^{-0.5\kappa}
	\end{align}
\end{proof}
\subsection{Proof of Theorem \ref{QKDMsec}}
\begin{proof}
	Explicitly, define $\fAdv^{naSymQKDM}_{\cF[q^\prime]}(\sA_{(L,T,q)},\kappa)$ as the advantage in the game where the adversary is allowed to sample $L$ keys, make $T$ queries to the encryption oracle, and the distinguisher is allowed to query the random oracle $q$ times, and the function family is $\cF[q^\prime]$.\par
	The proof is similar to the proof of Theorem \ref{thm:rosec}. Let $c_{K,R}$ denote the ciphertexts that the adversary gets in the naSymQKDM game when $b=1$, and $e_{K,R}$ denote the ciphertexts when $b=0$. These are the same as the proof of Theorem \ref{thm:rosec}. One difference here is that $c$ and $e$ are not necessarily classical any more, so we need to consider $c$ and $e$ as the combination of the states of the ciphertexts returned by the challenger together with the reference system kept by the adversary. We can bound the distinguishing advantage in naSymQKDM game following the same route as the proof of Theorem \ref{thm:rosec}, and we only need to give a new estimate for 
	\begin{equation}|\bE_{R_{K||}}(\ket{c_{K,R}}\bra{c_{K,R}}-\ket{e_{K,R}}\bra{e_{K,R}})|_{\tr}\label{eq:90r}\end{equation}
	Here $K$ is a fixed key set, and $R_{K||}$ stands for the randomness in the random paddings, output of $\cH$ on inputs with prefix in $K$, and the one time pad keys.\par
	The ciphertexts $c_{K,R}$ and $e_{K,R}$ should be viewed as the ciphertexts for all the queries. Since we only consider non-adaptive settings, the computation of the ciphertexts can be written as
	$$\bE_{R_{K||}}(\ket{c_{K,R}}\bra{c_{K,R}})=\bE_{R_{K||}}(\fI\otimes\fQKDM.\fEn^{\otimes T})(\rho)$$
	$$\bE_{R_{K||}}(\ket{e_{K,R}}\bra{e_{K,R}})=\bE_{R_{K||}}(\fI\otimes\fQKDM.\fEn^{\otimes T})(\rho_R\otimes 0^{|\cM|})$$
	Where
	$$\rho= (\fI\otimes O)(\sigma_{msg}\otimes K\otimes \ket{0}\bra{0})(\fI\otimes O^\dagger),\rho\in \bD(\cR\otimes \cM), \fI\text{ operates on } \cR$$
	where $O$ is a unitary operation which stands for the circuit of $f\in \cF$, $\sigma_{msg}$ is the density operator of the input to $f$ together with its reference system, and the system $\cM$ of $\rho$ contains the inputs to $\fEn$ in all the queries. Pay attention that this $O$ is different from those defined in the proof of Theorem \ref{thm:rosec}: this $O$ is the operation of $f$ while the $O$ in the previous proof represents the distinguisher. And we slightly abuse the notation: the encryption queries in $\fQKDM.\fEn^{\otimes T}$ actually don't use the same keys.\par
 Once we get a bound for (\ref{eq:90r}), we can substitute it into (\ref{eq:94})-(\ref{eq:98}) and then we will get a new bound for (\ref{eq:102}), and similarly substitute the new bound for (\ref{eq:82})(\ref{eq:102})(\ref{eq:93}) into (\ref{eq:74}) will give us the naSymQKDM advantage we need.\par
	$O$ can be written as $O=U_q\cH_qU_{q-1}\cdots \cH_1U_0$. Suppose $Q$ is the set of queries to the random oracle when applying $\fKDMP.\fEn$, and since we have already fixed the keys, the randomness comes from the random paddings. Suppose $R_Q$ is the randomness of $\cH$ on the queries in $Q$. Let $R_{ab}$ denote the randomness from the choices of one time pad keys. Then we have
	\begin{align}
		&|\bE_{R_{K||}}(\ket{c_{K,R}}\bra{c_{K,R}})-\bE_{R_{K||}}(\ket{e_{K,R}}\bra{e_{K,R}})|_{\tr}\\
		=&|\bE_Q\bE_{R_Q}\bE_{R_{ab}}(\ket{c_{K,R}}\bra{c_{K,R}})-\bE_Q\bE_{R_Q}\bE_{R_{ab}}(\ket{e_{K,R}}\bra{e_{K,R}})|_{\tr}\label{eq:58}
	\end{align}
	First consider a fixed $Q$. Let $\cH^{\prime\prime}$ be a quantum random oracle that is independently random from $\cH$ on entries in $Q$. Use $R^\prime$ to denote the randomness in $\cH^{\prime\prime}$. And let $$O_i=U_q\cH_qU_{q-1}\cdots \cH_{i+1}U_{i}\cH^{\prime\prime}_iU_{i-1}\cdots\cH^{\prime\prime}U_0$$
	Define
	$$\psi^i=\bE_{R^\prime}(\fI\otimes \fQKDM.\fEn^{\otimes T})((\fI\otimes O_i)(\sigma_{msg}\otimes K\otimes \ket{0}\bra{0})(\fI\otimes O_i^\dagger))$$
	Note that this is not the same $\psi$ as defined in (\ref{eq:80}). Then we have
	\begin{align}
		&|\bE_{R_{K||}}(\ket{c_{K,R}}\bra{c_{K,R}})-\bE_{R_{K||}}(\ket{e_{K,R}}\bra{e_{K,R}})|_{\tr}\\
		\leq &|\bE_Q\bE_{R_Q}\bE_{R_{ab}}(\psi^0)-\bE_Q\bE_{R_Q}\bE_{R_{ab}}(\psi^q)|_{\tr}\\
		&+\bE_Q|\bE_{R_Q}\bE_{R_{ab}}(\psi^q)-\bE_{R_Q}\bE_{R^\prime}\bE_{R_{ab}}(\ket{e_{K,R}}\bra{e_{K,R}})|_{\tr}\label{eq:64}
	\end{align}
And by the same technique as (\ref{eq:90})-(\ref{eq:94}) we have
	\begin{align}
		     & |\psi^0-\psi^q|_{\tr}                                                                                                                                                    \\
		\leq & \sum_i|\psi^i-\psi^{i-1}|_{\tr}                                                                                                                                          \\
		\leq & 2\sqrt{q^\prime\sum_i\tr(P_{Q}(\bE_{R^\prime}(V_i(\sigma_{msg}\otimes K\otimes \ket{0}\bra{0})V_i^\dagger)))}\quad(V_i:=U_i\cH^{\prime\prime}_{i-1}U_{i-1}\cdots\cH^{\prime\prime}_1U_0)
	\end{align}
	Which means
	\begin{align}
		     & |\bE_Q\bE_{R_Q}\bE_{R_{ab}}(\psi^0)-\bE_Q\bE_{R_Q}\bE_{R_{ab}}(\psi^q)|_{\tr} \\
		\leq & 2\bE_Q\bE_{R_Q}\sqrt{q^\prime\sum_i\tr(P_{Q}(\bE_{R^\prime}(V_i(\sigma_{msg}\otimes K\otimes \ket{0}\bra{0})V_i^\dagger)))}        \\
		\leq & 2\sqrt{q^\prime\sum_i\bE_Q\bE_{R_Q}\tr(P_{Q}(\bE_{R^\prime}(V_i(\sigma_{msg}\otimes K\otimes \ket{0}\bra{0})V_i^\dagger)))}        \\
		\leq & 2\sqrt{(q^\prime)^2T2^{-\kappa}}\label{eq:71}
	\end{align}
	And for the second term in (\ref{eq:64}), since the randomness of $R_Q$ is not correlated to the plaintexts anymore, taking average on $R_Q$ will hide the one time pad keys perfectly. Then taking average on the one time pad keys will hide $\rho$ perfectly:
	\begin{equation}
		\forall Q, |\bE_{R_Q}\bE_{R_{ab}}(\psi^q)-\bE_{R_Q}\bE_{R^\prime}\bE_{R_{ab}}(\ket{e_{K,R}}\bra{e_{K,R}})|_{\tr}=0\label{eq:65}
	\end{equation}

	Combining (\ref{eq:71})(\ref{eq:64})(\ref{eq:65}), we have
	$$|\bE_R(\ket{c_{K,R}}\bra{c_{K,R}})-\bE_R(\ket{e_{K,R}}\bra{e_{K,R}})|_{\tr}\leq\fpoly(L,T,q,q^\prime)2^{-0.5\kappa}$$
	Substitute it into (\ref{eq:74})(\ref{eq:94}), we get
	$$\fAdv^{naSymQKDM}_{\cF[q^\prime]}(\sA_{(L,T,q)},\kappa)\leq\fpoly(L,T,q,q^\prime)2^{-0.25\kappa}$$
\end{proof}

\end{document}